\def\defas{\ensuremath{\mathrel{:=}}}
\def\Set#1#2{\ensuremath{
\left\{#1\,\middle|\,#2\right\}
}}
\def\e{\mathrm{e}}
\def\norm#1{\|  #1 \| }
\def\e#1{\mathrm{e}^{#1} }
\def\abs#1{\left|  #1 \right| }
\def\abs#1{|  #1 | }
\DeclareMathDelimiter{\orbrack}{\mathopen}{operators}{"5D}{largesymbols}{"03}
\DeclareMathDelimiter{\clbrack}{\mathclose}{operators}{"5B}{largesymbols}{"02}
\def\intcc#1{\ensuremath{[#1]}}
\def\intoo#1{\ensuremath{\orbrack#1\clbrack}}
\def\intoc#1{\ensuremath{\orbrack#1]}}
\def\intco#1{\ensuremath{[#1\clbrack}}
\def\Hintcc#1{\ensuremath{\llbracket#1\rrbracket}}
\newcommand{\SO}{\ensuremath{\mathsf{SO(3)}}}
\newcommand{\so}{\ensuremath{\mathfrak{so}(3)}}
\DeclareMathOperator{\tr}{tr}
\DeclareMathOperator{\diag}{diag}
\def\defas{\ensuremath{\mathrel{:=}}}
\newtheorem{theorem}{Theorem}
\newtheorem{lemma}[theorem]{Lemma}
\newtheorem{remark}{Remark}
\newtheorem{corollary}{Corollary}
\newtheorem{proposition}{Proposition}
\title{ \bf  Reach-Avoid Control Synthesis for a Quadrotor UAV\\
with Formal Safety Guarantees}
\author{Mohamed Serry, Haocheng Chang, and Jun Liu \thanks{Mohamed Serry, Haocheng Chang, and Jun Liu are with the Department of Applied Mathematics, University of Waterloo, Waterloo, Ontario, Canada  (e-mail: \{mserry,~h48chang,~j.liu\}@uwaterloo.ca). The first two authors contributed equally to this manuscript. This work was funded in part by the Natural Sciences and Engineering Research Council of Canada and the Canada Research Chairs program.}}
\date{}
\begin{document}
\maketitle

\begin{abstract}
Reach-avoid specifications are one of the most common tasks in autonomous aerial vehicle (UAV) applications. Despite the intensive research and development associated with control of aerial vehicles, generating feasible trajectories though complex environments and tracking them with formal safety guarantees remain challenging. In this paper, we propose a control framework for a quadrotor UAV that enables accomplishing reach-avoid tasks with formal safety guarantees. That is, our method yields a desired (position) trajectory to be tracked and an initial set, with respect to the quadrotor's dynamics, such that for any point in that initial set, the resulting quadrotor position follows the desired  trajectory and reaches the target safely while avoiding obstacles, where velocity and thrust safety bounds are satisfied. In this proposed framework, we integrate geometric control theory for tracking and polynomial trajectory generation using B\'ezier curves, where tracking errors are accounted for in the trajectory synthesis process. To estimate the tracking errors, we  revisit the stability analysis of the closed-loop quadrotor system, when geometric control is implemented. We show that the tracking error dynamics exhibit local exponential stability  when  geometric control is implemented with any positive control gains, and we derive tight uniform bounds of the tracking error. We also introduce sufficient conditions to be imposed on the desired trajectory utilizing the derived uniform bounds to  ensure the well-definedness of the closed-loop system.  For the trajectory synthesis, we present an efficient algorithm that enables constructing a safe tube by means of sampling-based planning and safe hyper-rectangular set computations. Then, we compute the trajectory, given as a  piecewise continuous  B\'ezier curve,  through  the safe tube, where a heuristic efficient approach that utilizes iterative linear programming is employed. We  present extensive numerical simulations with a cluttered environment  to illustrate the effectiveness of the proposed framework in reach-avoid planning scenarios.  
\end{abstract}

\section{INTRODUCTION}
Quadrotor UAVs  have been a focal  point in  control theory due to their utilities in a wide variety of applications that involve  surveying and delivering \cite{Raafat2022Application}. One of the main tasks concerning quadrotor applications is reaching specific target(s) while avoiding obstacles, which necessitates adopting and developing safe path planning and control algorithms for quadrotor UAVs.  Standard approaches in the literature rely on constructing  piecewise-continuous polynomial trajectories and then utilizing the differential flatness of  quadrotor UAVs to derive controllers that can track the generated trajectories \cite{mellinger2011minimum}. The generated trajectories are typically constructed by connecting waypoints using  polynomial interpolants. The waypoints, which lay in the three-dimensional operating domain of the quadrotors, can be generated by sampling-based (e.g., RRT \cite{richter2016polynomial}), or node-based (e.g.,  A* \cite{farid2022modified}) methods. Besides reach-avoid problems, waypoints  can also be obtained for more complex specifications such as  temporal logic specifications \cite{pant2018fly}. For polynomial interpolation, standard polynomials \cite{mellinger2011minimum} or  Bernstein polynomials \cite{gao2018online} can be employed and desired trajectories are often obtained by  solving  nonlinear optimization problems.  Various control approaches can then be employed to track the generated trajectories, including linearization-based methods  \cite{hoffmann2008quadrotor,suicmez2014optimal},  feedback-linearization-based  methods \cite{lee2011trajectory,chang2017global}, back-stepping and sliding-mode control  \cite{bouabdallah2005backstepping},    model-predictive control \cite{liu2015explicit, andrien2024model}, and nonlinear feedback control methods on the special Euclidean group \cite{lefeber2017almost,lee2010geometric}. A review of some of the  existing quadrotor control approaches can be found in \cite{amin2016review}. Despite its success, the aforementioned  planning-tracking paradigm lacks formal guarantees in the following sense: if the quadrotor is slightly deviating from the desired planned trajectory initially, it is not guaranteed that the employed controller will ensure tracking the desired path without colliding  with obstacles.

Guaranteeing safety during reach-avoid control scenarios is generally challenging. One approach that provides formal guarantees is abstraction-based control synthesis \cite{reissig2016feedback,zamani2011symbolic,liu2014abstraction,li2022formal}, which is  inapplicable to the high-dimensional quadrotor dynamics as abstraction-based computational requirements increase exponentially with the system dimension.  Another interesting approach that can in theory provides safety guarantees relies on computing control contraction metrics \cite{manchester2017control}, where robust invariant sets are estimated with incorporated feedback control, and outer bounds of the invariant sets are taken into account in the  synthesis of  nominal trajectories to be tracked  \cite{zhao2022tube, singh2023robust, sasfi2023robust}. Such estimates of the invariant sets are computationally demanding, and they rely on polynomial approximations of the quadrotor dynamics; hence, the formal guarantees are compromised. Besides the two aforementioned approaches, formal guarantees may be imposed by incorporating control barrier functions \cite{ames2019control}, which are obtained by solving optimization problems whose feasibility is not guaranteed in general.  Up to our knowledge, incorporating control barrier functions for quadrotor systems has been successfully attained only for simple position and velocity bounds \cite{khan2020barrier}. In summary, formally correct control approaches in the literature are limited when applied to quadrotor systems, which motivates the work presented herein.

To fulfill safety guarantees when synthesizing  a quadrotor reach-avoid controller, we employ the standard planning-tracking paradigm, where we explicitly account for tracking errors when planning a desired trajectory. In principle,  any tracking control method with  theoretical stability guarantees  can be utilized to  derive uniform upper bounds of the tracking errors, and such bounds can then be used in designing  trajectories with formal safety guarantees. Unfortunately, there exist several obstacles that prevent straightforward implementation of the tracking methods in the literature. For example, geometric tracking control \cite{lee2010geometric,lee2015global,fernando2011robust}, a variant of nonlinear feedback control on the special Euclidean group,  was proposed with local exponential stability guarantees that are demonstrated using standard Lyapunov-type analysis. The analyses in \cite{lee2010geometric,lee2015global,fernando2011robust} rely on conservative estimates which are convenient for proving stability but are impractical in planning applications.  Besides,  the theoretical analyses in some of the aforementioned works suffer from technical issues associated with stability guarantees (see Remark \ref{sec:Issue}), and that necessitates revisiting the stability analysis of geometric control. Moreover, the closed-loop quadrotor dynamics, when implementing geometric tracking control, possess a singularity whose analysis was overlooked.  In \cite{lefeber2017almost},  a nonlinear feedback tracking  method was proposed, exhibiting  impressive almost global asymptotic stability guarantees and addressing  the potential singularity in the closed-loop quadrotor system. Unfortunately,   the nature of the  stability analysis in \cite{lefeber2017almost}, which is mainly concerned with demonstrating asymptotic stability, makes it very challenging to derive closed-form error bounds that can be used in planning.  Recently, an extension of the approach in \cite{lefeber2017almost} has been  proposed in \cite{andrien2024model} utilizing model-predictive control to stabilize the translational tracking error, with  similar almost global stability guarantees and, unfortunately, the same issue we highlighted for \cite{lefeber2017almost}. In this work, we aim to adapt the framework of geometric tracking control and revisit its stability analysis to derive closed-form tracking error bounds that can be used in trajectory planning, where we address the issues that we demonstrated above. Interestingly, revisiting the stability of geometric control reveals that the tracking error dynamics, when the geometric controller is implemented with \textit{any} positive control gains, possess local exponential stability properties.

Our contribution in this paper is as follows.
\begin{enumerate}
\item  We present a reach-avoid control synthesis framework for a quadrotor UAV with formal safety guarantees that adapts the standard planning-tracking paradigm, where nonlinear  geometric control is employed for tracking and polynomial trajectory generation is based on B\'ezier curves.   The proposed approach yields a desired (position) trajectory to be tracked and an initial set, where for any point in the initial set, the resulting quadrotor position follows the desired  trajectory and reaches the target safely while avoiding obstacles, where velocity and thrust bounds are respected. 
\item We revisit the stability analysis of the closed-loop quadrotor dynamics when geometric control is integrated.  In our analysis,  we  show that under very mild assumptions (any positive choice for the control gains), the tracking error dynamics of the closed-loop system  exhibit local exponential stability. Our  analysis relies on using nonlinear differential inequalities to prove stability and tight estimates that result in accurate uniform bounds that can be used in the planning procedure. The set of initial values of the quadrotor system that guarantees safety is characterized by a set of nonlinear inequalities. We also introduce sufficient conditions to be imposed on the desired position trajectory for the well-definedness of the closed-loop quadrotor dynamics.
\item We  present an efficient  sampling-based planning algorithm that incorporates safe hyper-rectangular set-based computations to generates a safe tube that can be used in  polynomial trajectory generation. The construction of the safe tube takes into account the uniform bounds of the tracking errors. 
\item 
 We propose a  heuristic efficient approach that computes a safe polynomial trajectory, given as a piecewise-continuous B\'ezier curve,  which  lies within the  generated safe tube, where uniform error bounds are taken into consideration in the synthesis procedure to ensure safety.
\end{enumerate}
The organization of this paper is as follows: the necessary mathematical preliminaries are introduced in Section \ref{sec:Preliminaries}; the quadrotor model equations are presented in Section \ref{sec:Model}; The setup for the reach-avoid control synthesis problem is provided in Section \ref{sec:ProblemFormulation}; the geometric tracking control and the resulting error system are introduced in Section \ref{sec:ErrorDynamics}; stability analysis of the error system is thoroughly investigated in Section \ref{sec:StabilityAnalysis}; the trajectory generation process is presented in detail in Section \ref{sec:TrajectoryGeneration}; the performance of our approach is displayed through numerical simulations in Section \ref{sec:Simulations}; and the study is concluded in Section \ref{sec:Conclusion}. Proofs of some of the technical results presented in this work are included in the Appendix.

\section{Preliminaries}
\label{sec:Preliminaries}
Let $\mathbb{R}$, $\mathbb{R}_+$,  $\mathbb{Z}$, and $\mathbb{Z}_{+}$ denote
the sets of real numbers, non-negative real numbers, integers, and
non-negative integers, respectively, and
$\mathbb{N} = \mathbb{Z}_{+} \setminus \{ 0 \}$.
Let $\intcc{a,b}$, $\intoo{a,b}$,
$\intco{a,b}$, and $\intoc{a,b}$
denote the closed, open and half-open
intervals, respectively, with endpoints $a$ and $b$, and
 $\intcc{a;b}$, $\intoo{a;b}$,
$\intco{a;b}$, and $\intoc{a;b}$ stand for their discrete counterparts,
e.g.,~$\intcc{a;b} = \intcc{a,b} \cap \mathbb{Z}$, and
$\intco{1;4} = \{ 1,2,3 \}$. 

In $\mathbb{R}^{n}
$, the relations $<$, $\leq$, $\geq$, and
$>$ are defined component-wise, e.g., $a < b$, where $a,b\in \mathbb{R}^{n}$, iff $a_i < b_i$ for
all $i\in  \intcc{1;n}$. For $x\in \mathbb{R}^{n}$, $\abs{x}\in \mathbb{R}^{n}$ is defined as $(|x|)_{i}\defas |x_{i}|,~i\in \intcc{1;n}$. The $n$-dimensional vectors with zero and unit entries are denoted by $0_{n}$ and $1_{n}$, respectively. Given two vectors $x,y\in \mathbb{R}^{n}$, $x \cdot y$ denotes the standard inner product of $x$ and $y$ ($x\cdot y=\sum_{i=1}^{n}x_{i}y_{i}$). For 3-dimensional vectors $x$ and $y$,  $x \times y$ denotes the cross product of $x$ and $y$, i.e., 
$$
x\times y\defas\begin{pmatrix}
    x_{2}y_{3}-x_{3}y_{2}\\
    x_{3}y_{1}-x_{1}y_{3}\\
    x_{1}y_{2}-x_{2}y_{1}
\end{pmatrix}.
$$
The space of $n$-dimensional real vectors is equipped with the Euclidean norm $\norm{\cdot}$  ($\norm{x}=\sqrt{x\cdot x},~x\in \mathbb{R}^{n}$). In addition, we make use of the maximal norm $\norm{\cdot}_{\infty}$ defined as $\norm{x}_{\infty}=\max_{i\in \intcc{1;n}}|x_{i}|$, $x\in \mathbb{R}^{n}$.

The $n$-dimensional closed unit balls with respect to $\norm{\cdot}$ and $\norm{\cdot}_{\infty}$ are denoted by $\mathbb{B}_{n}$ and $\mathbb{B}_{n}^{\infty}$, respectively (note that $\mathbb{B}_{n}\subseteq \mathbb{B}_{n}^{\infty}$) . Given $M\subseteq \mathbb{R}^{n}$, $\mathrm{int}(M)$ denotes the interior of $M$, i.e.,  $\mathrm{int}(M)\defas \Set{m\in M}{m+r\mathbb{B}_{n}\subseteq M~\text{for some}~ r>0}$. The convex hull of $N$ points in $\mathbb{R}^{n}$, $x_{1},~x_{2}~,\ldots,~x_{N}$, is denoted by $\mathrm{conv}(x_{1},x_{2},\ldots,x_{N})$. Given $M, N \subseteq \mathbb{R}^{n}$, $M\setminus N$ (set difference of $M$ and $N$) denotes the set  $\Set{x\in M}{x\not\in N}$, $M+N$ (Minkowski sum of $M$ and $N$) denotes the set 
$\Set{ y + z }{ y \in M, z \in N }$, and $M-N$ (Minkowski or Pontryagin difference of $M$ and $N$ (see, e.g., \cite{kolmanovsky1998theory})) denotes the set $\Set{z\in \mathbb{R}^{n}}{z+N\subseteq M}$. Given a finite set $M$, $\mathrm{card}(M)$ denotes the cardinality (number of elements) of $M$. Given $f\colon X\rightarrow Y$ and $C\subseteq X$,~$f(C)\defas \{f(c)|c\in C\}$.

For $a, b \in (\mathbb{R}\cup\{-\infty,\infty\})^ n$ , $a \leq b$,
the  hyper-rectangle $\Hintcc{a,b}$ denotes the set $\Set{x\in \mathbb{R}^{n}}{a\leq x\leq b}$, where, assuming $a$ and $b$ are finite,  $\mathrm{center}(\Hintcc{a,b})\defas(a+b)/2$ and $\mathrm{radius}(\Hintcc{a,b})\defas(b-a)/2$. Note that $\mathbb{B}_{n}^{\infty}=\Hintcc{-1_{n},1_{n}}$.

The identity map (matrix) on $\mathbb{R}^{n}$ is denoted by $\mathrm{I}_{n}$ and the $n\times m$ zero matrix is denoted by $0_{n\times m}$. The transpose of an $n\times m$ matrix $A$ is denoted by $A^{\intercal}$.  For a matrix-valued function $A\colon I\subseteq \mathbb{R}\rightarrow \mathbb{R}^{n\times m}$, we define $A^{\intercal}(t)\defas (A(t))^{\intercal},~t\in I$. The trace and the determinant of an $n\times n$ matrix $A$ are denoted by $\mathrm{tr}(A)$ and $\det(A)$, respectively. Let $d=[d_{1} \cdots {d}_{n}]^{\intercal}\in \mathbb{R}^{n}$, then the diagonal matrix  with diagonal entries $d_{1},\cdots,d_{n}$, is denoted by   $\mathrm{diag}({d})$. 

Let $\mathcal{S}^{n}$ denote the space of $n\times n$ real symmetric matrices, i.e., $\mathcal{S}^{n}\defas \Set{A\in \mathbb{R}^{n\times n}}{A={A}^{\intercal}}$. Note that the eigenvalues of a real symmetric matrix are real. Given $A\in \mathcal{S}^{n}$, $\underline{\lambda}(A)$ and $\overline{\lambda}(A)$ denote the minimum and maximum eigenvalues of $A$, respectively. For $n\times m$ real matrices, $\norm{\cdot}$ is the matrix norm induced by the Euclidean norm ($\norm{A}=\sqrt{\overline{\lambda}(A^{\intercal}A)},~A\in \mathbb{R}^{n\times  m}$).

\subsection{Positive definite matrices}
 Let $\mathcal{S}_{++}^{n}$ denote the set of $n\times n$ real symmetric positive definite matrices $\Set{A\in \mathcal{S}^{n}}{\underline{\lambda}(A)>0}$ (see, e.g., \cite[Chapter~8]{Abadir2005matrix} and \cite[Chapter~6]{horn2012matrix}). Given $A\in \mathcal{S}_{++}^{n}$, $A^{\frac{1}{2}}$ denotes the unique real symmetric positive definite matrix $K$ satisfying $A=K^2$ \cite[p.~220]{Abadir2005matrix}, and we define $A^{-\frac{1}{2}}\defas (A^{\frac{1}{2}})^{-1}$ (this is also equal to $(A^{-1})^{\frac{1}{2}}$  \cite[p.~221]{Abadir2005matrix}). Note that for $A\in \mathcal{S}^{n}_{++}$, $x^{\intercal}Ax=\norm{A^{\frac{1}{2}}x}^{2}$ for all $x\in \mathbb{R}^{n}$. The following lemma states some useful estimates associated with positive definite matrices.
\begin{lemma}[See the proof in the Appendix]\label{Lem:EstimatesPSD}
    Given $M,W\in \mathcal{S}^{n}_{++}$ and $x,\in \mathbb{R}^{n}$, and $A\in \mathbb{R}^{m\times n}$, we have
\begin{enumerate}
\item [(a)] $\underline{\lambda}(M)\norm{x}^{2}\leq x^{\intercal}Mx\leq \overline{\lambda}(M)\norm{x}^{2}$,
 
\item [(b)]$\underline{\lambda}(M^{-\frac{1}{2}}W M^{-\frac{1}{2}})\norm{M^{\frac{1}{2}}x}^{2}\leq \norm{W^{\frac{1}{2}}x}^{2}$,

\item [(c)]$\norm{Ax}\leq \norm{AM^{-\frac{1}{2}}}\norm{M^{\frac{1}{2}}x}$.
\end{enumerate}
\end{lemma}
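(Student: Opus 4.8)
The plan is to handle the three parts in order, each reducing to a standard spectral fact about symmetric positive definite matrices together with the identity $x^{\intercal}Ax = \norm{A^{\frac{1}{2}}x}^2$ already recorded in the text. For part (a), I would invoke the Rayleigh--Ritz characterization of the extreme eigenvalues: since $M \in \mathcal{S}_{++}^n$ is symmetric, the spectral theorem gives an orthonormal eigenbasis, and expanding $x$ in that basis yields $\underline{\lambda}(M)\norm{x}^2 \le x^{\intercal}Mx \le \overline{\lambda}(M)\norm{x}^2$ directly. This is the routine base case.

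For part (b), the idea is to change variables to expose a congruence. Write $W = M^{\frac{1}{2}}\bigl(M^{-\frac{1}{2}}WM^{-\frac{1}{2}}\bigr)M^{\frac{1}{2}}$, so that $\norm{W^{\frac{1}{2}}x}^2 = x^{\intercal}Wx = (M^{\frac{1}{2}}x)^{\intercal}\bigl(M^{-\frac{1}{2}}WM^{-\frac{1}{2}}\bigr)(M^{\frac{1}{2}}x)$. Here I need that $M^{-\frac{1}{2}}WM^{-\frac{1}{2}}$ is itself symmetric positive definite: symmetry is clear since $M^{-\frac{1}{2}}$ is symmetric and $W$ is symmetric, and positive definiteness follows because it is a congruence of $W \in \mathcal{S}_{++}^n$ by the invertible matrix $M^{-\frac{1}{2}}$ (Sylvester's law of inertia, or simply $y^{\intercal}M^{-\frac{1}{2}}WM^{-\frac{1}{2}}y = (M^{-\frac{1}{2}}y)^{\intercal}W(M^{-\frac{1}{2}}y) > 0$ for $y \ne 0$). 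Then applying part (a) to the matrix $M^{-\frac{1}{2}}WM^{-\frac{1}{2}}$ and the vector $M^{\frac{1}{2}}x$ gives the claimed lower bound $\underline{\lambda}(M^{-\frac{1}{2}}WM^{-\frac{1}{2}})\norm{M^{\frac{1}{2}}x}^2 \le \norm{W^{\frac{1}{2}}x}^2$.

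For part (c), I would again insert the identity $\mathrm{I}_n = M^{-\frac{1}{2}}M^{\frac{1}{2}}$ to write $Ax = (AM^{-\frac{1}{2}})(M^{\frac{1}{2}}x)$, and then apply submultiplicativity of the induced Euclidean norm: $\norm{Ax} = \norm{(AM^{-\frac{1}{2}})(M^{\frac{1}{2}}x)} \le \norm{AM^{-\frac{1}{2}}}\,\norm{M^{\frac{1}{2}}x}$. I expect essentially no obstacle here; the only points requiring care across the whole lemma are the well-definedness facts about fractional powers (that $M^{-\frac{1}{2}}$ is symmetric and that $(M^{\frac{1}{2}})^{-1} = (M^{-1})^{\frac{1}{2}}$), which are already cited from \cite{Abadir2005matrix} in the preceding paragraph, and the congruence argument in part (b), which is the one nontrivial step. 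So the main (mild) obstacle is simply ensuring that $M^{-\frac{1}{2}}WM^{-\frac{1}{2}} \in \mathcal{S}_{++}^n$ so that part (a) is applicable to it.
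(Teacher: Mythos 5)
Your proposal is correct and follows essentially the same route as the paper's proof: part (a) by orthogonal diagonalization (Rayleigh--Ritz), part (b) by the congruence $x^{\intercal}Wx=(M^{\frac{1}{2}}x)^{\intercal}(M^{-\frac{1}{2}}WM^{-\frac{1}{2}})(M^{\frac{1}{2}}x)$ followed by an application of (a), and part (c) by inserting $M^{-\frac{1}{2}}M^{\frac{1}{2}}$ and using submultiplicativity. The only cosmetic difference is that you justify $M^{-\frac{1}{2}}WM^{-\frac{1}{2}}\in\mathcal{S}^{n}_{++}$ directly via the congruence argument, whereas the paper cites \cite{Abadir2005matrix} for that fact; both are fine.
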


\subsection{Special orthogonal and skew-symmetric matrices}
Let $\SO$ denote the Lie group of real $3\times 3$ proper orthogonal matrices  and  $\so$ denote the Lie algebra  of real $3\times 3$ skew-symmetric matrices, i.e.,
$
   \SO\defas  \{A\in \mathbb{R}^{3\times 3}|A^{\intercal}=A^{-1},\det (A)=1\}
   $
and 
  $
  \so\defas \{A\in \mathbb{R}^{3\times 3}|A^{\intercal}=-A\}. 
$
The  \textit{hat map}  $\wedge\colon \mathbb{R}^{3}\rightarrow \so$ is defined as  
 $$
 \hat{x} \defas \begin{pmatrix} 0 & -x_3& x_2\\
x_3 & 0 & -x_1\\
-x_2 & x_1 & 0
\end{pmatrix},~x\in \mathbb{R}^{3},
$$ 
and the \textit{vee map} $\vee\colon \so \rightarrow \mathbb{R}^{3}$ is the inverse of $\wedge$. Given a vector-valued function $x\colon I\subseteq \mathbb{R}\rightarrow \mathbb{R}^{3}$, we define $\hat{x}(t)\defas \widehat{x(t)},~t\in I$.   The following lemma demonstrates some properties of the hat and vee maps, which can be easily verified by definition. 
\begin{lemma}\label{lem:PropertiesHat}
 For any $x,y\in \mathbb{R}^{3}$, $A\in \mathbb{R}^{3\times 3}$ and $R\in \SO$, we have
 \begin{align} \label{hat0}
\hat{x}^{\intercal}&=-\hat{x},\\ 
     \hat{x} y &= x \times y = -y \times x = - \hat{y} x,\label{hat1}\\
     \tr[A \hat{x}] &= \frac{1}{2}\tr[\hat{x}(A - A^{\intercal})] = -x^{\intercal}(A - A^{\intercal})^\vee,\label{hat2}\\
     \hat{x}A + A^{\intercal} \hat{x} &= ((\tr[A]I - A)x)^{\wedge},\label{hat3}\\
     R \hat{x} R^{\intercal} &= \widehat{Rx}\label{hat4}.
\end{align} 
 \end{lemma}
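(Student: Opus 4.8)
The plan is to verify all five identities directly from the definition of the hat map, organizing the computation around three auxiliary facts that I would record first: (i) $\hat{x}$ has zero diagonal and satisfies $(\hat{x})_{ij}=-(\hat{x})_{ji}$, so it is skew-symmetric by inspection; (ii) the product rule $\hat{x}\hat{y}=yx^{\intercal}-(x\cdot y)I$ for all $x,y\in\mathbb{R}^{3}$, which I would check either by a one-line matrix multiplication or by testing it on the pairs of standard basis vectors; and (iii) rotations preserve the cross product, i.e.\ $R(a\times b)=(Ra)\times(Rb)$ for $R\in\SO$. For (iii) I would use $c\cdot\bigl((Ra)\times(Rb)\bigr)=\det[Ra \mid Rb \mid c]=\det(R)\,\det[a \mid b \mid R^{\intercal}c]=\det(R)\,c\cdot\bigl(R(a\times b)\bigr)$ for every $c$, together with $\det R=1$ and $R^{-\intercal}=R$.

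Identities \eqref{hat0} and \eqref{hat1} are then immediate: \eqref{hat0} is fact (i); multiplying the matrix $\hat{x}$ by $y$ and comparing with the displayed formula for $x\times y$ gives $\hat{x}y=x\times y$, antisymmetry of the cross product gives $x\times y=-y\times x$, and applying $\hat{x}y=x\times y$ with $x$ and $y$ interchanged gives $-y\times x=-\hat{y}x$. For \eqref{hat2}, combining $\tr[CD]=\tr[DC]$, $\tr[C]=\tr[C^{\intercal}]$ and \eqref{hat0} yields $\tr[A^{\intercal}\hat{x}]=\tr[\hat{x}^{\intercal}A]=-\tr[A\hat{x}]$, hence $\tr[\hat{x}(A-A^{\intercal})]=\tr[A\hat{x}]-\tr[A^{\intercal}\hat{x}]=2\tr[A\hat{x}]$, which is the first equality. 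Since $A-A^{\intercal}\in\so$, I would set $b\defas(A-A^{\intercal})^{\vee}$, so that $A-A^{\intercal}=\hat{b}$, and use fact (ii): $\tr[\hat{x}\hat{b}]=\tr[bx^{\intercal}]-(x\cdot b)\tr[I]=(x\cdot b)-3(x\cdot b)=-2\,x^{\intercal}b$, giving the second equality.

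For \eqref{hat3} I would first note that the left-hand side is skew-symmetric, $(\hat{x}A+A^{\intercal}\hat{x})^{\intercal}=A^{\intercal}\hat{x}^{\intercal}+\hat{x}^{\intercal}A=-(\hat{x}A+A^{\intercal}\hat{x})$ by \eqref{hat0}, so it equals $v^{\wedge}$ for a unique $v$; it therefore suffices to show $(\hat{x}A+A^{\intercal}\hat{x})y=\bigl((\tr[A]I-A)x\bigr)\times y$ for all $y$. By \eqref{hat1} this reads $x\times(Ay)+A^{\intercal}(x\times y)=\tr[A](x\times y)-(Ax)\times y$; taking the inner product with an arbitrary $z$ and using $a\cdot(b\times c)=\det[a \mid b \mid c]$, it becomes $\det[z \mid Ax \mid y]+\det[z \mid x \mid Ay]+\det[Az \mid x \mid y]=\tr[A]\det[z \mid x \mid y]$, which is the standard identity $\det[Au \mid v \mid w]+\det[u \mid Av \mid w]+\det[u \mid v \mid Aw]=\tr[A]\det[u \mid v \mid w]$ evaluated at $(u,v,w)=(z,x,y)$. (Alternatively, since both sides of \eqref{hat3} are linear in $A$, one may simply check the nine rank-one cases $A=e_{i}e_{j}^{\intercal}$.) Finally, for \eqref{hat4} I would apply both sides to an arbitrary $y$ and use fact (iii): $R\hat{x}R^{\intercal}y=R\bigl(x\times(R^{\intercal}y)\bigr)=(Rx)\times(RR^{\intercal}y)=(Rx)\times y=\widehat{Rx}\,y$, and conclude since $y$ is arbitrary.

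I do not anticipate any genuine obstacle: the only step beyond routine bookkeeping is \eqref{hat3}, where one must either recall the determinant/trace identity above (equivalently, Jacobi's formula for $\det$ differentiated at the identity) or work through the finitely many rank-one checks; the remaining identities follow directly from the definition of the hat map and facts (i)--(iii).
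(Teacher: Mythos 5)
Your proof is correct: all five identities are verified soundly, with the auxiliary facts (skew-symmetry, the product rule $\hat{x}\hat{y}=yx^{\intercal}-(x\cdot y)I$, equivariance of the cross product under rotations, and the trace/determinant identity used for \eqref{hat3}) all stated and justified accurately. The paper itself gives no proof — it simply asserts the identities "can be easily verified by definition" — and your argument is a clean, complete instance of exactly that direct verification, so there is nothing to reconcile.
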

 As $\SO$ is compact and connected, the exponential map, $A\mapsto \exp(A)$, from $\so$ to $\SO$ is surjective \cite{hall2013lie}. In addition, the hat map and its inverse establish a homeomorphism between $\mathbb{R}^{3}$ and $\so$. This consequently leads to the Euler-Rodrigues formula  (see, e.g., \cite{gallier2003computing,dai2015euler}):
\begin{lemma}\label{lem:RodriguesFormula}
For each $R\in \SO$, there exists $x\in \mathbb{R}^{3}$ such that 
$$
R=\exp(\hat{x})=\begin{cases} \mathrm{I}_{3}+\frac{\sin(\norm{x})}{\norm{x}}\hat{x}+\frac{1-\cos(\norm{x})}{\norm{x}^2}\hat{x}^{2},&~x\neq 0_{3},
  \\
  \mathrm{I}_{3},&~x=0_{3}.
  \end{cases}
$$
\end{lemma}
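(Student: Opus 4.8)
The plan is to split the claim into two parts: the \emph{existence} of $x\in\mathbb{R}^{3}$ with $R=\exp(\hat{x})$, and the \emph{closed-form evaluation} of $\exp(\hat{x})$. The first part is already available: since $\SO$ is compact and connected, the exponential map $\so\to\SO$ is surjective, and the hat map is a bijection from $\mathbb{R}^{3}$ onto $\so$; chaining these two facts, every $R\in\SO$ can be written as $\exp(\hat{x})$ for some $x\in\mathbb{R}^{3}$. Thus only the displayed evaluation of $\exp(\hat{x})$ remains to be proved.

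First I would establish the two algebraic identities that drive the computation. Using $\hat{x}y=x\times y$ together with the vector triple product $x\times(x\times y)=(x\cdot y)x-\norm{x}^{2}y$, one gets $\hat{x}^{2}=xx^{\intercal}-\norm{x}^{2}\mathrm{I}_{3}$; and since $\hat{x}x=x\times x=0_{3}$, multiplying by $\hat{x}$ yields $\hat{x}^{3}=-\norm{x}^{2}\hat{x}$ (alternatively this is the Cayley--Hamilton identity for $\hat{x}$, whose characteristic polynomial is $\lambda^{3}+\norm{x}^{2}\lambda$). A straightforward induction on $k$ then gives, for $x\neq 0_{3}$ and with $\theta\defas\norm{x}$,
$$
\hat{x}^{2k+1}=(-1)^{k}\theta^{2k}\,\hat{x},\qquad \hat{x}^{2k+2}=(-1)^{k}\theta^{2k}\,\hat{x}^{2},\qquad k\geq 0 .
$$

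Next I would substitute these into the matrix-exponential series $\exp(\hat{x})=\sum_{k\geq 0}\hat{x}^{k}/k!$. Because this series converges absolutely, it may be re-ordered; separating the $k=0$ term and grouping the rest by the parity of the exponent gives
$$
\exp(\hat{x})=\mathrm{I}_{3}+\Bigl(\sum_{k\geq 0}\frac{(-1)^{k}\theta^{2k}}{(2k+1)!}\Bigr)\hat{x}+\Bigl(\sum_{k\geq 0}\frac{(-1)^{k}\theta^{2k}}{(2k+2)!}\Bigr)\hat{x}^{2}.
$$
Recognising $\sum_{k\geq 0}(-1)^{k}\theta^{2k+1}/(2k+1)!=\sin\theta$ and $\sum_{k\geq 0}(-1)^{k}\theta^{2k+2}/(2k+2)!=1-\cos\theta$, the two bracketed coefficients equal $\sin(\norm{x})/\norm{x}$ and $(1-\cos(\norm{x}))/\norm{x}^{2}$, which is exactly the asserted expression for $x\neq 0_{3}$. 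The case $x=0_{3}$ is trivial: then $\hat{x}=0_{3\times 3}$ and $\exp(0_{3\times 3})=\mathrm{I}_{3}$.

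There is no serious obstacle in this argument; the only points deserving a line of justification are the absolute convergence of the exponential series (which legitimises the regrouping by parity) and the index bookkeeping in the induction for the powers of $\hat{x}$. Everything else is a direct insertion of the Maclaurin series of $\sin$ and $\cos$.
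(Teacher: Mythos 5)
Your proof is correct and follows essentially the same route the paper takes: the existence of $x$ comes from the surjectivity of $\exp\colon\so\to\SO$ (compactness and connectedness) combined with the hat map being a bijection, exactly as sketched in the sentence preceding the lemma. The paper merely cites references for the closed-form expression, whereas you supply the standard computation via $\hat{x}^{3}=-\norm{x}^{2}\hat{x}$ and the parity regrouping of the exponential series; that computation is accurate and makes the argument self-contained.
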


\subsection{Differential inequalities}
In our analysis of tracking error stability, we will resort to the following technical results concerning linear and nonlinear differential inequalities. The result below is the well-established Gr\"onwall's lemma (see, e.g., \cite[Lemma~1.1,~p.~2]{bainov2013integral}). 
\begin{lemma}\label{lem:GronWallInequality}
 Let $I\subseteq \mathbb{R}$ be an interval with a left endpoint $t_{0}$, $b\colon I\rightarrow\mathbb{R}$ be continuous and  $u\colon I\rightarrow\mathbb{R}$ be differentiable, satisfying  $
\dot{u}(t)\leq b(t)u(t),~t\in I.
 $
Then, 
 $
u(t)\leq u(t_{0})\e{\int_{t_{0}}^{t}b(s)\mathrm{d}s},~t\in I.
 $
\end{lemma}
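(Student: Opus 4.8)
The plan is to convert the differential inequality into a monotonicity statement by means of an integrating factor, which is the classical route for the differential form of Gr\"onwall's lemma. First I would set $B(t)\defas\int_{t_{0}}^{t}b(s)\,\mathrm{d}s$, which is well-defined and continuously differentiable on $I$ with $\dot B(t)=b(t)$ since $b$ is continuous, and then introduce the auxiliary function $v(t)\defas u(t)\e{-B(t)}$. This $v$ is differentiable on $I$ because $u$ is differentiable by hypothesis and $t\mapsto\e{-B(t)}$ is a composition of differentiable maps.

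Next I would differentiate $v$ using the product and chain rules to obtain $\dot v(t)=\dot u(t)\e{-B(t)}-u(t)b(t)\e{-B(t)}=\bigl(\dot u(t)-b(t)u(t)\bigr)\e{-B(t)}$. Invoking the hypothesis $\dot u(t)\le b(t)u(t)$ together with the strict positivity of $\e{-B(t)}$ gives $\dot v(t)\le 0$ for every $t\in I$. Applying the mean value theorem on $[t_{0},t]$ for an arbitrary $t\in I$ (recall that $t_{0}$ is the left endpoint of $I$) shows that $v$ is non-increasing, hence $v(t)\le v(t_{0})=u(t_{0})\e{-B(t_{0})}=u(t_{0})$. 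Multiplying both sides by $\e{B(t)}>0$ yields $u(t)\le u(t_{0})\e{B(t)}=u(t_{0})\e{\int_{t_{0}}^{t}b(s)\,\mathrm{d}s}$, which is the claimed bound.

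This argument is entirely routine. The only mild points requiring care are that the stated regularity (continuity of $b$, differentiability of $u$) is precisely what is needed for $v$ to be differentiable with the derivative displayed above, and that the pointwise sign condition $\dot v\le 0$ transfers to genuine monotonicity of $v$; these are handled by the fundamental theorem of calculus and the mean value theorem, respectively. I do not anticipate any substantive obstacle.
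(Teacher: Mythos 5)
Your integrating-factor argument is correct and complete; it is the classical proof of the differential form of Gr\"onwall's lemma, which the paper does not reprove but simply cites from a standard reference. No issues.
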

The following lemma is  a modified version of \cite[Lemma~4.1,~p.~38]{bainov2013integral}, which is concerned with a Bernoulli-type nonlinear differential inequality that we will utilize in this work. 
\begin{lemma}\label{lem:BernoulliInequality} Let $I\subseteq \mathbb{R}$ be an interval with a left endpoint  $t_{0}$, $b\colon I \rightarrow \mathbb{R}$ be continuous, $k\colon I\rightarrow \mathbb{R}$ be continuous and non-negative, and     $u\colon I\rightarrow \mathbb{R}$ be positive and differentiable, satisfying 
$
\dot{u}(t)\leq b(t)u(t)+k(t)\sqrt{u(t)},~t\in I.
$
Then, for all $t\in I$, 
$$
\sqrt{u(t)}\leq \e{\frac{1}{2}\int_{t_{0}}^{t}b(s)\mathrm{d}s}\left(\sqrt{u(t_{0})}+\frac{1}{2}\int_{t_{0}}^{t}k(s)\e{-\frac{1}{2}\int_{t_{0}}^{s}b(z)\mathrm{d}z}\mathrm{d}s\right).
$$

\end{lemma}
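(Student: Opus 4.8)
The plan is to linearise the inequality with the substitution $v\defas\sqrt{u}$. Since $u$ is positive and differentiable on $I$, the composition $v=\sqrt{u}$ is positive and differentiable, with $\dot{v}=\dot{u}/(2\sqrt{u})$, equivalently $\dot{u}=2v\dot{v}$. Plugging this, together with $u=v^{2}$ and $\sqrt{u}=v$, into the hypothesis $\dot{u}\le bu+k\sqrt{u}$ gives $2v\dot{v}\le bv^{2}+kv$ on $I$. Because $v$ is strictly positive, we may divide by $2v$ to obtain the scalar linear differential inequality
$$\dot{v}(t)\le \tfrac{1}{2}b(t)\,v(t)+\tfrac{1}{2}k(t),\qquad t\in I.$$

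Next I would integrate this linear inequality by the integrating-factor method. Set $\mu(t)\defas\e{-\frac{1}{2}\int_{t_{0}}^{t}b(s)\mathrm{d}s}$, which is positive, differentiable, and satisfies $\dot{\mu}=-\tfrac{1}{2}b\,\mu$ with $\mu(t_{0})=1$. Then $\tfrac{\mathrm{d}}{\mathrm{d}t}\big(\mu v\big)=\mu\big(\dot{v}-\tfrac{1}{2}bv\big)\le \tfrac{1}{2}\mu k$ on $I$. Since the right-hand side is continuous, integrating this pointwise inequality from $t_{0}$ to $t$ is valid and yields $\mu(t)v(t)-v(t_{0})\le \tfrac{1}{2}\int_{t_{0}}^{t}\mu(s)k(s)\mathrm{d}s$. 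Multiplying through by $1/\mu(t)=\e{\frac{1}{2}\int_{t_{0}}^{t}b(s)\mathrm{d}s}>0$ and writing out $\mu(s)=\e{-\frac{1}{2}\int_{t_{0}}^{s}b(z)\mathrm{d}z}$ recovers precisely the asserted bound, since $v=\sqrt{u}$ and $v(t_{0})=\sqrt{u(t_{0})}$.

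If one prefers not to integrate an inequality directly, the linear step can instead be carried out by comparing $v$ with the explicit solution of the associated linear ODE $\dot{y}=\tfrac{1}{2}by+\tfrac{1}{2}k$, $y(t_{0})=v(t_{0})$, via a standard comparison argument in the spirit of Grönwall's lemma (Lemma \ref{lem:GronWallInequality}); the integrating-factor route above is simply the most self-contained way to package this.

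The argument presents no substantial obstacle; the only points deserving care are bookkeeping. First, $v=\sqrt{u}$ inherits differentiability exactly because $u$ is bounded away from $0$, and the division by $2v$ is legitimate for the same reason. Second, passing from the pointwise bound on $\tfrac{\mathrm{d}}{\mathrm{d}t}(\mu v)$ to the integrated bound is justified because that derivative is dominated by a continuous function: the auxiliary function $s\mapsto \tfrac{1}{2}\int_{t_{0}}^{s}\mu(z)k(z)\,\mathrm{d}z-\big(\mu(s)v(s)-v(t_{0})\big)$ has non-negative derivative, hence is non-decreasing, hence is non-negative for $s\ge t_{0}$. Note finally that the non-negativity of $k$ is not actually needed for the stated inequality to hold, although it is the case of interest in the subsequent tracking-error analysis.
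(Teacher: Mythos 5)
Your proof is correct and is the standard Bernoulli-substitution argument ($v=\sqrt{u}$, then an integrating factor for the resulting linear inequality); the paper gives no proof of this lemma at all, deferring to the cited reference \cite{bainov2013integral}, which proceeds in essentially the same way, so there is nothing to reconcile. One small wording slip: $\sqrt{u}$ is differentiable and the division by $2v$ is legitimate because $u$ is \emph{pointwise} positive (the chain rule applies at each $t\in I$), not because $u$ is ``bounded away from $0$'' --- the hypotheses do not guarantee a uniform positive lower bound, but pointwise positivity is all your argument needs.
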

As a consequence of the above lemma, we have the following result.
\begin{lemma}[See the proof in the Appendix]\label{lem:SpecificBernoulliInequality}
    Let $I\subseteq \mathbb{R}_{+}$ be an interval with $0$ as the left endpoint,    $a_{0},a_{1},a_{2},{c}$ be positive constants, and    $u\colon I \rightarrow \mathbb{R} 
    $
    be a positive differentiable function, satisfying
  $$
\dot{u}(t)\leq -(a_{0}-a_{1}\e{-ct})u(t)+a_{2}\e{-ct}\sqrt{u(t)},~t\in I. 
$$
Then,  for all $t\in I$,
\begin{align*} 
\sqrt{u(t)}&\leq     \e{\frac{a_{1}}{2c}}\left(\sqrt{u(0)}\e{-\frac{a_{0}}{2}t}+\frac{a_{2}}{2} \e{-\frac{a_{0}}{2}t}\int_{0}^{t}\e{(\frac{a_{0}}{2}-c)s}\mathrm{d}s\right).
\end{align*}
\end{lemma}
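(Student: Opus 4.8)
The plan is to derive this as a direct corollary of Lemma~\ref{lem:BernoulliInequality} by making the substitutions $b(t) = -(a_{0} - a_{1}\e{-ct})$, $k(t) = a_{2}\e{-ct}$, and $t_{0} = 0$, and then estimating the two exponential integrals that appear in the conclusion of that lemma. First I would verify the hypotheses of Lemma~\ref{lem:BernoulliInequality}: $b$ is continuous on $I$, $k$ is continuous and non-negative (since $a_{2} > 0$ and $\e{-ct} > 0$), and $u$ is positive and differentiable by assumption, so the lemma applies and yields
$$
\sqrt{u(t)}\leq \e{\tfrac{1}{2}\int_{0}^{t}b(s)\mathrm{d}s}\left(\sqrt{u(0)}+\tfrac{1}{2}\int_{0}^{t}k(s)\e{-\tfrac{1}{2}\int_{0}^{s}b(z)\mathrm{d}z}\mathrm{d}s\right).
$$

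Next I would compute $\int_{0}^{t}b(s)\mathrm{d}s = -a_{0}t + \tfrac{a_{1}}{c}(1 - \e{-ct})$, so that the outer factor is $\e{-\tfrac{a_{0}}{2}t + \tfrac{a_{1}}{2c}(1 - \e{-ct})}$. Since $1 - \e{-ct} \leq 1$ for $t \geq 0$, this outer factor is bounded above by $\e{\tfrac{a_{1}}{2c}}\e{-\tfrac{a_{0}}{2}t}$; because both $\sqrt{u(0)}$ and the inner integral term are non-negative, multiplying the whole parenthesized expression by this bound is valid. For the inner exponential, $-\tfrac{1}{2}\int_{0}^{s}b(z)\mathrm{d}z = \tfrac{a_{0}}{2}s - \tfrac{a_{1}}{2c}(1 - \e{-cs})$, and since $1 - \e{-cs} \geq 0$ for $s \geq 0$, we have $\e{-\tfrac{1}{2}\int_{0}^{s}b(z)\mathrm{d}z} \leq \e{\tfrac{a_{0}}{2}s}$. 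Substituting $k(s) = a_{2}\e{-cs}$ then gives $\tfrac{1}{2}\int_{0}^{t}k(s)\e{-\tfrac{1}{2}\int_{0}^{s}b}\mathrm{d}s \leq \tfrac{a_{2}}{2}\int_{0}^{t}\e{(\tfrac{a_{0}}{2} - c)s}\mathrm{d}s$.

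Combining the two estimates yields
$$
\sqrt{u(t)} \leq \e{\tfrac{a_{1}}{2c}}\e{-\tfrac{a_{0}}{2}t}\left(\sqrt{u(0)} + \tfrac{a_{2}}{2}\int_{0}^{t}\e{(\tfrac{a_{0}}{2} - c)s}\mathrm{d}s\right),
$$
which is exactly the claimed bound after distributing $\e{-\tfrac{a_{0}}{2}t}$. There is no genuinely hard step here; the whole argument is bookkeeping with exponentials. The only point requiring care is the direction of the inequalities when pulling the factor $\e{\pm\tfrac{a_{1}}{2c}(1 - \e{-c\,\cdot})}$ in and out: one must use $1 - \e{-ct} \leq 1$ for the outer factor (which enlarges it) and $1 - \e{-cs} \geq 0$ for the inner factor (which also enlarges the bound, since it appears with a negative sign in the exponent), and one must confirm that the terms being enlarged are non-negative so the inequalities propagate through the sum and the integral in the right direction.
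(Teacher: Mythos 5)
Your proof is correct and follows essentially the same route as the paper's: apply Lemma~\ref{lem:BernoulliInequality} with $b(t)=-(a_{0}-a_{1}\e{-ct})$ and $k(t)=a_{2}\e{-ct}$, then bound the outer exponential via $1-\e{-ct}\leq 1$ and the inner one via $1-\e{-cs}\geq 0$. The only difference is that you make the non-negativity checks explicit, which the paper leaves implicit.
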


\subsection{B\'ezier curves}
\label{sec:BezierCurveDef}
An $n$-dimensional B\'ezier curve (see, e.g., \cite{farouki2008pythagorean,farin2014curves}) $\mathfrak{B}\colon \intcc{0,T}\rightarrow \mathbb{R}^{n}$, of one segment, with $N+1$ control points, is given by
$$
\mathfrak{B}(t)=\sum_{i=0}^{N}\mathfrak{c}^{i}\mathfrak{b}_{i,N}\left(\frac{t}{T}\right),~t\in \intcc{0,T}, 
$$
where $\mathfrak{c}^{i}\in \mathbb{R}^{n},~i\in \intcc{0;N},$ are the control points determining the shape of the curve, and $\mathfrak{b}_{i,N}\colon \intcc{0,1}\rightarrow \mathbb{R},~i\in \intcc{0;N}$, are   the Bernstein polynomials, which are defined by 
$$
\mathfrak{b}_{i,N}(\bar{t})\defas\frac{i!}{(N-i)!}
\bar{t}^{i}(1-\bar{t})^{N-i},~\bar{t}\in \intcc{0,1},~i\in \intcc{0;N},
$$
where we use the convention that $0^{0}\defas 1$. The initial and final values of $\mathfrak{B}$ coincide with the first and last control points, respectively, i.e.,
$
\mathfrak{B}(0)=\mathfrak{c}^{0},~\mathfrak{B}(T)=\mathfrak{c}^{N}.
$
Using the binomial theorem, it is easily verified that the Bernstein polynomials satisfy  $\sum_{i=0}^{N}\mathfrak{b}_{i,N}\left(\bar{t}\right)=1,~\bar{t}\in \intcc{0,1}$. This induces the useful property
$
\mathfrak{B}(t)\in \mathrm{conv}(\mathfrak{c}^{0},\cdots,\mathfrak{c}^{N}),~t\in \intcc{0,T}.
$
Hence, the range of values of $\mathfrak{B}$ can be controlled by tuning the control points $\mathfrak{c}^{0},\cdots, \mathfrak{c}^{N}.$
In addition, the derivatives of B\'ezier curves are also B\'ezier curves. For example,  
$\dot{\mathfrak{B}}(t)=\sum_{i=0}^{N-1}\frac{N}{T}(\mathfrak{c}^{i+1}-\mathfrak{c}^{i})\mathfrak{b}_{i,N-1}\left(\frac{t}{T}\right)$,
   $\ddot{\mathfrak{B}}(t)=\sum_{i=0}^{N-2}\frac{N(N-1)}{T^2}(\mathfrak{c}^{i+2}-2\mathfrak{c}^{i+1}+\mathfrak{c}^{i})\mathfrak{b}_{i,N-2}\left(\frac{t}{T}\right)$,
     $ \dddot{\mathfrak{B}}(t)=\sum_{i=0}^{N-3}\frac{N(N-1)(N-2)}{T^3}(\mathfrak{c}^{i+3}-3\mathfrak{c}^{i+2}+3\mathfrak{c}^{i+1}-\mathfrak{c}^{i})\mathfrak{b}_{i,N-3}\left(\frac{t}{T}\right)$, and  $ \ddddot{\mathfrak{B}}(t)=\sum_{i=0}^{N-4}\frac{N(N-1)(N-2)(N-3)}{T^4}(\mathfrak{c}^{i+4}-4\mathfrak{c}^{i+3}+6\mathfrak{c}^{i+2}-4\mathfrak{c}^{i+1}+\mathfrak{c}^{i})\mathfrak{b}_{i,N-4}\left(\frac{t}{T}\right)$, $t\in \intcc{0,T}$. Therefore, the range of  values of the derivatives of $\mathfrak{B}$ can also be controlled by tuning the control points $\mathfrak{c}^{0},\cdots, \mathfrak{c}^{N}$.

\section{Quadrotor Model}
\label{sec:Model}
Let $\{{e}_1, {e}_2, {e}_3\}$ be the standard basis of $\mathbb{R}^{3}$ (i.e., $=[e_{1},e_{2},e_{3}]=I_{3}$), which corresponds to the inertial reference (world) frame, where $e_{3}$ is pointing upward. Let   $I\subset\mathbb{R}_{+}$ be an interval  with $0$ as the left endpoint, and let $t\in I$.  The moving fixed-body  frame of the quadrotor system is given by the time-varying orthonormal basis $\{b_{1}(t),b_{2}(t),b_{3}(t)\}$.  The rotation matrix (attitude) that transforms the fixed-body frame into the inertial frame is $R(t)\in \SO$, where $b_{i}(t)=R(t)e_{i},~i\in \intcc{1;3}$. The mass of the quadrotor is $m$ and its inertia matrix, with respect to the moving fixed-body frame, is $J\in \mathbb{R}^{3\times 3}$ (in fact, $J\in \mathcal{S}^{3}_{++}$). The quadrotor system consists of four identical rotors and propellers, with each pair capable of generating thrust and torque independently. The total thrust generated by the propellers is $f(t)\in \mathbb{R}$, and the total generated torque is $\tau(t)\in \mathbb{R}^{3}$. The position of the quadrotor's center of mass is $p(t)\in \mathbb{R}^{3}$, its velocity is $v(t)\in \mathbb{R}^{3}$, and  the quadrotor's angular velocity, with respect to the fixed-body frame, is $\omega(t)\in \mathbb{R}^{3}$. Let $g\in \mathbb{R}_{+}$ denote the gravitational acceleration.  Assuming negligible aerodynamic drag forces\footnote{Control of quadrotors with considerable drag forces has been analyzed in previous works (see, e.g., \cite{lefeber2017almost,andrien2024model}).  The framework presented herein can be adapted to account for drag forces, where the thrust and torque control laws given in \eqref{eqn:f}-\eqref{eqn:M}  are adjusted to account for such forces.}, the governing equations of quadrotor dynamics are 
\begin{align} \label{eq:pdot}
\dot{{p}}(t) &= {v}(t),\\ \label{eq:vdot}
\dot{{v}}(t) &= - g {e}_3 + m^{-1}f(t) R(t) {e}_3,\\ \label{eq:Rdot}
\dot{R}(t)&= R(t) \hat{\omega}(t),\\ \label{eq:wdot}
\dot{{\omega}}(t) &= J^{-1} \left(-{\omega(t)} \times J {\omega(t)} +{\tau(t)}\right),~t\in I.
\end{align}

\section{Problem Formulation}
\label{sec:ProblemFormulation}
 Let $\mathcal{X}_{\mathrm{o}}=\Hintcc{\underline{x}_{\mathrm{o}},\overline{x}_{\mathrm{o}}}\subseteq \mathbb{R}^{3}$ be a bounded hyper-rectangular operating domain and $\mathcal{X}_{\mathrm{u}}=\bigcup_{i=1}^{N_\mathrm{u}}\Hintcc{\underline{x}_{\mathrm{u}}^{(i)},\overline{x}_{\mathrm{u}}^{(i)}}\subseteq \mathbb{R}^{3}$ be  an unsafe set defined as a union of $N_\mathrm{u}$ hyper-rectangles.    It is required that the quadrotor's
  position is always  inside the operating domain, while avoiding the unsafe set. Let $\mathcal{X}_{\mathrm{t}}=\Hintcc{\underline{x}_{\mathrm{t}},\overline{x}_{\mathrm{t}}}\subseteq \mathbb{R}^{3}$ be a hyper-rectangular target set that we aim to drive the quadrotor's position into, where we assume  $\mathcal{X}_{\mathrm{t}}\subseteq \mathcal{X}_{\mathrm{o}}\setminus \mathcal{X}_{\mathrm{u}}$\footnote{The operating domain and the unsafe and target sets should account for the dimensions of the quadrotor to ensure collision avoidance and fully containing the quadrotor within the target region.}.  Given a velocity bound $v_{\max}\in \mathbb{R}^{3}_{+}$, a thrust bound $f_{\max}\in \mathbb{R}_{+}$, a nominal initial point $(p_0,v_{0},R_{0},\omega_{0})\in \mathbb{R}^{3}\times \mathbb{R}^{3}\times \SO\times \mathbb{R}^{3}$, where it is assumed that  $p_{0}\in \mathrm{int}(\mathcal{X}_{\mathrm{o}})\setminus \mathcal{X}_{\mathrm{u}}$, $\abs{v_{0}}<v_{\max}$,
    and  $(R_{0},\omega_{0})=(\mathrm{I}_{3},0_{3})$,
find force and torque control laws such that there exists a finite time $T>0$ and a non-singleton set containing  $(p_{0},v_{0},R_{0},\omega_{0})$ in its relative interior,  where for any initial value of the quadrotor system $(p(0),v(0),R(0),\omega(0))$ in that set, the corresponding trajectory of the quadrotor system and associated thrust satisfy\begin{equation}
\label{eq:ReachAVoidProblem}
\begin{split}
p(t)&\in \mathcal{X}_{\mathrm{o}}\setminus \mathcal{X}_{\mathrm{u}},\\
\abs{v(t)}&\leq v_{\max},\\
\abs{f(t)}&\leq f_{\max},~t\in \intcc{0,T},\\
p(T)&\in \mathcal{X}_{\mathrm{t}}.
\end{split}
\end{equation}
\begin{remark}
   While we have stated previously that geometric tracking control will be employed in the reach-avoid control synthesis, the characterization of the thrust and torque control laws is still incomplete as we need to determine the desired (position) trajectory. In this work, we we will introduce several conditions to be imposed on the desired trajectory to ensure formal safety as formulated in \eqref{eq:ReachAVoidProblem}.   
\end{remark}

\begin{remark}
The framework presented herein can be applied to solve the safe set-stabilization problem
$
p(t) \in \mathcal{X}_{\mathrm{o}}\setminus \mathcal{X}_{\mathrm{u}}
~\forall t\geq 0,$  and  $ \lim_{t\rightarrow \infty} \inf_{x\in \mathcal{X}_{\mathrm{t}}} \norm{p(t)-x}= 0$.  In general, our approach can be employed when considering tracking control on an infinite time horizon, e.g.,  when addressing linear temporal logic \cite{vardi2005automata} specifications.    
\end{remark}

 To address the problem under consideration, our proposed approach relies on designing a desired trajectory with accompanying tracking controller, where the designed trajectory integrates uniform bounds that overestimate the tracking errors. We first introduce the tracking controller and rigorously analyze its stability properties, where we adapt the geometric control framework presented in \cite{lee2010geometric}.

\section{Geometric Control and Error Dynamics}
\label{sec:ErrorDynamics}
Let $I\subset\mathbb{R}_{+}$ be an interval  with $0$ as the left endpoint, and let $t\in I$. For tracking a desired trajectory, with desired position $p_{d}(t)\in \mathbb{R}^{3}$, rotation matrix $R_{d}(t)\in \SO$, and angular velocity $\omega_{d}(t)\in \mathbb{R}^{3}$, the tracking errors between the current and desired states  are defined as follows \cite{lee2010geometric}: 
\begin{align}
{e_p}(t) & \defas {p}(t) - {p}_d(t),\label{eqn:ep}\\
{e_v}(t) & \defas {v}(t) - \dot{p}_d(t),\label{eqn:ev}\\
{e_R}(t) & \defas \frac{1}{2} (R_d^{\intercal}(t)R(t) - R^{\intercal}(t) R_d(t) )^\vee,\label{eqn:eR}\\
{e_\omega}(t) & \defas {\omega} - R^{\intercal}(t) R_d(t) {\omega}_d(t), \label{eqn:eW}
\end{align}
where ${e_p}, {e_v}, {e_R}, {e_\omega}$ are the error functions of the position, velocity, attitude, and angular velocity, respectively. 
We also present the configuration error function \cite{lee2010control,lee2010geometric} 
\begin{equation}\label{eq:Psi}
    \Psi(t) = \frac{1}{2} \mathrm{tr}(\mathrm{I}_{3} - R_d^{\intercal}(t) R(t)),~t\in I.
\end{equation}
We adopt the  geometric control laws for the force and torque
\begin{align}
f(t) & = F_{d}(t) \cdot R(t){e_3},\label{eqn:f} \\
\label{eq:Fd} F_{d}(t)&=-k_{p}e_{p}(t)-k_{v}e_{v}(t)+mg e_{3}+m\ddot{p}_{d}(t),
\\
{\tau}(t) & = - k_R {e_R}(t) -k_{{\omega}} {e_\omega}(t)+{\omega}(t) \times J {\omega}(t)\nonumber\\ 
 &\quad  -J(\hat{\omega}(t) R^{\intercal}(t) R_d(t) {\omega_d}(t) - R^{\intercal}(t) R_d(t){\dot{\omega}_d(t)}),\label{eqn:M}
\end{align}
$t\in I$, where  $k_{{p}}$, $k_{{v}}$, $k_{R}$, and $k_{{\omega}}$ are positive control gains. 
As our problem is concerned with maneuvering within an operating domain to reach a target set while avoiding obstacles, the desired  attitude, and angular velocity should ensure tracking the desired position $p_{d}$ successfully.  As in \cite{lee2010control}, we set 
\begin{align}\label{eq:Rd}
    R_{d}(t)&=[b_{1,d}(t), b_{2,d}(t),b_{3,d}(t)],\\ \label{eq:wd}
\hat{\omega}_{d}(t)&=R_{d}^{\intercal}(t)\dot{R}_{d}(t),
\end{align} 
and  we choose, as in \cite{zou2017trajectory}, 
\begin{equation}
\label{eq:bd1bd2bd3}
\begin{split}
b_{1,d}(t)&=\frac{1}{\norm{F_{d}(t)}}\begin{pmatrix} F_{d,3}(t)+\frac{(F_{d,2}(t))^{2}}{\norm{F_{d}(t)}+F_{d,3}(t)}\\
    -\frac{F_{d,1}(t)F_{d,2}(t)}{\norm{F_{d}(t)}+F_{d,3}(t)}\\
       -F_{d,1}(t) \end{pmatrix},\\
       b_{2,d}(t)&=\frac{1}{\norm{F_{d}(t)}}\begin{pmatrix} -\frac{F_{d,1}(t)F_{d,2}(t)}{\norm{F_{d}(t)}+F_{d,3}(t)},\\
       F_{d,3}(t)+\frac{(F_{d,1}(t))^{2}}{\norm{F_{d}(t)}+F_{d,3}(t)}\\
-F_{d,2}(t)
\end{pmatrix},\\
b_{3,d}(t)&=\frac{F_{d}(t)}{\norm{F_{d}(t)}},~t\in I. 
\end{split}
\end{equation}

We  fix the time interval $I\subseteq \mathbb{R}_{+}$, where we assume  $0$ is the left endpoint, and consider the closed-loop quadrotor system \eqref{eq:pdot}--\eqref{eq:wdot}, with \eqref{eqn:ep}--\eqref{eqn:eW} and \eqref{eqn:f}--\eqref{eq:bd1bd2bd3},  over $I$ throughout the discussion below. Unless otherwise specified, we assume  the control  gains are fixed. In addition, it is  assumed that  $p_{d}$ is four-times continuously differentiable over $I$, where $p_{d}$ and its first four derivatives are assumed to be  uniformly bounded over $I$.

Note that in the definition of $\omega_{d}$ in \eqref{eq:wd}, we differentiate $R_{d}$ with respect to time, which necessarily requires differentiating $F_{d}$ (see equations \eqref{eq:Rd} and \eqref{eq:bd1bd2bd3}), where  the evolution equations of $p$  and $v$, given by \eqref{eq:pdot} and \eqref{eq:vdot}, respectively, are used. Note also that $\omega_{d}$ depends on $\dddot{p}_{d}$ and that $\dot{\omega}_{d}$, which is used in the torque law \eqref{eqn:M}, depends on $\ddddot{p}_{d}$, hence the four-times continuous differentiability assumption imposed on $p_{d}$.

In view of \eqref{eq:bd1bd2bd3}, it is important to ensure that 
\begin{equation}\label{eq:Well-DefinednessRequirement}
F_{d,3}(t)\neq -\norm{F_{d}(t)},~\norm{F_{d}(t)}\neq 0
\end{equation}
 for all  $t\in I$  in order to have the closed-loop dynamics  well-defined. In Section \ref{sec:Well-Definedness}, we will introduce  additional conditions on $p_{d}$ that ensure existence (i.e., the  functions $p$, $v$, $R$, and $\omega$,  defined through the closed-loop system \eqref{eq:pdot}--\eqref{eq:wdot}, with \eqref{eqn:ep}--\eqref{eqn:eW} and \eqref{eqn:f}--\eqref{eq:bd1bd2bd3}, do in fact exists over $I$) and well-definedness (condition \eqref{eq:Well-DefinednessRequirement} holding).   

In the subsequent analysis, we  assume the second derivative of $p_d$ to satisfy  \begin{equation}\label{eq:BoundingAcceleration}
\abs{g{e}_{3}+\ddot{p}_{d}(t)}\leq a_{\max}, 
\end{equation}
for all $~t\in I$ and some specified  $a_{\max}\in \mathbb{R}_{+}^{3}$. This bound will be useful in our stability analysis and when verifying the fulfillment of the thrust bound.

\section{Stability Analysis}
\label{sec:StabilityAnalysis}
In this section, we analyze the evolution of the error terms \eqref{eqn:ep}--\eqref{eqn:eW}. The evolution equations for the error functions are as follows. 
\begin{proposition}[See the proof in the Appendix]\label{Prop:ErrorDynamics}  Assume that the functions $p$, $v$, $R$, and $\omega$, given by \eqref{eq:pdot}--\eqref{eq:wdot}, with \eqref{eqn:ep}--\eqref{eqn:eW} and \eqref{eqn:f}--\eqref{eq:bd1bd2bd3}, exist over $I$ and that  condition \eqref{eq:Well-DefinednessRequirement} holds over $I$. For $t\in I$, the derivatives of the error functions $e_{p}$, $e_{v}$,   $\Psi$, $e_R$, and $e_\omega$ are given by
   \begin{align} \label{eq:ep_dot}
     {\dot{e}_p}(t) & = {e_v}(t),\\ \label{eq:ev_dot}
    {\dot{e}_v}(t) & = -\frac{1}{m}(k_p e_p(t) + k_v e_v(t) - \Delta_{f}(t)),\\
         \label{eq:Psi_dot}  \dot{\Psi}(t) & = e_R(t) \cdot e_\omega(t),\\ \label{eq:eR_dot}
            \dot{e}_R(t) & = \mathcal{C}(t) e_\omega(t),\\ \label{eq:eW_dot}
        {\dot{e}_\omega}(t) & = J^{-1} (-{k_R e_R}(t) - k_{{\omega}} {e_\omega}(t)),
        \end{align}
where  
    \begin{equation} \label{eq:C}
    \mathcal{C}(t) \defas \frac{1}{2}(\tr[R^{\intercal}(t) R_d(t)]\mathrm{I}_{3} - R^{\intercal}(t) R_d(t)),
    \end{equation}
    and  
    \begin{equation}
         \begin{split}
         \Delta_{f}(t)\defas\norm{F_{d}(t)}((b_{3,d}(t)\cdot b_{3}(t))b_{3}(t)-b_{3,d}(t)).
\end{split}
\end{equation}
\end{proposition}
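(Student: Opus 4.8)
The plan is to differentiate each of the five error quantities directly, substitute the closed-loop dynamics \eqref{eq:pdot}--\eqref{eq:wdot} together with the control laws \eqref{eqn:f}--\eqref{eq:bd1bd2bd3}, and repeatedly invoke the hat-map identities of Lemma~\ref{lem:PropertiesHat}. Equation \eqref{eq:ep_dot} is immediate, since $\dot{e}_p = \dot{p} - \dot{p}_d = v - \dot{p}_d = e_v$ by \eqref{eq:pdot}. For \eqref{eq:ev_dot} I would start from $\dot{e}_v = \dot{v} - \ddot{p}_d = -g e_3 + m^{-1} f R e_3 - \ddot{p}_d$ by \eqref{eq:vdot}, write $f R e_3 = (F_d \cdot b_3) b_3$ with $b_3 = R e_3$, and use $b_{3,d} = F_d/\norm{F_d}$ from \eqref{eq:bd1bd2bd3}, so that $F_d = \norm{F_d} b_{3,d}$ and hence $f R e_3 = \norm{F_d}(b_{3,d}\cdot b_3) b_3 = F_d + \Delta_f$; this is precisely how $\Delta_f$ enters the equation, as the part of $\norm{F_d}(b_{3,d}\cdot b_3)b_3$ not already contained in $F_d$. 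Substituting the expression \eqref{eq:Fd} for $F_d$ and cancelling the $-g e_3$ and $-\ddot{p}_d$ terms then yields \eqref{eq:ev_dot}.

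The three rotational identities all rest on one computation: setting $Q \defas R_d^{\intercal} R$, I would first show $\dot{Q} = Q \hat{e}_\omega$. Indeed, $\dot{Q} = \dot{R}_d^{\intercal} R + R_d^{\intercal}\dot{R} = -\hat{\omega}_d Q + Q\hat{\omega}$ by \eqref{eq:Rdot}, \eqref{eq:wd} and \eqref{hat0}; then $-\hat{\omega}_d Q = -Q(Q^{\intercal}\hat{\omega}_d Q) = -Q\,\widehat{Q^{\intercal}\omega_d}$ by \eqref{hat4}, so $\dot{Q} = Q(\hat{\omega} - \widehat{Q^{\intercal}\omega_d}) = Q\,\widehat{(\omega - R^{\intercal}R_d\omega_d)} = Q\hat{e}_\omega$ by the definition \eqref{eqn:eW}. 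From this, \eqref{eq:Psi_dot} follows via $\dot{\Psi} = -\frac{1}{2}\tr[\dot{Q}] = -\frac{1}{2}\tr[Q\hat{e}_\omega] = \frac{1}{2} e_\omega^{\intercal}(Q - Q^{\intercal})^{\vee} = e_R\cdot e_\omega$, using \eqref{hat2} and $\hat{e}_R = \frac{1}{2}(Q - Q^{\intercal})$. For \eqref{eq:eR_dot}, differentiating $\hat{e}_R = \frac{1}{2}(Q - Q^{\intercal})$ and using $(Q\hat{e}_\omega)^{\intercal} = -\hat{e}_\omega Q^{\intercal}$ gives $\dot{\hat{e}}_R = \frac{1}{2}(Q\hat{e}_\omega + \hat{e}_\omega Q^{\intercal})$; applying \eqref{hat3} with $A = Q^{\intercal}$ rewrites the right-hand side as $\frac{1}{2}\big((\tr[Q^{\intercal}]\mathrm{I}_3 - Q^{\intercal})e_\omega\big)^{\wedge}$, and taking the vee map (recalling $Q^{\intercal} = R^{\intercal}R_d$) gives $\dot{e}_R = \mathcal{C}(t) e_\omega$.

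Finally, for \eqref{eq:eW_dot} I would compute $\dot{e}_\omega = \dot{\omega} - \frac{\mathrm{d}}{\mathrm{d}t}(R^{\intercal}R_d\omega_d)$. From $\frac{\mathrm{d}}{\mathrm{d}t}(R^{\intercal}R_d) = -\hat{\omega}R^{\intercal}R_d + R^{\intercal}R_d\hat{\omega}_d$ together with $\hat{\omega}_d\omega_d = \omega_d\times\omega_d = 0$ (by \eqref{hat1}), one gets $\frac{\mathrm{d}}{\mathrm{d}t}(R^{\intercal}R_d\omega_d) = -\hat{\omega}R^{\intercal}R_d\omega_d + R^{\intercal}R_d\dot{\omega}_d$. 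On the other hand, substituting the torque law \eqref{eqn:M} into \eqref{eq:wdot} cancels the $\omega\times J\omega$ terms and leaves $\dot{\omega} = J^{-1}(-k_R e_R - k_\omega e_\omega) - \hat{\omega}R^{\intercal}R_d\omega_d + R^{\intercal}R_d\dot{\omega}_d$; subtracting the two expressions gives \eqref{eq:eW_dot}.

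I expect the rotational part to be the main obstacle, specifically obtaining the clean identity $\dot{Q} = Q\hat{e}_\omega$, which is the linchpin for both \eqref{eq:Psi_dot} and \eqref{eq:eR_dot}: it hinges on the conjugation rule \eqref{hat4} together with the precise $R^{\intercal}R_d$ factor in the definition \eqref{eqn:eW} of $e_\omega$. The translational identities and $\dot{e}_\omega$ are then essentially bookkeeping once the control laws are inserted, the only subtleties being the emergence of $\Delta_f$ in \eqref{eq:ev_dot} and the vanishing of $\hat{\omega}_d\omega_d$.
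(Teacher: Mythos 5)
Your proposal is correct and follows essentially the same route as the paper's proof: splitting $fRe_3$ as $F_d+\Delta_f$ for the translational part, establishing $\dot{Q}=Q\hat{e}_\omega$ via \eqref{hat4} for $Q=R_d^{\intercal}R$, and then deriving \eqref{eq:Psi_dot}, \eqref{eq:eR_dot}, and \eqref{eq:eW_dot} with \eqref{hat2}, \eqref{hat3}, and the cancellation $\hat{\omega}_d\omega_d=0$. The only difference is a trivial reordering of where $\hat{\omega}_d\omega_d=0$ is invoked in the $\dot{e}_\omega$ computation.
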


The following technical results will be useful in proving the stability of the error evolution equations. For these results, we assume that the functions $p$, $v$, $R$, and $\omega$, given by \eqref{eq:pdot}--\eqref{eq:wdot}, with \eqref{eqn:ep}--\eqref{eqn:eW} and \eqref{eqn:f}--\eqref{eq:bd1bd2bd3}, exist over $I$ and that  condition \eqref{eq:Well-DefinednessRequirement} holds over $I$. The result below follows by a direct application of the triangle inequality.
\begin{lemma}\label{lem:BoundingForce}
Assume \eqref{eq:BoundingAcceleration} holds and  let $t\in I$. Then, 
$$
\norm{F_{d}(t)}\leq \norm{k_{p}e_{p}(t)+k_{v}e_{v}(t)}+m\norm{a_{\max}}.
$$
    
\end{lemma}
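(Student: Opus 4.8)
The plan is to combine the explicit formula for $F_d$ in \eqref{eq:Fd} with the triangle inequality and the componentwise acceleration bound \eqref{eq:BoundingAcceleration}. First I would rewrite \eqref{eq:Fd} by grouping terms as
$$
F_{d}(t) = -\bigl(k_{p}e_{p}(t)+k_{v}e_{v}(t)\bigr) + m\bigl(g e_{3}+\ddot{p}_{d}(t)\bigr),
$$
which is valid since $p$, $v$, $R$, $\omega$ exist over $I$ and \eqref{eq:Well-DefinednessRequirement} holds, so all the quantities involved are well-defined. Applying the triangle inequality for $\norm{\cdot}$ then yields
$$
\norm{F_{d}(t)} \leq \norm{k_{p}e_{p}(t)+k_{v}e_{v}(t)} + m\,\norm{g e_{3}+\ddot{p}_{d}(t)}.
$$

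The only step requiring a small remark is bounding $\norm{g e_{3}+\ddot{p}_{d}(t)}$ by $\norm{a_{\max}}$. Assumption \eqref{eq:BoundingAcceleration} gives the componentwise inequality $\abs{g e_{3}+\ddot{p}_{d}(t)} \leq a_{\max}$, and since both sides are nonnegative vectors, squaring componentwise and summing gives $\norm{g e_{3}+\ddot{p}_{d}(t)}^{2} = \sum_{i=1}^{3}(g e_{3}+\ddot{p}_{d}(t))_{i}^{2} \leq \sum_{i=1}^{3}(a_{\max})_{i}^{2} = \norm{a_{\max}}^{2}$, hence $\norm{g e_{3}+\ddot{p}_{d}(t)} \leq \norm{a_{\max}}$. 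Substituting this into the previous display completes the proof.

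There is essentially no hard part here: the argument is a direct application of the triangle inequality as the lemma statement already advertises. The only thing to be careful about is not conflating the componentwise absolute value $\abs{\cdot}$ (used in \eqref{eq:BoundingAcceleration}) with the Euclidean norm $\norm{\cdot}$, and explicitly justifying the monotonicity passage from one to the other, which is the short computation above.
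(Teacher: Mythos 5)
Your proof is correct and follows exactly the route the paper intends: the paper simply remarks that the lemma ``follows by a direct application of the triangle inequality,'' and your argument is that application, with the additional (worthwhile) care of justifying the passage from the componentwise bound $\abs{g e_{3}+\ddot{p}_{d}(t)}\leq a_{\max}$ to the Euclidean bound $\norm{g e_{3}+\ddot{p}_{d}(t)}\leq \norm{a_{\max}}$.
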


\begin{lemma}[See the proof in the Appendix]\label{lem:BoundingC}
Let $t\in I$. Then, $\norm{\mathcal{C}(t)}\leq 1.
    $
\end{lemma}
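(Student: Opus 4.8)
The plan is to reduce the bound to an eigenvalue computation. Fix $t\in I$ and set $Q\defas R^{\intercal}(t)R_{d}(t)$, which lies in $\SO$ since $R(t),R_{d}(t)\in\SO$; then $\mathcal{C}(t)=\tfrac12(\tr(Q)\mathrm{I}_{3}-Q)$ and $\norm{\mathcal{C}(t)}^{2}=\overline{\lambda}(\mathcal{C}^{\intercal}(t)\mathcal{C}(t))$. Expanding the product and using $Q^{\intercal}Q=\mathrm{I}_{3}$ gives the identity $\mathcal{C}^{\intercal}(t)\mathcal{C}(t)=\tfrac14\big((\tr(Q)^{2}+1)\mathrm{I}_{3}-\tr(Q)(Q+Q^{\intercal})\big)$, so it suffices to understand the symmetric part $Q+Q^{\intercal}$ and the scalar $\tr(Q)$.

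For this I would invoke Lemma \ref{lem:RodriguesFormula}: write $Q=\exp(\hat{x})$ for some $x\in\mathbb{R}^{3}$, put $\theta\defas\norm{x}$ and, when $\theta\neq 0$, $n\defas x/\theta$ (a unit vector). Substituting $\hat{x}=\theta\hat{n}$ and the identity $\hat{n}^{2}=nn^{\intercal}-\mathrm{I}_{3}$ (valid for unit $n$) into the Euler--Rodrigues formula yields $Q=\cos\theta\,\mathrm{I}_{3}+\sin\theta\,\hat{n}+(1-\cos\theta)nn^{\intercal}$, whence, using \eqref{hat0}, $Q+Q^{\intercal}=2\cos\theta\,\mathrm{I}_{3}+2(1-\cos\theta)nn^{\intercal}$ and $\tr(Q)=1+2\cos\theta$. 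The degenerate case $\theta=0$ gives $Q=\mathrm{I}_{3}$ and $\mathcal{C}(t)=\mathrm{I}_{3}$, so $\norm{\mathcal{C}(t)}=1$ directly.

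Plugging these two expressions into the identity for $\mathcal{C}^{\intercal}(t)\mathcal{C}(t)$ and simplifying, $\mathcal{C}^{\intercal}(t)\mathcal{C}(t)$ becomes a linear combination of $\mathrm{I}_{3}$ and the rank-one projector $nn^{\intercal}$; hence it is symmetric, $n$ is an eigenvector, and $n^{\perp}$ is a two-dimensional eigenspace. The (routine) trigonometric bookkeeping produces the eigenvalue $\cos^{2}\theta$ along $n$ and the double eigenvalue $\tfrac12(1+\cos\theta)$ on $n^{\perp}$; both lie in $[0,1]$, so $\overline{\lambda}(\mathcal{C}^{\intercal}(t)\mathcal{C}(t))\leq 1$ and therefore $\norm{\mathcal{C}(t)}\leq 1$. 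An equivalent, slightly shorter route avoids forming $\mathcal{C}^{\intercal}\mathcal{C}$: for $y\in\mathbb{R}^{3}$ decompose $y=\alpha n+z$ with $z\perp n$, note $Qn=n$ and $Qz\in n^{\perp}$ so that $y^{\intercal}Qy=\alpha^{2}+\cos\theta\,\norm{z}^{2}$, and expand directly to get $\norm{\mathcal{C}(t)y}^{2}=\cos^{2}\theta\,\alpha^{2}+\tfrac12(1+\cos\theta)\norm{z}^{2}\leq\norm{y}^{2}$. The only mild subtlety is that the axis $n$ is not uniquely determined when $\theta\in\{0,\pi\}$, but this is harmless since all the quantities above depend on $Q$ only through $\theta$ and the bound is identical for every admissible choice of $n$; I expect the simplification of the trigonometric expressions to be the only non-trivial step.
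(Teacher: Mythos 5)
Your proof is correct and follows essentially the same route as the paper's: write $R^{\intercal}(t)R_d(t)=\exp(\hat{x})$ via the Euler--Rodrigues formula, expand $\mathcal{C}^{\intercal}(t)\mathcal{C}(t)=\tfrac14\big((\tr(Q)^2+1)\mathrm{I}_3-\tr(Q)(Q+Q^{\intercal})\big)$, and identify the eigenvalues $\cos^{2}\theta$ and $\tfrac12(1+\cos\theta)$ (repeated), all of which lie in $[0,1]$. The paper's appendix proof computes exactly these same eigenvalues, so no further comparison is needed.
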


\begin{lemma}[See the proof in the Appendix]\label{lem:BoundingPsiUsingEr}
   Let $t\in I$. Then, 
   $\norm{e_{R}(t)}^{2}=\Psi(t)(2-\Psi(t)),$
    and  
   $$
   \frac{1}{2}\norm{e_{R}(t)}^{2}\leq \Psi(t).
   $$
   In addition, if there exists a positive constant $\psi$ such that $\Psi(t)\leq \psi<2$. Then, 
   $$
\Psi(t)\leq  \frac{1}{2-\psi}\norm{e_{R}(t)}^{2}.
   $$
\end{lemma}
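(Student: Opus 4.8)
The plan is to reduce everything to a pointwise computation on $\SO$ via the Euler--Rodrigues formula. Fix $t\in I$ and write $Q \defas R_d^{\intercal}(t)R(t)\in\SO$, so that $\Psi(t)=\tfrac12\tr(\mathrm{I}_3-Q)$ and $e_R(t)=\tfrac12(Q-Q^{\intercal})^\vee$. By Lemma \ref{lem:RodriguesFormula} there is $x\in\mathbb{R}^3$ with $Q=\exp(\hat x)$; setting $\theta\defas\norm{x}$ and (when $\theta\neq 0$) $\hat n\defas\hat x/\theta$, the Rodrigues formula gives $Q=\mathrm{I}_3+\sin\theta\,\hat n+(1-\cos\theta)\hat n^2$. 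Using $\tr(\hat n)=0$ and $\tr(\hat n^2)=-2$, one computes $\tr(Q)=3-2(1-\cos\theta)=1+2\cos\theta$, hence $\Psi(t)=\tfrac12(3-\tr Q)=1-\cos\theta$. For $e_R$: since $\hat n^2$ is symmetric and $\hat n$ is skew, $Q-Q^{\intercal}=2\sin\theta\,\hat n$, so $e_R(t)=\sin\theta\,n$ and $\norm{e_R(t)}^2=\sin^2\theta=(1-\cos\theta)(1+\cos\theta)=\Psi(t)(2-\Psi(t))$. This is the first claimed identity (and it also holds trivially when $\theta=0$, both sides being $0$).

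The remaining two inequalities follow purely algebraically from $\norm{e_R(t)}^2=\Psi(t)(2-\Psi(t))$, using only that $0\le\Psi(t)\le 2$ (which itself follows from $\Psi(t)=1-\cos\theta$, or directly from the identity since $\norm{e_R(t)}^2\ge 0$). For the first, since $2-\Psi(t)\le 2$ we get $\norm{e_R(t)}^2=\Psi(t)(2-\Psi(t))\le 2\Psi(t)$, i.e.\ $\tfrac12\norm{e_R(t)}^2\le\Psi(t)$. For the last, if $\Psi(t)\le\psi<2$ then $2-\Psi(t)\ge 2-\psi>0$, so $\norm{e_R(t)}^2=\Psi(t)(2-\Psi(t))\ge(2-\psi)\Psi(t)$, which rearranges to $\Psi(t)\le\tfrac{1}{2-\psi}\norm{e_R(t)}^2$.

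The only mild subtlety — and the one place to be careful — is the $\theta=0$ (equivalently $Q=\mathrm{I}_3$) degenerate case of the Rodrigues parametrization, where $\hat n$ is undefined; there one just checks directly that $\Psi(t)=0$, $e_R(t)=0$, and all three assertions hold trivially. A secondary point worth a line is noting that, although Lemma \ref{lem:RodriguesFormula} as stated only guarantees \emph{existence} of some $x$, the quantities $\tr Q$ and $(Q-Q^{\intercal})^\vee$ are intrinsic to $Q$, so the computation is independent of the choice of $x$; equivalently, one can observe $\tr Q$ and $\norm{(Q-Q^\intercal)^\vee}$ depend only on the rotation angle $\theta$ modulo the sign/branch ambiguity, which does not affect $\cos\theta$ or $\sin^2\theta$. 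Beyond that the argument is entirely elementary, so I do not anticipate a genuine obstacle.
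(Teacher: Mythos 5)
Your proof is correct and follows essentially the same route as the paper's: both reduce to the Euler--Rodrigues parametrization to obtain $\Psi(t)=1-\cos\theta$ and $\norm{e_R(t)}^2=\sin^2\theta$, from which the identity and the two inequalities follow. Your derivation of the inequalities by multiplying through the identity (rather than dividing by $2-\Psi(t)$, as the paper does with a ``without loss of generality $\norm{x}\neq(2n+1)\pi$'' caveat) is a slightly cleaner handling of the degenerate case, but it is not a different argument.
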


\begin{lemma}[See the proof in the Appendix]\label{lem:BoundingTheDifferenceBetweenB3andB3d}
Let $t\in I$. Under the assumption that $\Psi(t)\leq \psi$, for some constant $\psi\in \intoo{0,2}$, we have 
  $$\norm{(b_{3,d}(t)\cdot b_{3}(t))b_{3}(t)-b_{3,d}(t)}\leq \sqrt{\frac{2}{2-\psi}}\norm{e_{R}(t)}.
 $$
\end{lemma}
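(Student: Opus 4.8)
The plan is to reduce this vector estimate to the scalar inequality $1-(b_{3,d}(t)\cdot b_3(t))^2\le 2\Psi(t)$ and then invoke Lemma~\ref{lem:BoundingPsiUsingEr}. Write $c\defas b_{3,d}(t)\cdot b_3(t)$. Since $b_3(t)=R(t)e_3$ is a column of a rotation matrix and $b_{3,d}(t)=F_d(t)/\norm{F_d(t)}$ by \eqref{eq:bd1bd2bd3}, both vectors have unit Euclidean norm, so expanding the square gives
\[
\norm{c\,b_3(t)-b_{3,d}(t)}^2 = c^2\norm{b_3(t)}^2 - 2c\,(b_3(t)\cdot b_{3,d}(t)) + \norm{b_{3,d}(t)}^2 = 1-c^2 ,
\]
which is just the Pythagorean identity for the splitting of the unit vector $b_{3,d}(t)$ into its components parallel and orthogonal to $b_3(t)$. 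Putting $Q\defas R_d^{\intercal}(t)R(t)\in\SO$ we have $c = e_3^{\intercal}Qe_3 = Q_{33}$ and, by \eqref{eq:Psi}, $2\Psi(t) = 3-\tr[Q]$, so it suffices to establish $1-Q_{33}^2\le 3-\tr[Q]$ for every $Q\in\SO$.

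The crux is the elementary fact $Q_{33}\ge\tfrac12(\tr[Q]-1)$, which I would obtain from the Euler--Rodrigues formula (Lemma~\ref{lem:RodriguesFormula}): writing $Q=\exp(\hat x)$ with $\alpha=\norm{x}$ and unit axis $n=x/\alpha$ (the case $x=0_3$ being trivial) and using $\hat n^2 = nn^{\intercal}-\mathrm{I}_3$, one has $Q = \cos\alpha\,\mathrm{I}_3 + \sin\alpha\,\hat n + (1-\cos\alpha)nn^{\intercal}$; since $\hat n$ has zero diagonal and $nn^{\intercal}$ has nonnegative diagonal entries summing to $1$, this gives $\tr[Q]=1+2\cos\alpha$ and $Q_{33}=\cos\alpha+(1-\cos\alpha)n_3^2\ge\cos\alpha$. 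Now if $\tr[Q]\le 1$ then $1-Q_{33}^2\le 1\le 3-\tr[Q]$, while if $\tr[Q]>1$ then $Q_{33}\ge\cos\alpha>0$, hence $Q_{33}^2\ge\cos^2\alpha$ and $1-Q_{33}^2\le 1-\cos^2\alpha=(1-\cos\alpha)(1+\cos\alpha)\le 2(1-\cos\alpha)=3-\tr[Q]$. Combining with the first paragraph, $\norm{c\,b_3(t)-b_{3,d}(t)}^2 = 1-c^2 \le 2\Psi(t)$; and since by hypothesis $\Psi(t)\le\psi<2$, Lemma~\ref{lem:BoundingPsiUsingEr} yields $2\Psi(t)\le\frac{2}{2-\psi}\norm{e_R(t)}^2$, so taking square roots finishes the proof.

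I expect the only real obstacle to be obtaining the sharp constant $\sqrt{2/(2-\psi)}$. A crude column-wise estimate only gives $1-c^2\le\norm{b_3(t)-b_{3,d}(t)}^2\le\sum_{i=1}^{3}\norm{b_i(t)-b_{i,d}(t)}^2=4\Psi(t)$, which loses a factor of $2$ and delivers merely the weaker bound $\sqrt{4/(2-\psi)}\,\norm{e_R(t)}$; recovering the claimed constant genuinely uses the rotational structure of $Q=R_d^{\intercal}(t)R(t)$ through $Q_{33}\ge\tfrac12(\tr[Q]-1)$ (equivalently, $Q_{33}+1\ge Q_{11}+Q_{22}$), so the Euler--Rodrigues step is the part I would be most careful about.
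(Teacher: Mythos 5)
Your proof is correct and follows essentially the same route as the paper's: both reduce the claim to the scalar bound $1-Q_{33}^2\le 2\Psi(t)$ for $Q=R_d^{\intercal}(t)R(t)$ via the Euler--Rodrigues formula and then invoke Lemma~\ref{lem:BoundingPsiUsingEr}. The only difference is that the paper computes $1-Q_{33}^2=\alpha_x\Psi(t)(2-\alpha_x\Psi(t))$ exactly, with $\alpha_x=(x_1^2+x_2^2)/\norm{x}^2$, and bounds this by $2\Psi(t)$ directly, which avoids your case split on the sign of $\cos\alpha$.
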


 Now, we present the first main result of this work concerning the dynamics of the error system \eqref{eq:ep_dot}--\eqref{eq:eW_dot}.

\begin{theorem}\label{thm:LyapunovStability} Assume that the functions $p$, $v$, $R$, and $\omega$, given by \eqref{eq:pdot}--\eqref{eq:wdot}, with \eqref{eqn:ep}--\eqref{eqn:eW} and \eqref{eqn:f}--\eqref{eq:bd1bd2bd3}, exist over $I$ and that  condition \eqref{eq:Well-DefinednessRequirement} holds over $I$. Moreover, assume condition \eqref{eq:BoundingAcceleration} holds over $I$. Let $ \alpha_{\psi}\in \intoo{0,1}$ and $\overline{\Psi}\in \intoo{0,2}$ be specified parameters and assume the following conditions hold:
 \begin{align}   
\label{eq:InitialPsi}
\Psi(0)&\leq \alpha_{\psi}\overline{\Psi},\\
\label{eq:InitialEomega}
\frac{1}{2}e_{\omega}^{\intercal}(0)Je_\omega(0) &\leq   {k_{R}}(1-\alpha_{\psi})\overline{\Psi}.
\end{align}

 Define
\begin{align}
\label{eq:M1}
M_{1}&\defas\frac{1}{2}\begin{pmatrix}
  k_{p}\mathrm{I}_{3} &c_{1} \mathrm{I}_{3}\\
  c_{1}\mathrm{I}_{3} & m \mathrm{I}_{3}
\end{pmatrix},\\
\label{eq:W1}W_{1}&\defas\begin{pmatrix}
\frac{c_{1}k_{p}}{m} \mathrm{I}_{3}&\frac{c_{1}k_{v}}{2m} \mathrm{I}_{3}\\
\frac{c_{1}k_{v}}{2m} \mathrm{I}_{3}& (k_{v}-c_{1})\mathrm{I}_{3}
\end{pmatrix},\\
\label{eq:M2}
M_{2,1}&\defas\frac{1}{2}\begin{pmatrix}
  k_{R}\mathrm{I}_{3} & c_{2}\mathrm{I}_{3}\\
  c_{2} \mathrm{I}_{3}& J
\end{pmatrix},~
M_{2,2}\defas\frac{1}{2}\begin{pmatrix}
  \frac{{2k_{R}}}{2-\overline{\Psi}}\mathrm{I}_{3} & c_{2}\mathrm{I}_{3}\\
  c_{2}\mathrm{I}_{3} & J
\end{pmatrix},\\
\label{eq:W2}
W_{2}&\defas\begin{pmatrix} {c_{2}k_{R}}J^{-1}& \frac{c_{2}k_{\omega}}{2}J^{-1}\\\frac{c_{2}k_{\omega}}{2}J^{-1}&(k_{w}-c_{2})\mathrm{I}_{3} 
\end{pmatrix},
\end{align}
where $c_{1}$ and $c_{2}$ are  constants, satisfying
\begin{align}\label{eq:Boundc1}
 0<c_{1}& < \min\left(\sqrt{k_{p}m}, \frac{4mk_{p}k_{v}}{k_{v}^2+4mk_{p}}\right),\\ \label{eq:Boundc2}
0<c_2&<\min\left(\sqrt{k_{R}\underline{\lambda}(J)}, \frac{4\underline{\lambda}(J)k_{R}k_{\omega}}{k_{\omega}^2+4\underline{\lambda}(J)k_{R}}\right).
\end{align}
Moreover, define
\begin{align*}
 z_{1}(t)\defas
\begin{pmatrix}
    e_{p}(t)\\
    e_{v}(t)
\end{pmatrix},~z_{2}(t)\defas \begin{pmatrix}
    e_{R}(t)\\
    e_{\omega}(t)
\end{pmatrix},~t\in I
    ,
    \end{align*}
 and 
 \begin{align}   
 \label{eq:V1}
V_{1}(t)&\defas z_{1}^{\intercal}(t)M_{1}z_{1}(t),\\
\label{eq:V2}
V_{2}(t)&\defas\frac{1}{2}e_{\omega}^{\intercal}(t) J e_{\omega}(t)+k_{R}\Psi(t)+c_{2}e_{R}(t)\cdot e_{\omega}(t),\\
\label{eq:V}
V(t)&\defas V_{1}(t)+V_{2}(t),~t\in I.
 \end{align}
 Then, 
$M_{1},W_{1},M_{2,1},M_{2,2},W_{2}\in \mathcal{S}^{6}_{++}$.  Furthermore, for all $t\in I$,   
\begin{align}
\label{eq:V2PositiveDefinite}
z_{2}^{\intercal}(t)  M_{2,1}z_{2}(t) \leq V_{2}(t)&\leq z_{2}^{\intercal}(t)  M_{2,2}z_{2}(t),
\end{align}
\begin{align}\label{eq:V2Bound}
  V_{2}(t)&\leq V_{2}(0)\e{-2\beta t},\\
\label{eq:VBound}
    \sqrt{V(t)}&\leq \mathcal{L}(V_{1}(0),V_{2}(0),t ),
   \end{align}
   where, for $x,y,t\in \mathbb{R}_{+}$, 
   \begin{align} \label{eq:L}
    \mathcal{L}(x,y,t )&\defas \mathcal{L}_{1}(x,y,t)
+\mathcal{L}_{2}(y,t),\\
\label{eq:L1}
   \mathcal{L}_{1}(x,y,t)&\defas\e{\frac{\alpha_{1}\sqrt{y}}{2\beta}}\sqrt{x+y}\e{-\frac{\alpha_{0}}{2}t},
   \\
   \label{eq:L2}
   \mathcal{L}_{2}(y,t)&\defas \e{\frac{\alpha_{1}\sqrt{y}}{2\beta}}\frac{\alpha_{2}\sqrt{y}}{2} \e{-\frac{\alpha_{0}}{2}t}\int_{0}^{t}\e{(\frac{\alpha_{0}}{2}-\beta)s}\mathrm{d}s,
   \end{align}
and $\beta$, $\alpha_{0}$, $\alpha_{1}$, and $\alpha_{2}$, are given by equations \eqref{eq:beta}, \eqref{eq:alpha0}, \eqref{eq:alpha1}, and \eqref{eq:alpha2}, respectively.
\begin{figure*}
\begin{align}\label{eq:beta}
   \beta&\defas \frac{\underline{\lambda}(M_{2,2}^{-\frac{1}{2}}W_{2} M_{2,2}^{-\frac{1}{2}})}{2},\\  \label{eq:alpha0}
\alpha_{0}&\defas\min\left(\underline{\lambda}(M_{1}^{-\frac{1}{2}}W_{1} M_{1}^{-\frac{1}{2}}),2\beta\right),\\
 \label{eq:alpha1} \alpha_{1}&\defas\norm{[\frac{c_{1}}{m}\mathrm{I}_{3},\mathrm{I}_{3}]M_{1}^{-\frac{1}{2}}}\norm{[k_{p}\mathrm{I}_{3},k_{v}\mathrm{I}_{3}]M_{1}^{-\frac{1}{2}}}\norm{[\mathrm{I}_{3}, 0_{3\times 3}]M_{2,1}^{-\frac{1}{2}}}\sqrt{\frac{2}{2-\overline{\Psi}}},\\ \label{eq:alpha2}
   \alpha_{2}&\defas m\norm{a_{\max}}\norm{[\frac{c_{1}}{m}\mathrm{I}_{3},\mathrm{I}_{3}]M_{1}^{-\frac{1}{2}}}\norm{[\mathrm{I}_{3},0_{3\times 3}]M_{2,1}^{-\frac{1}{2}}}\sqrt{\frac{2}{2-\overline{\Psi}}}.
\end{align}
\end{figure*}
\end{theorem}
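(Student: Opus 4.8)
The plan is to exploit the cascade structure of the closed-loop error system \eqref{eq:ep_dot}--\eqref{eq:eW_dot} (Proposition \ref{Prop:ErrorDynamics}): the attitude error $z_{2}=(e_{R},e_{\omega})$ is governed by the self-contained equations \eqref{eq:Psi_dot}--\eqref{eq:eW_dot}, which do not involve $(e_{p},e_{v})$, and will be shown to decay exponentially; the translational error $z_{1}=(e_{p},e_{v})$ obeys \eqref{eq:ep_dot}--\eqref{eq:ev_dot} with a forcing term $\Delta_{f}$ whose size is slaved to $\norm{e_{R}}$ and to $\norm{F_{d}(t)}$. I would dispatch the positive-definiteness claims first: each of $M_{1},W_{1},M_{2,1},M_{2,2}$ is a $2\times 2$ block matrix all of whose blocks are scalar multiples of $\mathrm{I}_{3}$ except possibly for a $J$-block, so the Schur complement of its $(1,1)$-block reduces positive definiteness to the scalar inequalities $m>c_{1}^{2}/k_{p}$ (for $M_{1}$), $k_{p}k_{v}>c_{1}(k_{p}+k_{v}^{2}/(4m))$ (for $W_{1}$, after dividing by $c_{1}/m>0$), and $\underline{\lambda}(J)>c_{2}^{2}/k_{R}$ (for $M_{2,1}$; then $M_{2,2}\succeq M_{2,1}$ since $\tfrac{2}{2-\overline{\Psi}}>1$), which are exactly \eqref{eq:Boundc1}--\eqref{eq:Boundc2}. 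For $W_{2}$ the Schur complement of its $(1,1)$-block is $(k_{\omega}-c_{2})\mathrm{I}_{3}-\tfrac{c_{2}k_{\omega}^{2}}{4k_{R}}J^{-1}$, which is positive definite precisely under the second inequality in \eqref{eq:Boundc2}, after bounding $\overline{\lambda}(J^{-1})=1/\underline{\lambda}(J)$.

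The conceptually crucial step is the a priori bound $\Psi(t)\le\overline{\Psi}$ for all $t\in I$, which is what legitimizes Lemmas \ref{lem:BoundingPsiUsingEr} and \ref{lem:BoundingTheDifferenceBetweenB3andB3d} and the upper estimate in \eqref{eq:V2PositiveDefinite}. The key observation is that the \emph{reduced} function $\overline{V}_{2}\defas\tfrac12 e_{\omega}^{\intercal}Je_{\omega}+k_{R}\Psi$, i.e.\ $V_{2}$ with the cross term removed, is non-increasing: by \eqref{eq:Psi_dot} and \eqref{eq:eW_dot},
\[
\dot{\overline{V}}_{2}=e_{\omega}^{\intercal}J\dot{e}_{\omega}+k_{R}\dot{\Psi}=-k_{R}\,e_{\omega}\cdot e_{R}-k_{\omega}\norm{e_{\omega}}^{2}+k_{R}\,e_{R}\cdot e_{\omega}=-k_{\omega}\norm{e_{\omega}}^{2}\le 0 ,
\]
so that, using \eqref{eq:InitialPsi}--\eqref{eq:InitialEomega}, $k_{R}\Psi(t)\le\overline{V}_{2}(t)\le\overline{V}_{2}(0)\le k_{R}(1-\alpha_{\psi})\overline{\Psi}+k_{R}\alpha_{\psi}\overline{\Psi}=k_{R}\overline{\Psi}$, hence $\Psi(t)\le\overline{\Psi}$; no continuity/bootstrapping argument is needed because the offending cross term cancels in $\dot{\overline{V}}_{2}$. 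With $\Psi(t)\le\overline{\Psi}<2$ secured, Lemma \ref{lem:BoundingPsiUsingEr} gives $\tfrac12\norm{e_{R}}^{2}\le\Psi\le\tfrac{1}{2-\overline{\Psi}}\norm{e_{R}}^{2}$, and substituting these two bounds on $k_{R}\Psi$ into $V_{2}=\tfrac12 e_{\omega}^{\intercal}Je_{\omega}+k_{R}\Psi+c_{2}\,e_{R}\cdot e_{\omega}$ gives \eqref{eq:V2PositiveDefinite}. Differentiating the full $V_{2}$ and using \eqref{eq:Psi_dot}--\eqref{eq:eW_dot} together with $e_{\omega}^{\intercal}\mathcal{C}e_{\omega}\le\norm{\mathcal{C}}\norm{e_{\omega}}^{2}\le\norm{e_{\omega}}^{2}$ (Lemma \ref{lem:BoundingC}), the $\pm k_{R}\,e_{R}\cdot e_{\omega}$ terms again cancel and one obtains $\dot{V}_{2}\le-z_{2}^{\intercal}W_{2}z_{2}$; then Lemma \ref{Lem:EstimatesPSD}(b) with $(M,W)=(M_{2,2},W_{2})$ and the definition \eqref{eq:beta} of $\beta$ give $z_{2}^{\intercal}W_{2}z_{2}\ge 2\beta\,z_{2}^{\intercal}M_{2,2}z_{2}\ge 2\beta V_{2}$, so $\dot{V}_{2}\le-2\beta V_{2}$ and Gr\"onwall's lemma (Lemma \ref{lem:GronWallInequality}) yields \eqref{eq:V2Bound}.

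For the translational block I would first compute, from \eqref{eq:ep_dot}--\eqref{eq:ev_dot}, that $\dot{V}_{1}=-z_{1}^{\intercal}W_{1}z_{1}+[\tfrac{c_{1}}{m}\mathrm{I}_{3},\ \mathrm{I}_{3}]z_{1}\cdot\Delta_{f}$, the quadratic part collecting exactly the entries of \eqref{eq:W1}. The forcing term is then handled by a chain of already-available estimates: Cauchy--Schwarz and Lemma \ref{Lem:EstimatesPSD}(c) give $\bigl|[\tfrac{c_{1}}{m}\mathrm{I}_{3},\mathrm{I}_{3}]z_{1}\cdot\Delta_{f}\bigr|\le\norm{[\tfrac{c_{1}}{m}\mathrm{I}_{3},\mathrm{I}_{3}]M_{1}^{-\frac{1}{2}}}\sqrt{V_{1}}\,\norm{\Delta_{f}}$; Lemma \ref{lem:BoundingTheDifferenceBetweenB3andB3d} (legitimate since $\Psi\le\overline{\Psi}$) gives $\norm{\Delta_{f}}\le\norm{F_{d}}\sqrt{\tfrac{2}{2-\overline{\Psi}}}\,\norm{e_{R}}$; Lemma \ref{lem:BoundingForce} with \eqref{eq:BoundingAcceleration}, together with Lemma \ref{Lem:EstimatesPSD}(c), gives $\norm{F_{d}}\le\norm{[k_{p}\mathrm{I}_{3},k_{v}\mathrm{I}_{3}]M_{1}^{-\frac{1}{2}}}\sqrt{V_{1}}+m\norm{a_{\max}}$; and the lower estimate in \eqref{eq:V2PositiveDefinite} with \eqref{eq:V2Bound} gives $\norm{e_{R}}\le\norm{[\mathrm{I}_{3},0_{3\times 3}]M_{2,1}^{-\frac{1}{2}}}\sqrt{V_{2}(0)}\,\e{-\beta t}$. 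Multiplying these out and matching the resulting operator-norm products against the definitions \eqref{eq:alpha1}--\eqref{eq:alpha2} of $\alpha_{1},\alpha_{2}$ produces $\dot{V}_{1}\le-z_{1}^{\intercal}W_{1}z_{1}+\bigl(\alpha_{1}V_{1}+\alpha_{2}\sqrt{V_{1}}\bigr)\sqrt{V_{2}(0)}\,\e{-\beta t}$.

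Finally, summing the bounds on $\dot{V}_{1}$ and $\dot{V}_{2}$, applying Lemma \ref{Lem:EstimatesPSD}(b) to the terms $z_{1}^{\intercal}W_{1}z_{1}$ and $z_{2}^{\intercal}W_{2}z_{2}$, and using $V_{1}\le V$, $\sqrt{V_{1}}\le\sqrt{V}$ and the definition \eqref{eq:alpha0} of $\alpha_{0}$, yields
\[
\dot{V}(t)\le-\bigl(\alpha_{0}-\alpha_{1}\sqrt{V_{2}(0)}\,\e{-\beta t}\bigr)V(t)+\alpha_{2}\sqrt{V_{2}(0)}\,\e{-\beta t}\sqrt{V(t)} .
\]
Provided $V(t)>0$ on $I$, Lemma \ref{lem:SpecificBernoulliInequality} with $(a_{0},a_{1},a_{2},c)=(\alpha_{0},\alpha_{1}\sqrt{V_{2}(0)},\alpha_{2}\sqrt{V_{2}(0)},\beta)$ gives exactly \eqref{eq:VBound}--\eqref{eq:L2} once $\sqrt{V(0)}$ is written as $\sqrt{V_{1}(0)+V_{2}(0)}$; the degenerate cases $V_{2}(0)=0$ (then $z_{2}\equiv 0$, $\Delta_{f}\equiv 0$, and \eqref{eq:VBound} follows from Gr\"onwall applied to $\dot{V}_{1}\le-\alpha_{0}V_{1}$) and $V(\tau)=0$ for some $\tau\in I$ (then $e_{p},e_{v},e_{R},e_{\omega}$ and $\Psi$ all vanish at $\tau$, the error system sits at its equilibrium from $\tau$ onward, and \eqref{eq:VBound} holds trivially there) are treated separately. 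I expect the main obstacle to be this last pair of steps: the translational error is forced by $\Delta_{f}$, whose magnitude already carries a factor $\norm{F_{d}}$ --- hence $\sqrt{V_{1}}$ --- multiplied by the decaying factor $\norm{e_{R}}\lesssim\sqrt{V_{2}(0)}\,\e{-\beta t}$, so the differential inequality for $V$ is self-referential, and closing it is precisely what the Bernoulli-type Lemma \ref{lem:SpecificBernoulliInequality} is built for; the delicate work is the bookkeeping that makes the constants emerge as the specific operator-norm products \eqref{eq:alpha1}--\eqref{eq:alpha2} together with the verification of the positivity hypothesis. A secondary subtlety --- that every $\overline{\Psi}$-dependent estimate is valid only once $\Psi\le\overline{\Psi}$ is known on the whole of $I$ --- is exactly what the cancellation in $\dot{\overline{V}}_{2}$ disposes of cleanly.
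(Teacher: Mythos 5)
Your proposal is correct and follows essentially the same route as the paper's proof: the Schur-complement verification of positive definiteness, the monotone reduced function $\tfrac12 e_{\omega}^{\intercal}Je_{\omega}+k_{R}\Psi$ to secure $\Psi\le\overline{\Psi}$ without bootstrapping, the estimate $\dot{V}_{2}\le -z_{2}^{\intercal}W_{2}z_{2}$ leading to exponential decay of $V_{2}$, the forced inequality for $\dot{V}_{1}$ with the constants assembled into $\alpha_{1},\alpha_{2}$, and the closing Bernoulli-type differential inequality via Lemma \ref{lem:SpecificBernoulliInequality}. Your separate handling of the degenerate cases $V_{2}(0)=0$ and $V(\tau)=0$ is, if anything, slightly more careful than the paper's.
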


\begin{proof}
    The proof presented herein is based on the proof presented in the archived version of  \cite{lee2010geometric} with significant refinements and adaptations. 
The conditions on $c_{1}$ and $c_{2}$, given by \eqref{eq:Boundc1} and \eqref{eq:Boundc2}, respectively, ensure  the positive definiteness of  $M_{1}$, $W_{1}$, $M_{2,1}$,  $M_{2,2}$, and  $W_{2}$ and that can verified  through Schur complement \cite[Theorem~1.12,~p.~34]{zhang2006schur}. 

Next, we analyze  $V_{2}$.    Define $\tilde{V}_2\colon I\rightarrow \mathbb{R}_{+}$ as
$$
\tilde{V}_2(t)= \frac{1}{2} e_\omega(t) \cdot J e_\omega(t) + k_{R}\Psi(t),~t\in I.
$$
Then, considering  \eqref{eq:Psi_dot} and \eqref{eq:eW_dot}, we have 
\begin{align*} 
\dot{\tilde{V}}_{2}(t)
& = e_\omega(t) \cdot J \dot{e}_\omega(t) + k_{R}\dot{\Psi}(t)\\
         & = e_\omega(t) \cdot (-k_R e_R(t) - k_\omega e_\omega(t)) + k_{R}e_R(t) \cdot e_\omega(t)\\
          &=-k_{\omega} e_\omega(t) \cdot  e_\omega(t)\leq0,~t\in I.
          \end{align*}
This shows that $\tilde{V}_{2}$ is a decreasing function of time, where  $\tilde{V}_{2}(t)\leq \tilde{V}_{2}(0),~t\in I$.  Then, by considering  conditions \eqref{eq:InitialPsi} and \eqref{eq:InitialEomega}, we get  
\begin{align*}
\tilde{V}_{2}(0)=& \frac{1}{2}e_{\omega}(0)\cdot J e_{\omega}(0)+k_{R}\Psi(0)
\\ 
& \leq k_{R}(1-\alpha_{\psi})\overline{\Psi}+k_{R}\alpha_{\psi}\overline{\Psi}=k_{R}\overline{\Psi}.
\end{align*}
Therefore, and as $\Psi(\cdot)$ and $e_{\omega}^{\intercal}(\cdot)J e_{\omega}(\cdot)$ are nonnegative functions of time, 
$\Psi(t)\leq  {\tilde{V}_{2}(t)}/{k_{R} } \leq {\tilde{V}_{2}(0)}/{k_{R} }\leq \overline{\Psi}< 2 $
for all $t\in I$. So, with the assumptions on $e_{\omega}(0)$ and $\Psi(0)$ given in \eqref{eq:InitialEomega} and \eqref{eq:InitialPsi}, respectively, we have, using Lemma \ref{lem:BoundingPsiUsingEr}, 
$$
\frac{1}{2}\norm{e_{R}(t)}^2\leq \Psi(t)\leq \frac{1}{2-\overline{\Psi}}\norm{e_{R}(t)}^{2},~t\in I,
$$ 
and consequently, equation \eqref{eq:V2PositiveDefinite} follows.

Now, we illustrate the exponential decay of $V_{2}$. Differentiating $V_{2}$ with respect to time, where we substitute equations \eqref{eq:eR_dot}, \eqref{eq:eW_dot},  and \eqref{eq:Psi_dot} in, yields
\begin{align*}
\dot{V}_{2}(t)=&e_{\omega}(t)\cdot(-k_{R}e_{R}(t)-k_{\omega}e_{\omega}(t))+k_{R}e_{R}(t)\cdot e_{\omega}(t)\\
&+c_{2}(\mathcal{C}(t)e_{\omega}(t))\cdot e_{\omega}(t)\\
&+c_{2}e_{R}(t)\cdot J^{-1}(-k_{R}e_{R}(t)-k_{\omega}e_{\omega}(t)),~t\in I.
\end{align*}
Using Lemma \ref{lem:BoundingC} and recalling the definition of $W_{2}$ in \eqref{eq:W2}, we have   
\begin{align*}
\dot{V}_{2}(t)\leq&  -e_{R}(t)\cdot(c_{2}k_{R}J^{-1})e_{R}(t)-(k_{\omega}-c_{2})e_{\omega}(t)\cdot e_{\omega}(t)\\
&-e_{R}(t)\cdot(c_{2}k_{\omega}J^{-1})e_{\omega}(t)=-z_{2}^{\intercal}(t)W_{2}z_{2}(t),~t\in I.
\end{align*}
The above estimate can be rewritten as 
$
\dot{V}_{2}(t)\leq - \norm{W_{2}^{\frac{1}{2}}z_{2}(t)}^{2},~t\in I,$ and  relation \eqref{eq:V2PositiveDefinite} can be written as  $ \norm{M_{2,1}^{\frac{1}{2}}z_{2}(t)}^{2}\leq V_{2}(t)\leq \norm{M_{2,2}^{\frac{1}{2}}z_{2}(t)}^{2},~t\in I.
$
Therefore, using Lemma \ref{Lem:EstimatesPSD}(b),  
$
\dot{V}_{2}(t)\leq  - \underline{\lambda}(M_{2,2}^{-\frac{1}{2}}W_{2} M_{2,2}^{-\frac{1}{2}})V_{2}(t),~t\in I.
$
Using Lemma \ref{lem:GronWallInequality} and recalling the definition of $\beta$ in equation \eqref{eq:beta}, 
we  obtain \eqref{eq:V2Bound}. As a consequence of the bounds  \eqref{eq:V2PositiveDefinite} and \eqref{eq:V2Bound}, and using Lemma \ref{Lem:EstimatesPSD}(c), we have
\begin{equation}\nonumber
\begin{split}
\norm{e_{R}(t)} &\leq \norm{[\mathrm{I}_{3},  0_{3,3}]M_{2,1}^{-\frac{1}{2}}}\sqrt{V_{2}(t)}\\
&\leq \norm{[\mathrm{I}_{3},  0_{3,3}]M_{2,1}^{-\frac{1}{2}}}\sqrt{V_{2}(0)}\e{-\beta t},~t\in I.
\end{split}
\end{equation} 

Now, we analyze $V_{1}$.  Note that $V_{1}$ is nonnegative as a consequence of the positive definiteness of $M_{1}$.  Next, we obtain a bound on the derivative of $V_{1}$. Differentiating $V_{1}$ with respect to time, considering \eqref{eq:ep_dot} and \eqref{eq:ev_dot}, yields
\begin{align*}
 \dot{V}_{1}(t)=&-\frac{c_{1}k_{p}}{m}\norm{e_{p}(t)}^{2}-(k_{v}-c_{1})\norm{e_{v}(t)}^{2}\\
 &-\frac{c_{1}k_{v}}{m}e_{p}\cdot e_{v}(t)+\Delta_{f}(t)\cdot\left(\frac{c_{1}}{m}e_{p}(t)+e_{v}(t)\right),~t\in I.
\end{align*}
By recalling the definition of $W_{1}$ in \eqref{eq:W1}, using the triangular inequality, and bounding $\norm{\Delta_{f}(\cdot)}$ utilizing  Lemmas \ref{lem:BoundingForce} and \ref{lem:BoundingTheDifferenceBetweenB3andB3d}, we have
 \begin{align*}
\dot{V}_{1}(t) \leq& -z_{1}^\intercal(t) W_{1}z_{1}(t)+\norm{\frac{c_{1}}{m}e_{p}(t)
+e_{v}(t)}\times\\&\sqrt{\frac{2}{2-\overline{\Psi}}}(\norm{k_{p}e_{p}(t)+k_{v}e_{v}(t)}+m\norm{a_{\max}})\norm{e_{R}(t)},
\end{align*}
$~t\in I$.  Using Lemma \ref{Lem:EstimatesPSD}(c) and the definition of $V_{1}$, we have 
\begin{align*}
 \norm{\frac{c_{1}}{m}e_{p}(t)+e_{v}(t)} &\leq\norm{[\frac{c_{1}}{m}\mathrm{I}_{3},\mathrm{I}_{3}]M_{1}^{-\frac{1}{2}}} \sqrt{V_{1}(t)},\\ \norm{k_{p}e_{p}(t)+k_{v}e_{v}(t)}&\leq \norm{[k_{p}\mathrm{I}_{3},k_{v}\mathrm{I}_{3}]M_{1}^{-\frac{1}{2}}} \sqrt{V_{1}(t)},~t\in I. 
\end{align*}
Moreover, using Lemma \ref{Lem:EstimatesPSD}(b) and the definition of $V_{1}$, we have
$$
-z_{1}^{\intercal}(t)W_{1}z_{1}(t)\leq -\underline{\lambda}(M_{1}^{-\frac{1}{2}}W_{1} M_{1}^{-\frac{1}{2}})V_{1}(t),~t\in I.
$$
Subsequently, and using the definitions of $\alpha_{1}$ and $\alpha_{2}$ in equations \eqref{eq:alpha1} and \eqref{eq:alpha2},  respectively,  and the bound on $\norm{e_{R}(\cdot)}$ using $\sqrt{V_{2}(\cdot)}$, the derivative of $V_{1}$ can be estimated as 
\begin{align*}
 \dot{V}_{1}(t)\leq&-\underline{\lambda}(M_{1}^{-\frac{1}{2}}W_{1} M_{1}^{-\frac{1}{2}})V_{1}(t)\\
&+\alpha_{1}V_{1}(t) \sqrt{V_{2}(t)}+\alpha_{2}\sqrt{V_{1}(t)}\sqrt{V_{2}(t)},~t\in I.
\end{align*}

Now we consider the dynamics of $V$. Note that the non-negativity of $V_{1}$ and $V_{2}$  implies the non-negativity of $V$. In addition, we note that, using the definition of $\alpha_{0}$ in \eqref{eq:alpha0}, 
\begin{align*}
-\underline{\lambda}(M_{1}^{-\frac{1}{2}}W_{1} M_{1}^{-\frac{1}{2}})V_{1}(t)-\underline{\lambda}(M_{2,2}^{-\frac{1}{2}}W_{2} M_{2,2}^{-\frac{1}{2}})V_{2}(t)& \leq\\
-\alpha_{0}(V_{1}(t)+V_{2}(t))= 
 -\alpha_{0}V(t),~t\in I.
\end{align*}
Combining the estimates of $\dot{V}_{1}$ and $\dot{V}_{2}$ and using the fact that $V_{1}(\cdot)\leq V(\cdot)$, the derivative of $V$ can be estimated as
\begin{align*}  
\dot{V}(t)
\leq& -\underline{\lambda}(M_{1}^{-\frac{1}{2}}W_{1} M_{1}^{-\frac{1}{2}})V_{1}(t)-\underline{\lambda}(M_{2,2}^{-\frac{1}{2}}W_{2} M_{2,2}^{-\frac{1}{2}})V_{2}(t)\\
&+\alpha_{1}V_{1}(t) \sqrt{V_{2}(t)}+\alpha_{2}\sqrt{V_{1}(t)}\sqrt{V_{2}(t)}
\\
\leq&  -\alpha_{0}V(t)
+\alpha_{1}V(t) \sqrt{V_{2}(t)}+\alpha_{2}\sqrt{V(t)}\sqrt{V_{2}(t)},~t\in I.
\end{align*}
Assume that for some $t_{0}\in I$, $V(t_{0})=0$. This implies that $V_{2}(t_{0})=0$, and as $V_{2}(t)$ is non-negative satisfying the condition of Gr\"onwall's lemma, Lemma \ref{lem:GronWallInequality} tells us that $V_{2}(t)=0$ for all $t\in I\cap \intco{t_{0},\infty}$. This consequently indicates that $\dot{V}({t})\leq -\alpha_{0} V(t),~t\in I\cap \intco{t_{0},\infty}$, and using the non-negativity of $V$ and Lemma \ref{lem:GronWallInequality}, we have $V(t)=0,~t\in I\cap \intco{t_{0},\infty}$. Therefore, we may assume without loss of generality that $V(t)>0$ for all $t\in I$. Using the bound on ${V_{2}(\cdot)}$ in \eqref{eq:V2Bound} and rearranging we get 
$$
\dot{V}(t)\leq -(\alpha_{0}-\alpha_{1}\sqrt{V_{2}(0)}\e{-\beta t})V(t)+\alpha_{2}\sqrt{V_{2}(0)}\e{-\beta t}\sqrt{V(t)},
$$
$t\in I$. Using Lemma \ref{lem:SpecificBernoulliInequality}, the proof is complete.

\end{proof}

\subsection{Local exponential stability}
Herein, we state an important implication  of Theorem \ref{thm:LyapunovStability}. Assume $I=\mathbb{R}_{+}$. Note that the bounding function $\mathcal{L}$ given in \eqref{eq:L} is continuous over $\mathbb{R}_{+}^{3}$ and  monotonically increasing with respect to the first two arguments, where  $\mathcal{L}(0,0,t)=0$ for all  $t\in I$.  Moreover, for fixed $x,y\in \mathbb{R}_{+}$, $\mathcal{L}(x,y,t)\rightarrow 0$ as $t\rightarrow \infty$. By bounding the function $\mathcal{L}$, we can show that the zero of the error system is locally exponentially stable. Let $x,y,t\in \mathbb{R}_{+}$.  Knowing that $\alpha_{0}/2$, defined by \eqref{eq:alpha0}, is less than or equal to  $\beta$ given in \eqref{eq:beta}, the function $\mathcal{L}_{2}$ in \eqref{eq:L2} can be bounded as follows:
\begin{align*}
  \mathcal{L}_{2}(y,t) = & \e{\frac{\alpha_{1}\sqrt{y}}{2\beta}}\frac{\alpha_{2}\sqrt{y}}{2} \e{-\frac{\alpha_{0}}{2}t}\int_{0}^{t}\e{(\frac{\alpha_{0}}{2}-\beta)s}\mathrm{d}s\\
  & \leq \e{\frac{\alpha_{1}\sqrt{y}}{2\beta}}\frac{\alpha_{2}\sqrt{y}}{2} \e{-\frac{\alpha_{0}}{2}t}\int_{0}^{t}\e{\nu s}\mathrm{d}s\\
  & \leq \e{\frac{\alpha_{1}\sqrt{y}}{2\beta}}\frac{\alpha_{2}\sqrt{y}}{2\nu } \e{-(\frac{\alpha_{0}}{2}-\nu)t},
\end{align*}
where $\nu \in \intoo{0,\alpha_{0}/2}$. 
We can also bound $\mathcal{L}_{1}$ in \eqref{eq:L1} as follows:
$$
\mathcal{L}_{1}(x,y,t)\leq \e{\frac{\alpha_{1}\sqrt{y}}{2\beta}}\sqrt{x+y}\e{-(\frac{\alpha_{0}}{2}-\nu)t}.
$$
Consequently, we have 
$$
\mathcal{L}(x,y,t)\leq \e{\frac{\alpha_{1}\sqrt{y}}{2\beta}}\left(\sqrt{x+y}+\frac{\alpha_{2}\sqrt{y}}{2\nu}\right)\e{-(\frac{\alpha_{0}}{2}-\nu)t},
$$
and, using Theorem \ref{thm:LyapunovStability},
$$
\sqrt{V(t)}\leq \sqrt{V(0)}\e{\frac{\alpha_{1}\sqrt{V(0)}}{2\beta}}\left(1+\frac{\alpha_{2}}{2\nu}\right)\e{-(\frac{\alpha_{0}}{2}-\nu)t},~t\in I.
$$
The above inequality is valid whenever $(e_{p}(0),e_{v}(0), e_{R}(0),e_{\omega}(0))$  satisfies   \eqref{eq:InitialPsi},  and \eqref{eq:InitialEomega}, which implies, with the help of the definitions of $V_{1}$, $V_{2}$, and $V$ given by \eqref{eq:V1}--\eqref{eq:V}, the positive definiteness of $M_{1}$ and $M_{2,1}$, property \eqref{eq:V2PositiveDefinite}, and Lemma \ref{Lem:EstimatesPSD}(a),  the local exponential stability of the error system \eqref{eq:ep_dot}- \eqref{eq:eW_dot}; see, e.g., \cite[Chapter~5]{sastry2013nonlinear}. 
While a similar result was derived in previous works, e.g., \cite{lee2010geometric},  additional assumptions were imposed on the control gains and the parameters $c_{1}$ and $c_{2}$. Our result herein is stronger in the sense that we have local exponential stability for any positive choice of the control  gains $k_{p}$, $k_{v}$, $k_{R}$, and $k_{\omega}$. 

\begin{remark}
\label{sec:Issue}
In an archived and detailed version of \cite{lee2010geometric}, proving the  exponential decay of the function $V$ was attempted by  asserting that, under some technical assumptions, $\norm{e_{v}(\cdot)}$ is time-bounded. That assertion is based on differentiating the function $\tilde{V}_{1}(\cdot)=\frac{1}{2}me_{v}(\cdot)\cdot e_{v}(\cdot)$, a version of $V_{1}$ with $c_{1}$ and $k_{p}$ are set to be zero, and showing that $\dot{\tilde{V}}_{1}$ is negative when $\norm{e_{v}(\cdot)}$ is sufficiently large. Unfortunately, the estimate used in demonstrating the negativity of $\dot{\tilde{V}}_{1}$ is incorrect as it was derived without considering the fact that $\dot{e}_{v}$ should satisfy \eqref{eq:ev_dot}, where $k_{p}$ is still present, and that compromises the correctness of the stability proof.
  \end{remark}
  
\subsection{Uniform bounds}
In this section, we derive, using Theorem \ref{thm:LyapunovStability}, some uniform bounds that  can then be used in the  synthesis of $p_{d}$. Note that  $\mathcal{L}$ given in \eqref{eq:L}, when the first two arguments are fixed, attains a maximum at a finite $t$ that can be computed analytically. We consequently  have the following uniform bound.
\begin{corollary} \label{cor:V_bound} Assume that the functions $p$, $v$, $R$, and $\omega$, given by \eqref{eq:pdot}--\eqref{eq:wdot}, with \eqref{eqn:ep}--\eqref{eqn:eW} and \eqref{eqn:f}--\eqref{eq:bd1bd2bd3}, exist  over $I$ and that  condition \eqref{eq:Well-DefinednessRequirement} holds over $I$. In addition, assume \eqref{eq:BoundingAcceleration},  \eqref{eq:InitialPsi}, and \eqref{eq:InitialEomega} hold. Let the functions $V_{1}$, $V_{2}$, and $V$ be defined as  in \eqref{eq:V1}--\eqref{eq:V}, where conditions \eqref{eq:Boundc1} and \eqref{eq:Boundc2} hold.  Define, for $x,y\in \mathbb{R}_{+}$,
\begin{equation}\label{eq:L_u}
\mathcal{L}_{u}(x,y)\defas \max_{t\in \mathbb{R}_{+}}\mathcal{L}(x,y,t)= \mathcal{L}(x,y,t_{m}(x,y)),
\end{equation}
where $t_{m}$ is given by \eqref{eq:Tm}.  Then,
 $$
\sqrt{V(t)}\leq  \mathcal{L}_{u}(V_{1}(0),V_{2}(0)), t\in I.
 $$

\begin{figure*}
\begin{equation}\label{eq:Tm}
{t}_{m}(x,y)\defas \begin{cases} \max\left(\frac{1}{\frac{\alpha_{0}}{2}-\beta}\ln\left(\frac{\frac{\alpha_{2}\beta \sqrt{y}}{2\beta-\alpha_{0}}}{\frac{\alpha_{0}}{2}\sqrt{x+y}+\frac{\alpha_{0}\alpha_{2}\sqrt{y}}{2(2\beta-\alpha_{0})}}\right),0\right),& \frac{\alpha_{0}}{2}\neq \beta,~y>0, \\\max\left(
\frac{2(\alpha_{2}\sqrt{y}-\alpha_{0}\sqrt{x+y})}{\alpha_{0}\alpha_{2}\sqrt{y}},0\right),& \frac{\alpha_{0}}{2}= \beta,~y>0,\\
0,~\text{otherwise}.
\end{cases}
 \end{equation}
 \end{figure*}
\end{corollary}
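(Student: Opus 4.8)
The plan is to reduce the claim to a one–variable extremum problem. Theorem~\ref{thm:LyapunovStability} already delivers $\sqrt{V(t)}\le\mathcal{L}(V_{1}(0),V_{2}(0),t)$ for every $t\in I$ via \eqref{eq:VBound}, and since $I\subseteq\mathbb{R}_{+}$ it suffices to bound this by $\sup_{t\in\mathbb{R}_{+}}\mathcal{L}(V_{1}(0),V_{2}(0),t)$. Thus everything hinges on showing that, for fixed $x,y\in\mathbb{R}_{+}$, the map $h\defas\mathcal{L}(x,y,\cdot)$ attains its supremum over $\mathbb{R}_{+}$ and that a maximizer is given by the closed form $t_{m}(x,y)$ in \eqref{eq:Tm}; granting this, \eqref{eq:L_u} is well posed and the corollary follows by composition with \eqref{eq:VBound}. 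Existence of the maximum is immediate because $h$ is continuous on $[0,\infty)$ with $h(t)\to0$ as $t\to\infty$, so the real work is to locate the maximizer.

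First I would treat the degenerate branches of \eqref{eq:Tm}. If $x=y=0$ then $h\equiv0$ and $t_{m}=0$ is a maximizer; if $y=0$ and $x>0$ then $\mathcal{L}_{2}(y,\cdot)\equiv0$ by \eqref{eq:L2}, so $h(t)=\sqrt{x}\,\e{-\frac{\alpha_{0}}{2}t}$ by \eqref{eq:L1}, which is strictly decreasing, so $t_{m}=0$ — consistent with the ``otherwise'' case. For $y>0$ I would compute the elementary integral in \eqref{eq:L2} explicitly: it equals $(\e{(\frac{\alpha_{0}}{2}-\beta)t}-1)/(\frac{\alpha_{0}}{2}-\beta)$ when $\frac{\alpha_{0}}{2}\neq\beta$ and equals $t$ when $\frac{\alpha_{0}}{2}=\beta$. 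Recalling from \eqref{eq:alpha0} and \eqref{eq:beta} that $\frac{\alpha_{0}}{2}\le\beta$, substitution into \eqref{eq:L}--\eqref{eq:L2} and collecting terms puts $h$, in the case $\frac{\alpha_{0}}{2}<\beta$, in the form $h(t)=\e{\frac{\alpha_{1}\sqrt{y}}{2\beta}}\bigl(c_{1}\e{-\frac{\alpha_{0}}{2}t}-c_{2}\e{-\beta t}\bigr)$ with explicit positive constants $c_{1},c_{2}$ depending on $x,y,\alpha_{0},\alpha_{2},\beta$ and satisfying $c_{1}-c_{2}=\sqrt{x+y}$ (which recovers $h(0)$); when $\frac{\alpha_{0}}{2}=\beta$ it becomes $h(t)=\e{\frac{\alpha_{1}\sqrt{y}}{2\beta}}\e{-\frac{\alpha_{0}}{2}t}(c_{1}+c_{2}t)$. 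In particular $\frac{\alpha_{0}}{2}<\beta$ gives $2\beta-\alpha_{0}>0$, so the logarithm in \eqref{eq:Tm} has a strictly positive argument and is well defined.

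Next I would differentiate and solve $h'(t)=0$. In the case $\frac{\alpha_{0}}{2}<\beta$, after factoring out a positive exponential this reduces to $\e{(\beta-\frac{\alpha_{0}}{2})t}$ equal to a fixed ratio of the two coefficients; since $\beta-\frac{\alpha_{0}}{2}>0$ this has a unique real root $t^{\star}$, and because $h(t)\to0$ as $t\to\infty$ while $h(0)=\e{\frac{\alpha_{1}\sqrt{y}}{2\beta}}\sqrt{x+y}>0$, the function $h$ increases up to $t^{\star}$ and decreases thereafter, so its maximum over $[0,\infty)$ is attained at $\max(t^{\star},0)$. Rewriting the coefficient ratio in terms of $\alpha_{0},\alpha_{2},\beta,x,y$ produces exactly the first branch of \eqref{eq:Tm}. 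The boundary case $\frac{\alpha_{0}}{2}=\beta$ is handled identically, except that $h'(t)=0$ is now linear in $t$ and its (truncated) solution is the second branch of \eqref{eq:Tm}. Putting the three cases together gives $\max_{t\in\mathbb{R}_{+}}\mathcal{L}(x,y,t)=\mathcal{L}(x,y,t_{m}(x,y))=\mathcal{L}_{u}(x,y)$, whence $\sqrt{V(t)}\le\mathcal{L}_{u}(V_{1}(0),V_{2}(0))$ for all $t\in I$ by \eqref{eq:VBound}.

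The computation is mostly routine; the point that deserves care is confirming that the interior stationary point $t^{\star}$ is a maximizer rather than a minimizer and, when it is positive, beats the boundary value $h(0)$ — both follow from the same structural fact, namely that $h$ is positive at $0$ and vanishes at infinity, so a unique interior critical point lying in $(0,\infty)$ is necessarily the global maximum and otherwise $h$ is monotone on $[0,\infty)$, which the truncation $\max(\cdot,0)$ accounts for. Some care is also needed to keep the constants $c_{1},c_{2}$ in bijection with the exact expressions in \eqref{eq:Tm} and to split off the $\frac{\alpha_{0}}{2}=\beta$ singular case of the integral so that \eqref{eq:Tm} is exhaustive.
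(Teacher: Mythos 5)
Your strategy is the right one and coincides with the paper's (implicit) argument: the corollary carries no proof beyond the remark that $\mathcal{L}(x,y,\cdot)$ attains its maximum at a finite, analytically computable time, so the whole content is the calculus exercise you describe — combine \eqref{eq:VBound} with $\sup_{t\in\mathbb{R}_{+}}\mathcal{L}(x,y,t)$, write $h(t)=\mathcal{L}(x,y,t)$ as $K\bigl(c_{1}\e{-\frac{\alpha_{0}}{2}t}-c_{2}\e{-\beta t}\bigr)$ with $c_{1}=\sqrt{x+y}+\frac{\alpha_{2}\sqrt{y}}{2\beta-\alpha_{0}}$, $c_{2}=\frac{\alpha_{2}\sqrt{y}}{2\beta-\alpha_{0}}$ when $\frac{\alpha_{0}}{2}<\beta$ (resp.\ $K\e{-\frac{\alpha_{0}}{2}t}(A+Bt)$ when $\frac{\alpha_{0}}{2}=\beta$), find the unique critical point, truncate at $0$, and note $h(0)>0$ with $h(t)\to 0$. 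Your treatment of the $y=0$ branch and your justification that the truncated critical point is the global maximizer are both correct.

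The one step you assert without checking is that the critical point ``produces exactly the first branch of \eqref{eq:Tm},'' and executing your own computation shows it does not: $h'(t)=0$ gives $\e{(\beta-\frac{\alpha_{0}}{2})t}=\beta c_{2}/(\tfrac{\alpha_{0}}{2}c_{1})$, i.e.\ $t^{\star}=\tfrac{1}{\beta-\frac{\alpha_{0}}{2}}\ln\bigl(\beta c_{2}/(\tfrac{\alpha_{0}}{2}c_{1})\bigr)$, whereas the printed branch is $\tfrac{1}{\frac{\alpha_{0}}{2}-\beta}\ln\bigl(\beta c_{2}/(\tfrac{\alpha_{0}}{2}c_{1})\bigr)$ — the negative of $t^{\star}$ (the numerator and denominator inside the paper's logarithm are precisely $\beta c_{2}$ and $\tfrac{\alpha_{0}}{2}c_{1}$). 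A concrete check: $\alpha_{0}=\beta\cdot 1$ is excluded, so take $\alpha_{0}=2$, $\beta=2$ is the equal case; instead take $h(t)=3\e{-t}-2\e{-2t}$ (consistent with $\sqrt{x+y}=1$, $c_{2}=2$), whose maximizer is $\ln(4/3)>0$ while the printed formula returns $\max(-\ln(4/3),0)=0$, and $h(0)=1<h(\ln(4/3))=9/8$. Moreover your $t^{\star}$ — not its negative — converges to the (correct) second branch of \eqref{eq:Tm} as $\beta\to\alpha_{0}/2$. So the fraction inside the logarithm (equivalently the sign of the prefactor) in the first branch of \eqref{eq:Tm} is inverted; your method is sound and yields the corollary with $t_{m}$ replaced by the corrected $\max(t^{\star},0)$, but you should flag the discrepancy rather than claim literal agreement with the printed formula.
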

The uniform bound derived above and Lemma \ref{Lem:EstimatesPSD}(c) can be used to estimate the deviation of the position and velocity from their desired values as follows:
\begin{corollary} \label{lem:LpLvLf}
Assume that the functions $p$, $v$, $R$, and $\omega$, given by \eqref{eq:pdot}--\eqref{eq:wdot}, with \eqref{eqn:ep}--\eqref{eqn:eW} and \eqref{eqn:f}--\eqref{eq:bd1bd2bd3}, exist over $I$ and that  condition \eqref{eq:Well-DefinednessRequirement} holds over $I$.
In addition, assume  \eqref{eq:BoundingAcceleration}, \eqref{eq:InitialPsi}, and \eqref{eq:InitialEomega} hold.  Let the functions $V_{1}$, $V_{2}$, and $V$ be defined as  in \eqref{eq:V1}--\eqref{eq:V}, where conditions \eqref{eq:Boundc1} and \eqref{eq:Boundc2} hold. Define, for $x,y\in \mathbb{R}_{+}$, 
\begin{align}\label{eq:L_p}
\mathcal{L}_{p}(x,y)&\defas \norm{[\mathrm{I}_{3},0_{3\times 3}]M^{-\frac{1}{2}}_{1}}\mathcal{L}_{u}(x,y),\\ \label{eq:L_v}
\mathcal{L}_{v}(x,y)&\defas \norm{[0_{3\times 3},\mathrm{I}_{3}]M^{-\frac{1}{2}}_{1}}\mathcal{L}_{u}(x,y),\\
\label{eq:L_f}
\mathcal{L}_{f}(x,y)&\defas \norm{[k_{p}\mathrm{I}_{3},k_{v}\mathrm{I}_{3}]M^{-\frac{1}{2}}_{1}}\mathcal{L}_{u}(x,y),
\end{align}
then,  for all $t\in I$,\footnote{As $\mathcal{L}$, given in equation \eqref{eq:L}, is monotonically increasing with respect to its first two arguments, then it follows, using  the definition of $\mathcal{L}_{u}$, given by equation \eqref{eq:L_u}, that  $\mathcal{L}_{u}$ is monotonically increasing with respect to its two arguments. Therefore,  $\mathcal{L}_{p}$, $\mathcal{L}_{u}$, and $\mathcal{L}_{f}$, given by \eqref{eq:L_p}, \eqref{eq:L_v}, and \eqref{eq:L_f}, respectively, are also monotonically increasing with respect to their arguments.} 
    \begin{align*}
\norm{e_{p}(t)}&\leq \mathcal{L}_{p}(V_{1}(0),V_{2}(0)),\\ \norm{e_{v}(t)}&\leq \mathcal{L}_{v}(V_{1}(0),V_{2}(0)),\\   \norm{k_{p}e_{p}(t)+k_{v}e_{v}(t)}&\leq \mathcal{L}_{f}(V_{1}(0),V_{2}(0)).
    \end{align*} 
\end{corollary}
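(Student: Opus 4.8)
The plan is to reduce everything to the uniform bound on $\sqrt{V(\cdot)}$ furnished by Corollary \ref{cor:V_bound} together with the linear-algebra estimate in Lemma \ref{Lem:EstimatesPSD}(c). First I would observe that since $V = V_1 + V_2$ with both $V_1$ and $V_2$ nonnegative (the former by positive definiteness of $M_1$, the latter by \eqref{eq:V2PositiveDefinite} and positive definiteness of $M_{2,1}$), we have $V_1(t) \le V(t)$ for all $t \in I$, and hence, by Corollary \ref{cor:V_bound},
\begin{equation*}
\sqrt{V_1(t)} \le \sqrt{V(t)} \le \mathcal{L}_{u}(V_1(0),V_2(0)), \quad t \in I.
\end{equation*}

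Next I would rewrite $V_1$ as a weighted norm of $z_1$: by definition $V_1(t) = z_1^\intercal(t) M_1 z_1(t) = \norm{M_1^{\frac{1}{2}} z_1(t)}^2$, so $\norm{M_1^{\frac{1}{2}} z_1(t)} = \sqrt{V_1(t)}$. Writing $e_p(t) = [\mathrm{I}_3,0_{3\times 3}] z_1(t)$, $e_v(t) = [0_{3\times 3},\mathrm{I}_3] z_1(t)$, and $k_p e_p(t) + k_v e_v(t) = [k_p\mathrm{I}_3, k_v\mathrm{I}_3] z_1(t)$, I would apply Lemma \ref{Lem:EstimatesPSD}(c) with $M = M_1$ and $A$ equal to each of these three constant matrices in turn, obtaining, e.g.,
\begin{equation*}
\norm{e_p(t)} \le \norm{[\mathrm{I}_3,0_{3\times 3}] M_1^{-\frac{1}{2}}}\,\norm{M_1^{\frac{1}{2}} z_1(t)} = \norm{[\mathrm{I}_3,0_{3\times 3}] M_1^{-\frac{1}{2}}}\,\sqrt{V_1(t)},
\end{equation*}
and analogously for $\norm{e_v(t)}$ and $\norm{k_p e_p(t) + k_v e_v(t)}$.

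Finally I would chain the two displays: combining $\sqrt{V_1(t)} \le \mathcal{L}_{u}(V_1(0),V_2(0))$ with the three norm estimates and recalling the definitions \eqref{eq:L_p}, \eqref{eq:L_v}, \eqref{eq:L_f} of $\mathcal{L}_p$, $\mathcal{L}_v$, $\mathcal{L}_f$ yields exactly the three claimed inequalities, uniformly in $t \in I$. I do not anticipate a genuine obstacle here — the result is essentially a bookkeeping corollary of Theorem \ref{thm:LyapunovStability} and Corollary \ref{cor:V_bound}; the only point requiring a moment's care is the passage $V_1(t) \le V(t)$, which relies on having already established the nonnegativity of $V_2$ (via \eqref{eq:V2PositiveDefinite}), and the fact that all matrices appearing in \eqref{eq:L_p}--\eqref{eq:L_f} are well defined because $M_1 \in \mathcal{S}^6_{++}$ as asserted in Theorem \ref{thm:LyapunovStability}.
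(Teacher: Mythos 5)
Your proposal is correct and follows exactly the route the paper intends: the text introducing Corollary~\ref{lem:LpLvLf} states that it follows from the uniform bound of Corollary~\ref{cor:V_bound} together with Lemma~\ref{Lem:EstimatesPSD}(c), which is precisely your chain $\norm{Az_{1}(t)}\leq \norm{AM_{1}^{-\frac{1}{2}}}\sqrt{V_{1}(t)}\leq \norm{AM_{1}^{-\frac{1}{2}}}\,\mathcal{L}_{u}(V_{1}(0),V_{2}(0))$ for the three selector matrices. Your explicit remark that $V_{1}(t)\leq V(t)$ requires the nonnegativity of $V_{2}$ via \eqref{eq:V2PositiveDefinite} is the one step the paper leaves implicit, and you have handled it correctly.
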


\subsection{Well-definedness of the closed-loop quadrotor system over any arbitrary time interval.}
\label{sec:Well-Definedness}
In  Theorem \ref{thm:LyapunovStability}, we assume that \eqref{eq:Well-DefinednessRequirement} holds and that the closed-loop system is well-defined over $I$. In the theorem below, we will show that, when imposing appropriate constraints on on $p_{d}$, condition \eqref{eq:Well-DefinednessRequirement}  is satisfied and well-definedness is guaranteed.
\begin{theorem}
\label{thm:Well-Definedness}
Let conditions \eqref{eq:BoundingAcceleration}, \eqref{eq:InitialPsi}, and \eqref{eq:InitialEomega} hold. In addition, assume 
\begin{equation}
\label{eq:LowerBoundPddot}m\ddot{p}_{d,3}(t)\geq    \alpha_{f}- mg+\varepsilon,~t\in I,
\end{equation}
for some $\alpha_{f},\varepsilon>0$.   Let $V_{1}$, $V_{2}$, and $V$ be defined as in \eqref{eq:V1}--\eqref{eq:V}, where conditions \eqref{eq:Boundc1} and \eqref{eq:Boundc2} hold.  If $\mathcal{L}_{f}(V_{1}(0),V_{2}(0))$, computed according to \eqref{eq:L_f}, satisfies 
\begin{equation}\label{eq:ConditionOnMathcalF}
\alpha_{f}\geq \mathcal{L}_{f}(V_{1}(0),V_{2}(0)),
\end{equation}
then the closed-loop quadrotor system is well-defined over $I$ in the sense that the functions $p$, $v$, $R$, and $\omega$, given by \eqref{eq:pdot}--\eqref{eq:wdot}, with \eqref{eqn:ep}--\eqref{eqn:eW} and \eqref{eqn:f}--\eqref{eq:bd1bd2bd3}, exist and are differentiable over $I$. Moreover,  condition \eqref{eq:Well-DefinednessRequirement} holds over $I$ and the conclusions  of Theorem \ref{thm:LyapunovStability} follow.
\end{theorem}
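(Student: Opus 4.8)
The plan is to run a maximal-interval continuation argument that bootstraps on Theorem~\ref{thm:LyapunovStability} and Corollary~\ref{lem:LpLvLf}: so long as the closed-loop trajectory stays clear of the singular locus of the vector field in \eqref{eqn:f}--\eqref{eq:bd1bd2bd3}, the a priori error bounds of Section~\ref{sec:StabilityAnalysis} apply on that sub-interval, and those same bounds then force the trajectory to remain clear of the singular locus, so the solution extends.

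First I would reduce \eqref{eq:Well-DefinednessRequirement} to the single scalar inequality $F_{d,3}(t)\ge\varepsilon$. Indeed, if $F_{d,3}(t)\ge\varepsilon>0$ then $\norm{F_d(t)}\ge|F_{d,3}(t)|=F_{d,3}(t)\ge\varepsilon>0$, so $\norm{F_d(t)}\neq0$, and $-\norm{F_d(t)}\le-\varepsilon<\varepsilon\le F_{d,3}(t)$, so $F_{d,3}(t)\neq-\norm{F_d(t)}$; moreover $\norm{F_d(t)}+F_{d,3}(t)\ge2\varepsilon>0$, which keeps the denominators in \eqref{eq:bd1bd2bd3} bounded away from zero. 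I would then record that on the open set $\mathcal G\defas\{\,w\in\Re^{3}:\norm w>0,~\norm w+w_{3}>0\,\}$ the frame vectors $b_{1,d},b_{2,d},b_{3,d}$ are smooth functions of $w$, whence $R_d$, and, after differentiating \eqref{eq:Rd}--\eqref{eq:wd} along \eqref{eq:pdot}--\eqref{eq:vdot} (which draws on $p_d$ up to its fourth derivative), also $\omega_d$ and $\dot\omega_d$, depend smoothly on the state and continuously on $t$; hence the closed-loop right-hand side of \eqref{eq:pdot}--\eqref{eq:wdot} with \eqref{eqn:ep}--\eqref{eqn:eW}, \eqref{eqn:f}--\eqref{eq:bd1bd2bd3} is continuous in $t$ and locally Lipschitz in the state on $\{F_d\in\mathcal G\}$, so Picard--Lindel\"of applies wherever $F_d\in\mathcal G$.

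Next I would establish the key lower bound: on any $[0,\tau]\subseteq I$ on which the solution exists and $F_d\in\mathcal G$, conditions \eqref{eq:BoundingAcceleration}, \eqref{eq:InitialPsi}--\eqref{eq:InitialEomega}, \eqref{eq:Boundc1}--\eqref{eq:Boundc2} are inherited (the first over all of $I$, the rest being conditions at $t=0$ or on $c_1,c_2$), so Corollary~\ref{lem:LpLvLf} gives $\norm{k_pe_p(t)+k_ve_v(t)}\le\mathcal L_f(V_1(0),V_2(0))\le\alpha_f$ on $[0,\tau]$ (the last step by \eqref{eq:ConditionOnMathcalF}); writing $F_{d,3}(t)=mg+m\ddot p_{d,3}(t)-(k_pe_p(t)+k_ve_v(t))_{3}$ and using \eqref{eq:LowerBoundPddot}, this yields $F_{d,3}(t)\ge mg+(\alpha_f-mg+\varepsilon)-\alpha_f=\varepsilon$. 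The $t=0$ instance of this bound requires a direct argument, since it must precede the existence claim: by Lemma~\ref{Lem:EstimatesPSD}(c) (with $M_1\in\mathcal S^6_{++}$ by \eqref{eq:Boundc1}), $\norm{k_pe_p(0)+k_ve_v(0)}\le\norm{[k_p\mathrm I_3,k_v\mathrm I_3]M_1^{-\frac12}}\sqrt{V_1(0)}$, and $\sqrt{V_1(0)}\le\mathcal L(V_1(0),V_2(0),0)\le\mathcal L_u(V_1(0),V_2(0))$, so $\norm{k_pe_p(0)+k_ve_v(0)}\le\mathcal L_f(V_1(0),V_2(0))\le\alpha_f$ and $F_{d,3}(0)\ge\varepsilon$; in particular $F_d(0)\in\mathcal G$, and a $C^1$ solution with $F_{d,3}>0$ exists on some $[0,\delta]$.

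Finally I would close the continuation. Set $T^\ast\defas\sup\{\,\tau\in I:\text{the solution exists on }[0,\tau]\text{ with }F_{d,3}>0\text{ there}\,\}$, which is well defined and positive by the previous step. If $T^\ast<\sup I$, then on $[0,T^\ast)$ the lower bound gives $F_{d,3}\ge\varepsilon$, while Corollary~\ref{lem:LpLvLf} together with \eqref{eq:V2PositiveDefinite} and \eqref{eq:V2Bound} bounds $\norm{e_p},\norm{e_v},\norm{e_R},\norm{e_\omega}$ uniformly; combined with the uniform bounds on $p_d$ and its first four derivatives and $R\in\SO$, this confines $(p,v,R,\omega)$, $\omega_d$, and $\dot\omega_d$ to a compact set $K\subseteq\{F_d\in\mathcal G\}$ on $[0,T^\ast)$, on which the closed-loop field is bounded; hence $\lim_{t\to T^{\ast-}}(p,v,R,\omega)(t)$ exists in $K$, $F_{d,3}(T^\ast)\ge\varepsilon$ by continuity, and re-solving from $T^\ast$ extends the solution with $F_{d,3}>0$ past $T^\ast$ --- contradiction. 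Hence $T^\ast=\sup I$, and the same limiting step handles the right endpoint when $\sup I\in I$; thus $p,v,R,\omega$ exist and are $C^1$ over $I$ with $F_{d,3}\ge\varepsilon$ on $I$, so \eqref{eq:Well-DefinednessRequirement} holds over $I$ and Theorem~\ref{thm:LyapunovStability} applies, giving its conclusions. The main obstacle is precisely this circularity --- Theorem~\ref{thm:LyapunovStability} and Corollary~\ref{lem:LpLvLf} presuppose well-definedness over the whole interval, so they may be invoked only on the sub-intervals furnished by the continuation --- together with the bookkeeping that the trajectory and $R_d,\omega_d,\dot\omega_d$ stay in a compact subset of $\{F_d\in\mathcal G\}$, which is where the $C^4$ regularity of $p_d$ and the strict lower bound \eqref{eq:LowerBoundPddot} are used essentially.
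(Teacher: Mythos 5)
Your proposal is correct and follows essentially the same route as the paper: bound $\|k_pe_p(0)+k_ve_v(0)\|$ via Lemma \ref{Lem:EstimatesPSD}(c) and monotonicity of $\mathcal{L}_f$ to get $F_{d,3}(0)\geq\varepsilon$, invoke local existence, bootstrap Corollary \ref{lem:LpLvLf} on sub-intervals where \eqref{eq:Well-DefinednessRequirement} holds to propagate $F_{d,3}\geq\varepsilon$, and use uniform boundedness of the state (including the $\omega_d$, $\dot\omega_d$ bounds that rely on $F_{d,3}$ staying away from the singular locus) to rule out finite-time escape. The paper packages the continuation as ``$\tilde{I}=\tilde{I}_1$ then $\tilde{I}_1=I$'' rather than a contradiction at $T^\ast$, but that is only a cosmetic difference.
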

\begin{proof}
Recall the definitions of $R_{d}$ and $\omega_{d}$ given by \eqref{eq:Rd}, \eqref{eq:wd}, \eqref{eq:bd1bd2bd3}, and  \eqref{eq:Fd}. Both $R_{d}$ and $\omega_{d}$ depend on $F_{d}$, and  $F_{d}$ depends on $e_{p}$ and $e_{v}$.  Note that $R_{d}$, $\omega_{d}$, and $\dot{\omega}_{d}$ are differentiable with respect to $F_{d}$  whenever $F_{d}$ is not zero and $\norm{F_{d}(\cdot)}+F_{d,3}(\cdot)$ is not zero.  

It follows from condition \eqref{eq:Boundc1}  that  $V_{1}$ given by \eqref{eq:V1} is nonnegative over its domain of definition with the associated matrix $M_{1}$ being positive definite. Therefore, using Lemma \ref{Lem:EstimatesPSD}(c) and the definition of $\mathcal{L}_{f}$ in \eqref{eq:L_f}, 
\begin{align*}
\norm{k_{p}e_{p}(0)+k_{v}e_{v}(0)}\leq& \norm{[k_{p}\mathrm{I}_{3},k_{v}\mathrm{I}_{3}]M^{-\frac{1}{2}}_{1}}\sqrt{V_{1}(0)}\\
&= \mathcal{L}_{f}(V_{1}(0),0)\\
&\leq \mathcal{L}_{f}(V_{1}(0),V_{2}(0))\leq \alpha_{f}.
\end{align*}
Consequently, 
\begin{align*}
F_{d,3}(0)=&-k_{p}e_{p,3}(0)-k_{v}e_{v,3}(t)+mg+m\ddot{p}_{d,3}(0)\\
\geq & -\mathcal{L}_{f}(V_{1}(0),V_{2}(0))+mg\\
&+   \mathcal{L}_{f}(V_{1}(0),V_{2}(0))-mg+\varepsilon=\varepsilon>0,
\end{align*}
implying $ \norm{F_{d}(0)}, \norm{F_{d}(0)}+F_{d,3}(0)>0$.
This shows that the right-hand sides of \eqref{eq:pdot}--\eqref{eq:wdot}, with \eqref{eqn:ep}--\eqref{eqn:eW} and \eqref{eqn:f}--\eqref{eq:bd1bd2bd3},  are continuously differentiable with respect to $t$, $p$, $v$, $R$, and $\omega$  over a local neighborhood of $(p(0),v(0),R(0), \omega(0))$. Existence-uniqueness results for ordinary differential equations\footnote{While $R$ and $R_{d}$  are matrix-valued functions, they can be cast as  vector-valued functions through  vectorization. Then, system \eqref{eq:pdot}--\eqref{eq:wdot} with the thrust and torque laws \eqref{eqn:f} and \eqref{eqn:M} can be viewed as a non-autonomous differential equation on a Euclidean space, and standard results on finite-dimensional differential equations can be applied.} (see, e.g., \cite[Theorem~3.2.1,~Proposition~ 3.2.2.,~p.~82]{david2018ordinary}) then guarantee the existence of a non-singleton interval $\tilde{I}\subset I$ containing zero over which the functions $p$, $v$, $R$, and  $\omega$ are  well-defined and differentiable.

Let $\tilde{I}_{1}\subset I$ be the \textit{maximal} interval of existence over which $p$, $v$, $R$, and $\omega$  are well-defined and differentiable, and define 
\begin{equation}\nonumber
\label{eq:TildeI2}
\tilde{I}_{2}\defas \{t\in \tilde{I}_{1}|F_{d,3}(t)\leq 0\},~
     \tilde{t}\defas \begin{cases} \inf \tilde{I}_2,& \tilde{I}\neq \emptyset,\\
     \infty,&~\text{otherwise},
     \end{cases}
 \end{equation} 
and $\tilde{I}\defas \Set{t\in \tilde{I}_{1}}{0\leq t< \tilde{t}}$. The continuity of $e_p$,  $e_v$, and $\ddot{p}_d$ (and consequently the continuity of $F_{d}$) over $\tilde{I}_{1}$ and the fact that $F_{d,3}(0)>0$ ensure that $\tilde{t}$  is positive. 
In addition, the definition of $\tilde{I}$  ensures that condition \eqref{eq:Well-DefinednessRequirement} holds over $\tilde{I}$. We can then use the conclusions of Theorem \ref{thm:LyapunovStability}, including Corollary \ref{lem:LpLvLf}, over  $\tilde{I}$. Using Corollary $\ref{lem:LpLvLf}$, we have 
\begin{align*}
F_{d,3}(t)=&-k_{p}e_{p,3}(t)-k_{v}e_{v,3}(t)+mg+m\ddot{p}_{d,3}(t)\\
\geq & -\mathcal{L}_{f}(V_{1}(t),V_{2}(0))+mg\\
&+   \mathcal{L}_{f}(V_{1}(t),V_{2}(0))-mg+\varepsilon=\varepsilon>0
\end{align*}
for all $t\in \tilde{I}$. This uniform lower bound on $F_{d,3}$ and the definition of $\tilde{I}$   imply $\tilde{I}=\tilde{I}_{1}$ as the definition of  $\tilde{t}$  implies it is infinite. 
Corollary \ref{lem:LpLvLf} indicates that $e_{p}$ and $e_{v}$ are uniformly bounded over $\tilde{I}$  (we can use theorem \ref{thm:LyapunovStability} to also  show that  $e_{R}$ and $e_{\omega}$ are uniformly bounded over $\tilde{I}$).  Moreover, and by construction,  $p_{d}$ and its first four derivatives are  uniformly bounded. Therefore, using the uniform boundedness of $e_{p}$ and $e_{v}$, we  have $p$ and $v$ uniformly bounded over $\tilde{I}$.  Furthermore, by definition,  $R$ is uniformly bounded ($\norm{R(\cdot)}=1$) over $\tilde{I}$. The function $\omega_{d}$, which depends on $R_{d}$ and its derivative (see \eqref{eq:wd}), is uniformly bounded. Verifying this fact can be outlined as follows. Recalling the definition of $R_{d}$ in \eqref{eq:Rd}, and using the chain rule, we have  $\hat{\omega}_{d}(\cdot)=R_{d}^{\intercal}(\cdot)\dot{R}_{d}(\cdot)=R_{d}^{\intercal}(\cdot)\nabla_{F_{d}}{R_{d}}(\cdot),\dot{F}_{d}(\cdot)$, where $\nabla_{F_{d}}{R_{d}}$ is the derivative of $R_{d}$ with respect to $F_{d}$. By construction, $\norm{R_{d}^{\intercal}(\cdot)}=1$.  Moreover, as   $\varepsilon \leq F_{d,3}(\cdot)\leq \norm{F_{d}(\cdot)}\leq \mathcal{L}_{f}(V_{1}(0),V_{2}(0))+m\norm{a_{\max}}, 
$
the components of $\nabla_{F_{d}}{R_{d}}$ are uniformly bounded. Furthermore, $\dot{F}_{d}(\cdot)$, which is computed according to \eqref{eq:Fd_dot}, is uniformly bounded over $\tilde{I}$  due to the uniform boundedness of $e_p$, $e_{v}$, $\dddot{p}_{d}$, and $\Delta_{f}$. It then follows that $\hat{\omega}_{d}(\cdot)$ is uniformly bounded. Using equation \eqref{eqn:eW} and  the uniform boundedness of $e_{\omega}$, $R$, $R_{d}$, and  $\omega_{d}$, $\omega$ is uniformly bounded over $\tilde{I}$. As  $\tilde{I}$ is the maximal interval of existence within $I$  and that $p(\tilde{I})$, $v(\tilde{I})$, $R(\tilde{I})$, and $\omega(\tilde{I})$ are bounded, it follows that (see, e.g., \cite[Corollary~ 4.10,~p.~111]{logemann2014ordinary}, \cite[Theorem~ 4.1.2,~p.~112]{david2018ordinary}, and  \cite[Theorem~1.4.1,~p.~18]{kong2014short})  $\tilde{I}_{1}=I$ and that completes the proof.    
\end{proof}

In the next corollary, we derive uniform bounds that are independent of  both time and the values of $V_{1}$ and $V_{2}$, where we utilize the monotonicity of $\mathcal{L}_{u}$. Such bounds can then be used to synthesize a robust safe trajectory and an associated neighborhood of initial values. 

We note that if conditions \eqref{eq:InitialPsi} and \eqref{eq:InitialEomega} hold, it follows, where Lemma \ref{lem:BoundingPsiUsingEr} is used, that  
\begin{align*}
V_{2}(0)=&\frac{1}{2}e_{\omega}(0)\cdot J e_{\omega}(0)+k_{R}\Psi(0)+c_{2}e_{R}(0)\cdot e_{\omega}(0)\\
&\leq k_{R}\overline{\Psi}+c_{2}\norm{e_{R}(0)}\norm{e_{\omega}(0)}\\
&\leq k_{R}\overline{\Psi}+c_{2}\sqrt{2\Psi(0)}\norm{J^{-1/2}}\norm{J^{1/2}e_{\omega}(0)}
\leq \overline{\mathcal{V}}_{2},
\end{align*}
where 
\begin{equation}\label{eq:BarV2}
\overline{\mathcal{V}}_{2}\defas \left(k_{R}+2c_{2}\sqrt{\frac{k_{R}}{\underline{\lambda}(J)}\alpha_{\psi}(1-\alpha_{\psi})}\right)\overline{\Psi}.
\end{equation}
\begin{corollary} \label{cor:pvf_bound}
   Let $\alpha_{\psi}\in \intoo{0,1}$, $\overline{\Psi}\in \intoo{0,2}$, and $\overline{\mathcal{V}}_{1}\in \intoo{0,\infty}$ be given and $\bar{\mathcal{V}}_{2}$ is computed according to \eqref{eq:BarV2}. 
Assume there exists a finite time $T>0$ and a four-times continuously differentiable $p_{d}\colon \intcc{0,T}\rightarrow \mathbb{R}^{3}$ satisfying \eqref{eq:BoundingAcceleration}, and  
$
m\ddot{p}_{d,3}(t)\geq \mathcal{L}_{f}(\overline{\mathcal{V}}_{1},\overline{\mathcal{V}}_{2})-mg+\varepsilon,
$
$t\in\intcc{0,T},$ for some $\varepsilon>0$.
Then,  for any initial value $(p(0),v(0),R(0),\omega(0))$ satisfying
\begin{equation} \label{eq:InitialSet}
\begin{split}
 \Psi(0)&\leq \alpha_{\psi}\overline{\Psi},\\
\frac{1}{2}e_{\omega}^{\intercal}(0)Je_\omega(0) &\leq   {k_{R}}(1-\alpha_{\psi})\overline{\Psi},\\
V_{1}(0)&\leq \overline{\mathcal{V}}_{1},
\end{split}
\end{equation}
the functions $p$, $v$, $R$, and $\omega$, given by \eqref{eq:pdot}--\eqref{eq:wdot}, with \eqref{eqn:ep}--\eqref{eqn:eW} and \eqref{eqn:f}--\eqref{eq:bd1bd2bd3}, exist over $\intcc{0,T}$ and condition \eqref{eq:Well-DefinednessRequirement} holds over $\intcc{0,T}$. Moreover,
for all $t\in \intcc{0,T}$,
\begin{align*}
\norm{e_{p}(t)}&\leq \mathcal{L}_{p}(\overline{\mathcal{V}}_{1},\overline{\mathcal{V}}_{2}),\\ 
\norm{e_{v}(t)}&\leq \mathcal{L}_{v}(\overline{\mathcal{V}}_{1},\overline{\mathcal{V}}_{2}),\\   \norm{k_{p}e_{p}(t)+k_{v}e_{v}(t)}&\leq \mathcal{L}_{f}(\overline{\mathcal{V}}_{1},\overline{\mathcal{V}}_{2}),
    \end{align*} 
where $\mathcal{L}_{p},~ \mathcal{L}_{v}$, and $\mathcal{L}_{f}$ are given by \eqref{eq:L_p}, \eqref{eq:L_v}, and \eqref{eq:L_f}, respectively.
\end{corollary}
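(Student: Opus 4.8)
The plan is to deduce this corollary directly by chaining Theorem~\ref{thm:Well-Definedness} with Corollary~\ref{lem:LpLvLf}, and then using the monotonicity of the bounding functions $\mathcal{L}_{p},\mathcal{L}_{v},\mathcal{L}_{f}$ to pass from the data-dependent bounds $\mathcal{L}_{\bullet}(V_{1}(0),V_{2}(0))$ to the uniform bounds $\mathcal{L}_{\bullet}(\overline{\mathcal{V}}_{1},\overline{\mathcal{V}}_{2})$. First I would observe that \eqref{eq:InitialSet} already implies the standing hypotheses \eqref{eq:InitialPsi} and \eqref{eq:InitialEomega}, since its first two lines are literally those conditions. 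Then I would reproduce the estimate displayed just before the statement of this corollary --- bounding $\norm{e_{R}(0)}$ by $\sqrt{2\Psi(0)}$ via Lemma~\ref{lem:BoundingPsiUsingEr}, bounding $\norm{J^{1/2}e_{\omega}(0)}$ via the second line of \eqref{eq:InitialSet}, and using $\norm{J^{-1/2}}=1/\sqrt{\underline{\lambda}(J)}$ --- to conclude $V_{2}(0)\le\overline{\mathcal{V}}_{2}$ with $\overline{\mathcal{V}}_{2}$ as in \eqref{eq:BarV2}; combined with the third line of \eqref{eq:InitialSet} this yields $V_{1}(0)\le\overline{\mathcal{V}}_{1}$ and $V_{2}(0)\le\overline{\mathcal{V}}_{2}$ (here $c_{1},c_{2}$ are fixed constants satisfying \eqref{eq:Boundc1}--\eqref{eq:Boundc2}, as in the earlier results).

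Next I would invoke Theorem~\ref{thm:Well-Definedness} with $I=\intcc{0,T}$ and the choice $\alpha_{f}\defas\mathcal{L}_{f}(\overline{\mathcal{V}}_{1},\overline{\mathcal{V}}_{2})$. The hypothesis $m\ddot{p}_{d,3}(t)\ge\mathcal{L}_{f}(\overline{\mathcal{V}}_{1},\overline{\mathcal{V}}_{2})-mg+\varepsilon$ is exactly \eqref{eq:LowerBoundPddot} for this $\alpha_{f}$, and condition \eqref{eq:ConditionOnMathcalF} reads $\mathcal{L}_{f}(\overline{\mathcal{V}}_{1},\overline{\mathcal{V}}_{2})\ge\mathcal{L}_{f}(V_{1}(0),V_{2}(0))$, which holds because $V_{1}(0)\le\overline{\mathcal{V}}_{1}$, $V_{2}(0)\le\overline{\mathcal{V}}_{2}$, and $\mathcal{L}_{f}$ is monotonically increasing in its arguments (the footnote to Corollary~\ref{lem:LpLvLf}). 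Theorem~\ref{thm:Well-Definedness} then delivers that $p,v,R,\omega$ exist and are differentiable over $\intcc{0,T}$, that condition \eqref{eq:Well-DefinednessRequirement} holds over $\intcc{0,T}$, and that the conclusions of Theorem~\ref{thm:LyapunovStability}, and hence of Corollary~\ref{lem:LpLvLf}, are in force over $\intcc{0,T}$.

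Finally, Corollary~\ref{lem:LpLvLf} gives $\norm{e_{p}(t)}\le\mathcal{L}_{p}(V_{1}(0),V_{2}(0))$, $\norm{e_{v}(t)}\le\mathcal{L}_{v}(V_{1}(0),V_{2}(0))$, and $\norm{k_{p}e_{p}(t)+k_{v}e_{v}(t)}\le\mathcal{L}_{f}(V_{1}(0),V_{2}(0))$ for all $t\in\intcc{0,T}$, and the monotonicity of $\mathcal{L}_{p},\mathcal{L}_{v},\mathcal{L}_{f}$ together with $V_{1}(0)\le\overline{\mathcal{V}}_{1}$ and $V_{2}(0)\le\overline{\mathcal{V}}_{2}$ upgrades these to the asserted uniform bounds $\mathcal{L}_{p}(\overline{\mathcal{V}}_{1},\overline{\mathcal{V}}_{2})$, $\mathcal{L}_{v}(\overline{\mathcal{V}}_{1},\overline{\mathcal{V}}_{2})$, $\mathcal{L}_{f}(\overline{\mathcal{V}}_{1},\overline{\mathcal{V}}_{2})$. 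There is no genuine analytic difficulty here: the work is entirely in matching hypotheses and invoking the already-established results. The one step that needs care is the bound $V_{2}(0)\le\overline{\mathcal{V}}_{2}$, i.e., re-deriving \eqref{eq:BarV2}, since that is where $\alpha_{\psi}$, $\overline{\Psi}$, and $\underline{\lambda}(J)$ enter the estimate; everything after that is a direct citation plus the recorded monotonicity.
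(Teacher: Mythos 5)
Your proposal is correct and follows essentially the same route the paper intends: the corollary is presented as a direct consequence of the bound $V_{2}(0)\leq\overline{\mathcal{V}}_{2}$ derived just before its statement, Theorem~\ref{thm:Well-Definedness} applied with $\alpha_{f}=\mathcal{L}_{f}(\overline{\mathcal{V}}_{1},\overline{\mathcal{V}}_{2})$, Corollary~\ref{lem:LpLvLf}, and the recorded monotonicity of $\mathcal{L}_{p}$, $\mathcal{L}_{v}$, $\mathcal{L}_{f}$. All hypotheses are matched correctly, and no gap remains.
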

A direct consequence of the above corollary is the following result which is the base of our trajectory generation procedure. 
\begin{theorem}
\label{thm:ConditionsForSolvingReachAvoidProblem}
    Let $\alpha_{\psi} \in \intoo{0,1}$, $\overline{\Psi}\in \intoo{0,2}$, and $\overline{\mathcal{V}}_{1}\in \intoo{0,\infty}$ be given and $\bar{\mathcal{V}}_{2}$ is computed according to \eqref{eq:BarV2}. 
Assume 
\begin{equation}\label{eq:SatisfyingThrustBound}
m\norm{a_{\max}}+\mathcal{L}_{f}(\overline{\mathcal{V}}_{1},\overline{\mathcal{V}}_{2})\leq f_{\max},
\end{equation}
and that there exists a finite time $T>0$ and a four-times continuously differentiable $p_{d}\colon \intcc{0,T}\rightarrow \mathbb{R}^{3}$ satisfying \eqref{eq:BoundingAcceleration}, and  
    \begin{equation}
        \begin{split} \label{eq:PropertiesOfDesiredTrajectory}
        p_{d}(t)&\in (\mathcal{X}_{\mathrm{s}}-\mathcal{L}_{p}(\overline{\mathcal{V}}_{1},\overline{\mathcal{V}}_{2})\mathbb{B}_{3})\setminus (\mathcal{X}_{\mathrm{u}}+\mathcal{L}_{p}(\overline{\mathcal{V}}_{1},\overline{\mathcal{V}}_{2})\mathbb{B}_{3}),\\
        \abs{\dot{p}_{d}(t)}&\leq v_{\max}-\mathcal{L}_{v}(\overline{\mathcal{V}}_{1},\overline{\mathcal{V}}_{2})1_{3},\\  
        m\ddot{p}_{d,3}(t)&\geq \mathcal{L}_{f}(\overline{\mathcal{V}}_{1},\overline{\mathcal{V}}_{2})-mg+\varepsilon,
        \\
         p_{d}(T)&\in \mathcal{X}_{\mathrm{t}}-\mathcal{L}_{p}(\overline{\mathcal{V}}_{1},\overline{\mathcal{V}}_{2})\mathbb{B}_{3},
        \end{split}
    \end{equation}
$t\in \intcc{0,T}$, for some $\varepsilon>0$, and $\mathcal{L}_{p},~ \mathcal{L}_{v}$, and $\mathcal{L}_{f}$ are given by \eqref{eq:L_p}, \eqref{eq:L_v}, and \eqref{eq:L_f}, respectively. Then,  for any initial value $(p(0),v(0),R(0),\omega(0))$ satisfying \eqref{eq:InitialSet}, the functions $p$, $v$, $R$, and $\omega$, given by \eqref{eq:pdot}--\eqref{eq:wdot}, with \eqref{eqn:ep}--\eqref{eqn:eW} and \eqref{eqn:f}--\eqref{eq:bd1bd2bd3}, exist over $\intcc{0,T}$ and condition \eqref{eq:Well-DefinednessRequirement} holds over $\intcc{0,T}$. Moreover,  the resulting position  $p$ and thrust $f$  satisfy 
\eqref{eq:ReachAVoidProblem}.
\end{theorem}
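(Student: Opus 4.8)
The plan is to obtain this statement directly from Corollary~\ref{cor:pvf_bound}, which already packages existence, well-definedness, and uniform tracking-error bounds, and then to convert those bounds into the four requirements of \eqref{eq:ReachAVoidProblem} by elementary set algebra together with one application of Lemma~\ref{lem:BoundingForce}.

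First I would invoke Corollary~\ref{cor:pvf_bound}: its standing data ($\alpha_{\psi}$, $\overline{\Psi}$, $\overline{\mathcal{V}}_{1}$, and $\overline{\mathcal{V}}_{2}$ from \eqref{eq:BarV2}, plus the fixed $c_{1},c_{2}$ obeying \eqref{eq:Boundc1}--\eqref{eq:Boundc2} implicit in $\mathcal{L}_{p},\mathcal{L}_{v},\mathcal{L}_{f}$) coincides with ours; the assumption \eqref{eq:BoundingAcceleration} on $p_{d}$ holds; the third line of \eqref{eq:PropertiesOfDesiredTrajectory} is exactly the acceleration lower bound that corollary requires; and \eqref{eq:InitialSet} is verbatim its initial-value hypothesis. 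Consequently, for any initial value satisfying \eqref{eq:InitialSet}, the functions $p,v,R,\omega$ exist over $\intcc{0,T}$, condition \eqref{eq:Well-DefinednessRequirement} holds over $\intcc{0,T}$, and for every $t\in\intcc{0,T}$,
\begin{align*}
\norm{e_{p}(t)}&\le\mathcal{L}_{p}(\overline{\mathcal{V}}_{1},\overline{\mathcal{V}}_{2}),\quad
\norm{e_{v}(t)}\le\mathcal{L}_{v}(\overline{\mathcal{V}}_{1},\overline{\mathcal{V}}_{2}),\\
\norm{k_{p}e_{p}(t)+k_{v}e_{v}(t)}&\le\mathcal{L}_{f}(\overline{\mathcal{V}}_{1},\overline{\mathcal{V}}_{2}).
\end{align*}
This already settles the existence and well-definedness assertions, so it remains to verify the four conditions in \eqref{eq:ReachAVoidProblem}.

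Next I would push these error bounds through the inclusions of \eqref{eq:PropertiesOfDesiredTrajectory}. For the position, write $p(t)=p_{d}(t)+e_{p}(t)$ with $e_{p}(t)\in\mathcal{L}_{p}(\overline{\mathcal{V}}_{1},\overline{\mathcal{V}}_{2})\mathbb{B}_{3}$: the inclusion $(M-N)+N\subseteq M$ for Pontryagin differences, applied to the first line of \eqref{eq:PropertiesOfDesiredTrajectory}, gives $p(t)\in\mathcal{X}_{\mathrm{s}}\subseteq\mathcal{X}_{\mathrm{o}}$; since $p_{d}(t)\notin\mathcal{X}_{\mathrm{u}}+\mathcal{L}_{p}(\overline{\mathcal{V}}_{1},\overline{\mathcal{V}}_{2})\mathbb{B}_{3}$ and $\mathbb{B}_{3}=-\mathbb{B}_{3}$, no point of $p_{d}(t)+\mathcal{L}_{p}(\overline{\mathcal{V}}_{1},\overline{\mathcal{V}}_{2})\mathbb{B}_{3}$ lies in $\mathcal{X}_{\mathrm{u}}$, so $p(t)\notin\mathcal{X}_{\mathrm{u}}$; and the same Pontryagin inclusion applied to the last line of \eqref{eq:PropertiesOfDesiredTrajectory} gives $p(T)\in\mathcal{X}_{\mathrm{t}}$. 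For the velocity, $v(t)=\dot{p}_{d}(t)+e_{v}(t)$ and componentwise $\abs{v_{i}(t)}\le\abs{\dot{p}_{d,i}(t)}+\norm{e_{v}(t)}\le v_{\max,i}$ by the second line of \eqref{eq:PropertiesOfDesiredTrajectory}. For the thrust, \eqref{eqn:f} and $\norm{R(t)e_{3}}=1$ give $\abs{f(t)}\le\norm{F_{d}(t)}$, and Lemma~\ref{lem:BoundingForce} with the bound on $\norm{k_{p}e_{p}(t)+k_{v}e_{v}(t)}$ yields $\norm{F_{d}(t)}\le\mathcal{L}_{f}(\overline{\mathcal{V}}_{1},\overline{\mathcal{V}}_{2})+m\norm{a_{\max}}\le f_{\max}$ by \eqref{eq:SatisfyingThrustBound}. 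Collecting these four facts is exactly \eqref{eq:ReachAVoidProblem}.

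The main obstacle is essentially absent here: all the analytic content — local exponential stability, the explicit uniform bound $\mathcal{L}_{u}$, and the continuation argument upgrading local existence to well-definedness over the whole of $\intcc{0,T}$ — has already been established in Theorem~\ref{thm:LyapunovStability}, Corollary~\ref{cor:V_bound}, and Theorem~\ref{thm:Well-Definedness}, so the present result merely re-packages it. The only points that need care are the set-algebra identities (keeping straight that inflating each obstacle by $\mathcal{L}_{p}(\overline{\mathcal{V}}_{1},\overline{\mathcal{V}}_{2})\mathbb{B}_{3}$ before removing it is equivalent to forbidding $p_{d}$ from approaching any obstacle within Euclidean distance $\mathcal{L}_{p}(\overline{\mathcal{V}}_{1},\overline{\mathcal{V}}_{2})$, and that $(M-N)+N$ is contained in, but need not equal, $M$), and the bookkeeping distinction between the Euclidean bounds on $e_{p},e_{v}$ — which are what match the $\mathbb{B}_{3}$-inflations — and the componentwise velocity constraint.
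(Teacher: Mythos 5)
Your proposal is correct and follows exactly the route the paper intends: the paper states this theorem as a direct consequence of Corollary~\ref{cor:pvf_bound} without writing out the details, and your argument supplies precisely those details (invoking the corollary for existence, well-definedness, and the three uniform bounds, then translating them via the Pontryagin-difference/Minkowski-sum inclusions, the componentwise velocity estimate, and Lemma~\ref{lem:BoundingForce} with Cauchy--Schwarz for the thrust). The only wrinkle you encountered, the set $\mathcal{X}_{\mathrm{s}}$ versus $\mathcal{X}_{\mathrm{o}}$, is a notational inconsistency in the paper itself, and you resolved it in the intended way.
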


It is essential to show that the nominal point $(p_{0},v_{0},R_{0},\omega_{0})$ satisfies \eqref{eq:InitialSet}. The result below imposes conditions on $p_{d}$ to ensure that the nominal point satisfies \eqref{eq:InitialSet}.  
\begin{lemma}[See the proof in the Appendix]\label{lem:ConditionsOnPd}
If 
\begin{equation}\label{eq:InitialConditionsPd}
\begin{split} 
p_{d}(0)=p_{0},~\,\dot{p}_{d}(0)=v_{0},~
\ddot{p}_{d}(0)=0_{3},~\,\dddot{p}_{d}(0)=0_{3}.
\end{split}
\end{equation}
Then, for $(p(0),v(0),R(0),\omega(0))=(p_{0},v_{0},R_{0},\omega_{0})$, $R(0)=R_{d}(0)$, and
$\omega(0)=\omega_{d}(0).$
\end{lemma}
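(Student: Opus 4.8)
The plan is to unwind the definitions of the geometric control law from Section~\ref{sec:ErrorDynamics} at $t=0$ under the stated initial conditions on $p_d$ and the nominal initial state $(p_0,v_0,R_0,\omega_0)=(p_0,v_0,\mathrm{I}_3,0_3)$. First I would evaluate the error terms: since $p_d(0)=p_0$ and $\dot p_d(0)=v_0$, we get $e_p(0)=0_3$ and $e_v(0)=0_3$ directly from \eqref{eqn:ep}--\eqref{eqn:ev}. Plugging these into \eqref{eq:Fd} together with $\ddot p_d(0)=0_3$ gives $F_d(0)=mg\,e_3$, so $\norm{F_d(0)}=mg>0$ and $F_{d,3}(0)=mg=\norm{F_d(0)}$, which in particular means the well-definedness condition \eqref{eq:Well-DefinednessRequirement} holds at $t=0$ (the denominators in \eqref{eq:bd1bd2bd3} are $2mg\neq 0$). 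Substituting $F_d(0)=mg\,e_3$ into the formulas \eqref{eq:bd1bd2bd3} makes $F_{d,1}(0)=F_{d,2}(0)=0$ collapse almost every term, leaving $b_{1,d}(0)=e_1$, $b_{2,d}(0)=e_2$, $b_{3,d}(0)=e_3$, hence $R_d(0)=\mathrm{I}_3=R_0=R(0)$ by \eqref{eq:Rd}. This already gives the first claimed identity $R(0)=R_d(0)$.

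For the angular-velocity claim, I would use \eqref{eq:wd}: $\hat\omega_d(0)=R_d^\intercal(0)\dot R_d(0)=\dot R_d(0)$ since $R_d(0)=\mathrm{I}_3$. The task is then to show $\dot R_d(0)=0_{3\times 3}$, equivalently $\dot b_{i,d}(0)=0_3$ for $i\in\intcc{1;3}$. By the chain rule each $\dot b_{i,d}$ is a linear function (with smooth coefficients depending on $F_d$) of $\dot F_d$, so it suffices to show $\dot F_d(0)=0_3$. Differentiating \eqref{eq:Fd} gives $\dot F_d(t)=-k_p e_v(t)-k_v\dot e_v(t)+m\dddot p_d(t)$; at $t=0$ we have $e_v(0)=0_3$, $\dddot p_d(0)=0_3$ by \eqref{eq:InitialConditionsPd}, and $\dot e_v(0)=-\tfrac1m(k_p e_p(0)+k_v e_v(0)-\Delta_f(0))=\tfrac1m\Delta_f(0)$ by \eqref{eq:ev_dot}. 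Since $b_{3,d}(0)=e_3=R(0)e_3=b_3(0)$, the term $\Delta_f(0)=\norm{F_d(0)}((b_{3,d}(0)\cdot b_3(0))b_3(0)-b_{3,d}(0))=mg(e_3-e_3)=0_3$ vanishes, so $\dot e_v(0)=0_3$ and therefore $\dot F_d(0)=0_3$. Consequently $\dot R_d(0)=0_{3\times3}$, giving $\hat\omega_d(0)=0_{3\times 3}$, i.e.\ $\omega_d(0)=0_3=\omega_0=\omega(0)$.

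The only mild subtlety is justifying that $\dot b_{i,d}(0)$ is genuinely a well-defined linear image of $\dot F_d(0)$: this requires that $b_{i,d}$, as given by \eqref{eq:bd1bd2bd3}, be differentiable with respect to $F_d$ in a neighborhood of $F_d(0)=mg\,e_3$, which holds because there $\norm{F_d}>0$ and $\norm{F_d}+F_{d,3}>0$, so no denominator vanishes. I expect this to be the main (and only) obstacle, and it is readily dispatched by the same argument already used in the proof of Theorem~\ref{thm:Well-Definedness}. Everything else is direct substitution into the control-law formulas, so the proof is short.
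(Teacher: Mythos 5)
Your proposal is correct and follows essentially the same route as the paper's proof: show $e_p(0)=e_v(0)=0_3$, deduce $R_d(0)=\mathrm{I}_3=R(0)$ from \eqref{eq:Rd}--\eqref{eq:bd1bd2bd3}, then use $\Delta_f(0)=0_3$ (since $b_{3,d}(0)=b_3(0)$) together with $\dddot p_d(0)=0_3$ to get $\dot F_d(0)=0_3$, hence $\dot R_d(0)=0_{3\times 3}$ and $\omega_d(0)=0_3=\omega(0)$. Your version is merely a bit more explicit about evaluating $F_d(0)=mg\,e_3$, the columns $b_{i,d}(0)$, and the differentiability of $b_{i,d}$ with respect to $F_d$ near $mg\,e_3$.
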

\begin{remark}
In order to satisfy \eqref{eq:InitialConditionsPd} while having the  first three conditions of \eqref{eq:PropertiesOfDesiredTrajectory} holding at $t=0$, we  need the choices of  $\alpha_{\psi} \in \intoo{0,1}$, $\overline{\Psi}\in \intoo{0,2}$, and $\overline{\mathcal{V}}_{1}\in \intoo{0,\infty}$ to ensure that 
\begin{align*}
p_{0}&\in (\mathcal{X}_{\mathrm{s}}-\mathcal{L}_{p}(\overline{\mathcal{V}}_{1},\overline{\mathcal{V}}_{2})\mathbb{B}_{3})\setminus (\mathcal{X}_{\mathrm{u}}+\mathcal{L}_{p}(\overline{\mathcal{V}}_{1},\overline{\mathcal{V}}_{2})\mathbb{B}_{3}),\\
\abs{v_{0}}&
\leq v_{\max}-\mathcal{L}_{v}(\overline{\mathcal{V}}_{1},\overline{\mathcal{V}}_{2})1_{3},\\   
0& \geq \mathcal{L}_{f}(\overline{\mathcal{V}}_{1},\overline{\mathcal{V}}_{2})-mg+\varepsilon.
\end{align*}
\end{remark}

The result above indicates that if $(p(0),v(0),R(0),\omega(0))=(p_{0},v_{0},R_{0},\omega_{0})$, where $p_{d}$ satisfies \eqref{eq:InitialConditionsPd}, then we have $\Psi(0)=\norm{e_{\omega}(0)}=V_{1}(0)=0$. 
Using  a  continuity argument, we have:
\begin{corollary}
Let $\alpha_{\psi}, \in \intoo{0,1}$, $\overline{\Psi}\in \intoo{0,2}$, and $\overline{\mathcal{V}}_{1}\in \intoo{0,\infty}$ be given and $\bar{\mathcal{V}}_{2}$ is computed according to \eqref{eq:BarV2}.      Assume  $p_{d}$ satisfies    \eqref{eq:BoundingAcceleration}, \eqref{eq:PropertiesOfDesiredTrajectory}, and \eqref{eq:InitialConditionsPd}, then there exists a neighborhood of $(p_{0},v_{0},R_{0},\omega_{0})$, relative to $\mathbb{R}^{3}\times \mathbb{R}^{3}\times \SO\times \mathbb{R}^{3}$,  such that for any initial value in that neighborhood, the conditions in  \eqref{eq:InitialSet} hold.
\end{corollary}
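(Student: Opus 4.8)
The plan is a continuity argument built directly on Lemma~\ref{lem:ConditionsOnPd}. First I would evaluate the three quantities appearing on the left-hand sides of \eqref{eq:InitialSet} at the nominal value $(p(0),v(0),R(0),\omega(0))=(p_{0},v_{0},R_{0},\omega_{0})$. By \eqref{eq:InitialConditionsPd} we have $p_{d}(0)=p_{0}$ and $\dot{p}_{d}(0)=v_{0}$, so $e_{p}(0)=e_{v}(0)=0_{3}$ by \eqref{eqn:ep}, \eqref{eqn:ev}, and hence $V_{1}(0)=z_{1}^{\intercal}(0)M_{1}z_{1}(0)=0$ by \eqref{eq:V1}. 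By Lemma~\ref{lem:ConditionsOnPd}, $R(0)=R_{d}(0)$ and $\omega(0)=\omega_{d}(0)$, so \eqref{eq:Psi} gives $\Psi(0)=\frac{1}{2}\tr(\mathrm{I}_{3}-R_{d}^{\intercal}(0)R(0))=0$ and \eqref{eqn:eW} gives $e_{\omega}(0)=\omega(0)-R^{\intercal}(0)R_{d}(0)\omega_{d}(0)=0_{3}$, whence $\frac{1}{2}e_{\omega}^{\intercal}(0)Je_{\omega}(0)=0$. Thus all three left-hand sides of \eqref{eq:InitialSet} vanish at the nominal point, while the right-hand sides $\alpha_{\psi}\overline{\Psi}$, $k_{R}(1-\alpha_{\psi})\overline{\Psi}$ and $\overline{\mathcal{V}}_{1}$ are strictly positive (recall $k_{R}>0$, $\alpha_{\psi}\in\intoo{0,1}$, $\overline{\Psi}\in\intoo{0,2}$, $\overline{\mathcal{V}}_{1}\in\intoo{0,\infty}$).

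Next I would show that the three left-hand sides, viewed as functions of the initial value $(p(0),v(0),R(0),\omega(0))\in\mathbb{R}^{3}\times\mathbb{R}^{3}\times\SO\times\mathbb{R}^{3}$, are continuous on a neighborhood of the nominal point. The crucial point is non-degeneracy at $t=0$: since $\ddot{p}_{d}(0)=0_{3}$ by \eqref{eq:InitialConditionsPd} and $e_{p}(0)=e_{v}(0)=0_{3}$ at the nominal point, \eqref{eq:Fd} yields $F_{d}(0)=mg\,e_{3}$, so $\norm{F_{d}(0)}=mg>0$ and $\norm{F_{d}(0)}+F_{d,3}(0)=2mg>0$; here $g>0$ follows from the third relation of \eqref{eq:PropertiesOfDesiredTrajectory} evaluated at $t=0$, together with $\ddot{p}_{d,3}(0)=0$, $\mathcal{L}_{f}\geq 0$, and $\varepsilon>0$. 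Since $F_{d}(0)$ depends continuously on $(p(0),v(0))$ through $e_{p}(0),e_{v}(0)$, condition \eqref{eq:Well-DefinednessRequirement} at $t=0$ persists on a neighborhood of the nominal point; on that neighborhood the formulas \eqref{eq:bd1bd2bd3}, \eqref{eq:Rd}, \eqref{eq:wd}, together with the relations \eqref{eq:ep_dot}--\eqref{eq:ev_dot} for $\dot{e}_{p}$, $\dot{e}_{v}$ and the definition of $\Delta_{f}$, express $R_{d}(0)$, $\dot{R}_{d}(0)$, and $\omega_{d}(0)$ — and therefore $\Psi(0)$, $e_{\omega}(0)$, and $V_{1}(0)$ — as continuous functions of the initial data near the nominal point.

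Finally, let $h_{1},h_{2},h_{3}$ denote the three left-hand-side functions of \eqref{eq:InitialSet} and $\mathcal{N}$ a neighborhood of the nominal point on which they are continuous. Then $\mathcal{N}\cap\{h_{1}<\alpha_{\psi}\overline{\Psi}\}\cap\{h_{2}<k_{R}(1-\alpha_{\psi})\overline{\Psi}\}\cap\{h_{3}<\overline{\mathcal{V}}_{1}\}$ is open in $\mathbb{R}^{3}\times\mathbb{R}^{3}\times\SO\times\mathbb{R}^{3}$, contains $(p_{0},v_{0},R_{0},\omega_{0})$, and on it \eqref{eq:InitialSet} holds (in fact with strict inequalities); this is the claimed neighborhood. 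I expect the only genuinely delicate step to be the second paragraph, namely verifying that $\omega_{d}(0)$ — which through $\dot{R}_{d}(0)$ and $\dot{F}_{d}(0)$ depends on $\dot{e}_{v}(0)$, and hence via \eqref{eq:ev_dot} and the definition of $\Delta_{f}$ on $R(0)$ — is a well-defined continuous function of the initial data; once the non-degeneracy $F_{d}(0)\neq 0_{3}$, $\norm{F_{d}(0)}+F_{d,3}(0)\neq 0$ has been secured on a neighborhood, this reduces to composing smooth maps.
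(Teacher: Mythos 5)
Your proposal is correct and follows essentially the same route as the paper, which simply invokes "a continuity argument" after noting (via Lemma~\ref{lem:ConditionsOnPd}) that $\Psi(0)=\norm{e_{\omega}(0)}=V_{1}(0)=0$ at the nominal point while the right-hand sides of \eqref{eq:InitialSet} are strictly positive. Your second paragraph, verifying the non-degeneracy $\norm{F_{d}(0)}>0$ and $\norm{F_{d}(0)}+F_{d,3}(0)>0$ so that $R_{d}(0)$ and $\omega_{d}(0)$ depend continuously on the initial data, supplies a detail the paper leaves implicit; it is a welcome and correct addition rather than a deviation.
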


\section{Trajectory Generation}
\label{sec:TrajectoryGeneration}
 Let $\alpha_{\psi}, \in \intoo{0,1}$, $\overline{\Psi}\in \intoo{0,2}$, and $\overline{\mathcal{V}}_{1}\in \intoo{0,\infty}$ be given, $\bar{\mathcal{V}}_{2}$ be computed according to \eqref{eq:BarV2}, and $\mathcal{L}_{p}(\overline{\mathcal{V}}_{1},\overline{\mathcal{V}}_{2})$, $\mathcal{L}_{v}(\overline{\mathcal{V}}_{1},\overline{\mathcal{V}}_{2})$, and $\mathcal{L}_{f}(\overline{\mathcal{V}}_{1},\overline{\mathcal{V}}_{2})$ be computed according to \eqref{eq:L_p}, \eqref{eq:L_v}, and \eqref{eq:L_f}, respectively. In this section, we illustrate how a desired trajectory $p_{d}$ is computed. Our method combines some of the well-established  planning approaches \cite{marcucci2023motion,kavraki1996probabilistic} with efficient hyper-rectangular computations \cite{Serry2024Safe}. Informally speaking, our approach relies on a computing a safe tube  of connected  hyper-rectangles that do not intersect with the unsafe set, then conducting an optimization procedure that results in a polynomial trajectory that passes through the safe tube. Below, we illustrate how the safe tube is obtained.
\subsection{Computing a safe tube}
In view of Theorem \ref{thm:ConditionsForSolvingReachAvoidProblem}, define 
\begin{align*}
\tilde{\mathcal{X}}_{\mathrm{o}}&\defas \mathcal{X}_{\mathrm{o}}-\mathcal{L}_{p}(\overline{\mathcal{V}}_{1},\overline{\mathcal{V}}_{2}) \mathbb{B}_{3}^{\infty},\\
\tilde{\mathcal{X}}_{\mathrm{u}}^{(i)}&\defas \Hintcc{\underline{x}_{\mathrm{u}}^{(i)},\overline{x}_{\mathrm{u}}^{(i)}}+\mathcal{L}_{p}(\overline{\mathcal{V}}_{1},\overline{\mathcal{V}}_{2}) \mathbb{B}_{3}^{\infty},~i\in \intcc{1;N_{u}},\\
\tilde{\mathcal{X}}_{\mathrm{u}}&\defas \bigcup_{i=1}^{N_{\mathrm{u}}}\tilde{\mathcal{X}}_{\mathrm{u}}^{(i)},\\
\tilde{\mathcal{X}}_{\mathrm{t}}&\defas {\mathcal{X}}_{\mathrm{t}}-\mathcal{L}_{p}(\overline{\mathcal{V}}_{1},\overline{\mathcal{V}}_{2}) \mathbb{B}_{3}^{\infty}.
\end{align*}
The sets above correspond to an inflated version of the unsafe set ($\tilde{\mathcal{X}}_{\mathrm{u}}$) and deflated versions of the operating domain and target set ($\tilde{\mathcal{X}}_{\mathrm{o}}$ and $\tilde{\mathcal{X}}_{\mathrm{t}}$), where deflation and inflation are based on the position error bound $\mathcal{L}_{p}(\overline{\mathcal{V}}_{1},\overline{\mathcal{V}}_{2})$ defined by \eqref{eq:L_p}. Note that in our deflation and inflation procedure, we use the unit ball $\mathbb{B}_{3}^{\infty}$ instead of $\mathbb{B}_{3}$, which is more conservative yet computationally more efficient as Minkowski sums and differences of hyper-rectangles can be computed exactly by summing/subtracting  centers and  radii.  
The safe tube is obtained by  generating $N_{s}+1$ waypoints $\mathtt{p}_{0},\mathtt{p}_{1},\ldots, \mathtt{p}_{N_{s}}\in \mathbb{R}^{3}$, and associated  vector radii $\mathtt{r}_{0},\mathtt{r}_{1},\mathtt{r}_{2},\ldots, \mathtt{r}_{N_{s}}\in \mathbb{R}_{+}^{3}$, satisfying
\begin{equation}\label{eq:DesiredPropertiesOfWayPoints}
\begin{split}
\mathtt{p}_{0}&=p_{0},\\
\mathtt{p}_{i+1}&\in \mathtt{p}_{i}+\Hintcc{-\mathtt{r}_{i},\mathtt{r}_{i}},~i\in \intcc{0;N_{s}-1},\\
\mathtt{p}_{i}+\Hintcc{-\mathtt{r}_{i},\mathtt{r}_{i}}&\subseteq \tilde{\mathcal{X}}_{\mathrm{o}}\setminus \tilde{\mathcal{X}}_{\mathrm{u}} ,~i\in \intcc{0;N_{s}-1},\\
\mathtt{p}_{N_{s}}+\Hintcc{-\mathtt{r}_{N_{s}},\mathtt{r}_{N_{s}}}&\subseteq  \tilde{\mathcal{X}}_{\mathrm{t}}.
\end{split}
\end{equation}
Note that if a desired trajectory $p_{d}\colon \intcc{0,T}\rightarrow \mathbb{R}^{3}$ has values contained in the safe tube, with $p_{d}(T)$ contained in $\mathtt{p}_{N_{s}}+\Hintcc{-\mathtt{r}_{N_{s}},\mathtt{r}_{N_{s}}}$, the first and last conditions of \eqref{eq:PropertiesOfDesiredTrajectory} hold.

The waypoints and the associated safe radii can be estimated by integrating  sampling-based approaches \cite{kavraki1996probabilistic}, with  safe and efficient  hyper-rectangular set-based computations \cite{Serry2024Safe}.  The following technical lemmas illustrate   how  safe hyper-rectangular sets within the operating domain can be computed.
\begin{lemma}[See the proof in the Appendix]\label{lem:SafeBox}
    Let $v\in\Hintcc{a,b}\subseteq \mathbb{R}^{n}$, where $a,b\in \mathbb{R}^{n}$, $a\leq b$, then
    $\mathcal{H}(v,\Hintcc{a,b}) \subseteq \Hintcc{a,b}$, where 
    \begin{equation} \nonumber
\mathcal{H}(v,\Hintcc{a,b})\defas v+\Hintcc{-r,r},
    \end{equation}
 and $
    r=\mathrm{radius}(\Hintcc{a,b})-\abs{\mathrm{center}(\Hintcc{a,b})-v}
    $.   
\end{lemma}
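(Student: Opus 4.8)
The plan is to show that the hyper-rectangle $\mathcal{H}(v,\Hintcc{a,b})=v+\Hintcc{-r,r}$ with $r=\mathrm{radius}(\Hintcc{a,b})-\abs{\mathrm{center}(\Hintcc{a,b})-v}$ is contained in $\Hintcc{a,b}$ by reducing the claim to a component-wise verification. Write $c=\mathrm{center}(\Hintcc{a,b})=(a+b)/2$ and $\rho=\mathrm{radius}(\Hintcc{a,b})=(b-a)/2$, so that $\Hintcc{a,b}=\Set{x\in\mathbb{R}^{n}}{\abs{x-c}\leq \rho}$ (interpreting the inequality component-wise). Since $v\in\Hintcc{a,b}$, we have $\abs{v-c}\leq \rho$ component-wise, which already shows $r=\rho-\abs{c-v}\geq 0_{n}$, so $\Hintcc{-r,r}$ is a well-defined (possibly degenerate) hyper-rectangle.

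Next I would take an arbitrary $x\in \mathcal{H}(v,\Hintcc{a,b})$, so $x=v+w$ with $\abs{w}\leq r$ component-wise, and estimate $\abs{x-c}$ component-wise using the triangle inequality: $\abs{x-c}=\abs{(v-c)+w}\leq \abs{v-c}+\abs{w}\leq \abs{v-c}+r = \abs{v-c}+\rho-\abs{c-v}=\rho$. Hence $\abs{x-c}\leq \rho$ component-wise, i.e.\ $x\in\Hintcc{a,b}$. Since $x$ was arbitrary, $\mathcal{H}(v,\Hintcc{a,b})\subseteq\Hintcc{a,b}$, which is the assertion.

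There is essentially no hard part here; the only things to be careful about are (i) that all norm/absolute-value inequalities are read component-wise in $\mathbb{R}^{n}$, as set up in the Preliminaries, and that the component-wise triangle inequality $\abs{u+w}\leq\abs{u}+\abs{w}$ used above is exactly the ordinary scalar triangle inequality applied in each coordinate; and (ii) confirming $r\geq 0_{n}$ so that the expression $v+\Hintcc{-r,r}$ denotes a genuine hyper-rectangle rather than an empty set. Both follow directly from $v\in\Hintcc{a,b}\iff \abs{v-c}\leq\rho$. I would present the argument in two short displayed lines corresponding to these two steps.
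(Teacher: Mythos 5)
Your proof is correct and is essentially identical to the paper's: both apply the component-wise triangle inequality to bound $\abs{x-c}$ by $\abs{v-c}+r=\rho$. The only (harmless) addition is your explicit check that $r\geq 0_{n}$, which the paper leaves implicit.
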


\begin{lemma}[See the proof in the Appendix]\label{lem:SafeStrip}
Let $x\in \mathbb{R}^{n}$ and $\Hintcc{a,b}\subseteq \mathbb{R}^{n}$, where $a,b\in \mathbb{R}^{n}$, $a\leq b$. Moreover, let $r=\mathrm{radius}(\Hintcc{a,b})$ and $c=\mathrm{center}(\Hintcc{a,b})$. Then, $\min_{y\in \Hintcc{a,b}}\norm{x-y}_{\infty}=\norm{x-y^{\ast}}_{\infty}$, where $y^{\ast}=\mathrm{ClosestPoint}(x,\Hintcc{a,b})$,   
\begin{equation}\label{eq:ClosestPoint}\nonumber
(\mathrm{ClosestPoint}(x,\Hintcc{a,b}))_{i}\defas \begin{cases}
  x_{i},~ x_{i}\in \intcc{a_{i},b_{i}},\\
  c_{i}+r_{i}\mathrm{sgn}(x_{i}-c_{i}),~\text{otherwise},
\end{cases}
\end{equation}
$i\in \intcc{1;n}$, and $\mathrm{sgn}(\cdot)$ is the signum function. Now, assume $x\not\in\Hintcc{a,b}$,  
 let $\tilde{i}\in \intcc{1;p}$ such that $\abs{x_{\tilde{i}}-y^{\ast}_{\tilde{i}}}=\norm{x-y^{\ast}}_{\infty}>0$, and define
\begin{equation}\label{eq:SafeStrip} \nonumber
\mathcal{S}(x,\Hintcc{a,b},\alpha)\defas\Set{z\in \mathbb{R}^{n}}{\abs{z_{\tilde{i}}-x_{\tilde{i}}}\leq \alpha \norm{x-y^{\ast}}_{\infty}},
\end{equation}
where $\alpha \in \intco{0,1}$.
Then, $\mathcal{S}(x,\Hintcc{a,b},\alpha )\cap \Hintcc{a,b} = \emptyset$. 
\end{lemma}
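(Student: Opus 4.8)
The statement has three parts and I would handle them in order. First, the closest-point formula: the $\infty$-norm distance from $x$ to the box $\Hintcc{a,b}$ decouples coordinatewise, since $\norm{x-y}_{\infty}=\max_{i}\abs{x_i-y_i}$ and the constraints $a_i\le y_i\le b_i$ are independent across $i$. For each coordinate, the value of $y_i\in\intcc{a_i,b_i}$ minimizing $\abs{x_i-y_i}$ is the standard projection onto an interval: it is $x_i$ if $x_i\in\intcc{a_i,b_i}$, and otherwise the nearer endpoint, which (writing $c_i=(a_i+b_i)/2$, $r_i=(b_i-a_i)/2$) is exactly $c_i+r_i\,\mathrm{sgn}(x_i-c_i)$. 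Taking the coordinatewise minimizers simultaneously minimizes the max, so $y^{\ast}=\mathrm{ClosestPoint}(x,\Hintcc{a,b})$ attains $\min_{y\in\Hintcc{a,b}}\norm{x-y}_{\infty}$. This part is routine once the decoupling is stated.

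Second, existence of the index $\tilde i$: if $x\notin\Hintcc{a,b}$ then $x\ne y^{\ast}$ (otherwise $x=y^{\ast}\in\Hintcc{a,b}$), so $\norm{x-y^{\ast}}_{\infty}>0$, and by definition of the $\infty$-norm some coordinate $\tilde i$ achieves this maximum, i.e.\ $\abs{x_{\tilde i}-y^{\ast}_{\tilde i}}=\norm{x-y^{\ast}}_{\infty}>0$. In particular, for this coordinate $x_{\tilde i}\notin\intcc{a_{\tilde i},b_{\tilde i}}$, so $y^{\ast}_{\tilde i}$ is the endpoint of $\intcc{a_{\tilde i},b_{\tilde i}}$ nearest to $x_{\tilde i}$.

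Third, the separation claim, which is the substantive step. Fix any $z\in\Hintcc{a,b}$; I must show $\abs{z_{\tilde i}-x_{\tilde i}}>\alpha\norm{x-y^{\ast}}_{\infty}$, so $z\notin\mathcal{S}(x,\Hintcc{a,b},\alpha)$. Since $z_{\tilde i}\in\intcc{a_{\tilde i},b_{\tilde i}}$ and $y^{\ast}_{\tilde i}$ is the point of that interval closest to $x_{\tilde i}$ (which lies strictly outside the interval on one side), $z_{\tilde i}$ lies on the same side of $x_{\tilde i}$ as $y^{\ast}_{\tilde i}$, hence $\abs{z_{\tilde i}-x_{\tilde i}}\ge\abs{y^{\ast}_{\tilde i}-x_{\tilde i}}=\norm{x-y^{\ast}}_{\infty}>\alpha\norm{x-y^{\ast}}_{\infty}$, the last inequality because $\alpha\in\intco{0,1}$ and $\norm{x-y^{\ast}}_{\infty}>0$. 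Since $z\in\Hintcc{a,b}$ was arbitrary, $\mathcal{S}(x,\Hintcc{a,b},\alpha)\cap\Hintcc{a,b}=\emptyset$.

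The only mild subtlety — the ``main obstacle'' such as it is — is making precise the claim that every $z_{\tilde i}\in\intcc{a_{\tilde i},b_{\tilde i}}$ is at least as far from $x_{\tilde i}$ as $y^{\ast}_{\tilde i}$ is; this follows from a short case check on whether $x_{\tilde i}<a_{\tilde i}$ or $x_{\tilde i}>b_{\tilde i}$ (monotonicity of $t\mapsto\abs{t-x_{\tilde i}}$ on each side), together with the fact established in the second part that $x_{\tilde i}\notin\intcc{a_{\tilde i},b_{\tilde i}}$. Everything else is bookkeeping with coordinatewise $\infty$-norm arguments.
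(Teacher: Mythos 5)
Your proposal is correct and follows essentially the same route as the paper: the closest-point claim via coordinatewise decoupling of the $\infty$-norm, and the disjointness claim via the observation that every point of $\intcc{a_{\tilde i},b_{\tilde i}}$ is at least as far from $x_{\tilde i}$ as the nearest endpoint $y^{\ast}_{\tilde i}$, which exceeds $\alpha\norm{x-y^{\ast}}_{\infty}$ since $\alpha<1$. The only cosmetic difference is the direction of the disjointness argument: you take $z$ in the box and show it escapes the strip, while the paper takes $z$ in the strip and shows, via the center--radius inequality $\abs{z_{\tilde i}-c_{\tilde i}}>r_{\tilde i}$, that it escapes the box; the underlying geometric fact is identical.
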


From Lemmas \ref{lem:SafeBox} and \ref{lem:SafeStrip}, we deduce:
\begin{corollary}
    \label{lem:SafeBox2}
    Let $y\in \tilde{\mathcal{X}}_{\mathrm{o}}\setminus\tilde{\mathcal{X}}_{\mathrm{u}}$, fix $\alpha \in \intco{0,1}$, and define 
    \begin{equation}\nonumber \mathfrak{R}(y,\tilde{\mathcal{X}}_{\mathrm{o}},\tilde{\mathcal{X}}_{\mathrm{u}},\alpha )\defas\mathcal{H}(y,\tilde{\mathcal{X}}_{\mathrm{o}}) \bigcap \left(\cap_{i=1}^{N_\mathrm{u}}\mathcal{S}(y,\tilde{\mathcal{X}}_{\mathrm{u}}^{(i)},\alpha)\right).
    \end{equation}
    Then, $\mathfrak{R}(y,\tilde{\mathcal{X}}_{\mathrm{o}},\tilde{\mathcal{X}}_{\mathrm{u}},\alpha )\subseteq \tilde{\mathcal{X}_{\mathrm{o}}}\setminus \tilde{\mathcal{X}}_{\mathrm{u}}$.
\end{corollary}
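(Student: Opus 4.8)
The plan is to prove the two containments $\mathfrak{R}\subseteq\tilde{\mathcal{X}}_{\mathrm{o}}$ and $\mathfrak{R}\cap\tilde{\mathcal{X}}_{\mathrm{u}}=\emptyset$ separately, each by invoking one of the two immediately preceding lemmas, and then combine them. First I would record that $\tilde{\mathcal{X}}_{\mathrm{o}}=\mathcal{X}_{\mathrm{o}}-\mathcal{L}_{p}(\overline{\mathcal{V}}_{1},\overline{\mathcal{V}}_{2})\mathbb{B}_{3}^{\infty}$ is itself a hyper-rectangle (a Pontryagin difference of hyper-rectangles is a hyper-rectangle, obtained by shrinking the radius while keeping the center), and that it is nonempty because $y$ lies in it; hence it has the form $\Hintcc{a,b}$ with $a\leq b$ finite, and Lemma \ref{lem:SafeBox} applies with $v=y$ to give $\mathcal{H}(y,\tilde{\mathcal{X}}_{\mathrm{o}})\subseteq\tilde{\mathcal{X}}_{\mathrm{o}}$. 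Since $\mathfrak{R}$ is an intersection one of whose factors is $\mathcal{H}(y,\tilde{\mathcal{X}}_{\mathrm{o}})$, this already yields $\mathfrak{R}\subseteq\tilde{\mathcal{X}}_{\mathrm{o}}$.

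Next I would handle the obstacles. Because $y\in\tilde{\mathcal{X}}_{\mathrm{o}}\setminus\tilde{\mathcal{X}}_{\mathrm{u}}$ and $\tilde{\mathcal{X}}_{\mathrm{u}}=\bigcup_{i=1}^{N_{\mathrm{u}}}\tilde{\mathcal{X}}_{\mathrm{u}}^{(i)}$, we have $y\notin\tilde{\mathcal{X}}_{\mathrm{u}}^{(i)}$ for every $i\in\intcc{1;N_{\mathrm{u}}}$, and each $\tilde{\mathcal{X}}_{\mathrm{u}}^{(i)}=\Hintcc{\underline{x}_{\mathrm{u}}^{(i)},\overline{x}_{\mathrm{u}}^{(i)}}+\mathcal{L}_{p}(\overline{\mathcal{V}}_{1},\overline{\mathcal{V}}_{2})\mathbb{B}_{3}^{\infty}$ is a hyper-rectangle. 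Thus the hypotheses of Lemma \ref{lem:SafeStrip} are met with $x=y$ and the box $\tilde{\mathcal{X}}_{\mathrm{u}}^{(i)}$ (the fixed $\alpha\in\intco{0,1}$ is exactly what that lemma requires), and it gives $\mathcal{S}(y,\tilde{\mathcal{X}}_{\mathrm{u}}^{(i)},\alpha)\cap\tilde{\mathcal{X}}_{\mathrm{u}}^{(i)}=\emptyset$ for each $i$. Since $\mathfrak{R}\subseteq\mathcal{S}(y,\tilde{\mathcal{X}}_{\mathrm{u}}^{(i)},\alpha)$ for every $i$, we get $\mathfrak{R}\cap\tilde{\mathcal{X}}_{\mathrm{u}}^{(i)}=\emptyset$ for every $i$, and taking the union over $i\in\intcc{1;N_{\mathrm{u}}}$ gives $\mathfrak{R}\cap\tilde{\mathcal{X}}_{\mathrm{u}}=\emptyset$.

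Combining $\mathfrak{R}\subseteq\tilde{\mathcal{X}}_{\mathrm{o}}$ with $\mathfrak{R}\cap\tilde{\mathcal{X}}_{\mathrm{u}}=\emptyset$ yields $\mathfrak{R}\subseteq\tilde{\mathcal{X}}_{\mathrm{o}}\setminus\tilde{\mathcal{X}}_{\mathrm{u}}$, which is the assertion. There is no substantive obstacle here; the proof is a direct bookkeeping combination of Lemmas \ref{lem:SafeBox} and \ref{lem:SafeStrip}. The only points that require a moment of care are structural: the safe-strip construction $\mathcal{S}$ is defined only when the base point lies \emph{outside} the box, so one must first note that $y\notin\tilde{\mathcal{X}}_{\mathrm{u}}^{(i)}$ for each $i$ — immediate from $y\notin\tilde{\mathcal{X}}_{\mathrm{u}}$ and the fact that $\tilde{\mathcal{X}}_{\mathrm{u}}$ is the union of the $\tilde{\mathcal{X}}_{\mathrm{u}}^{(i)}$ — and one must verify that $\tilde{\mathcal{X}}_{\mathrm{o}}$ is a genuine (finite, nonempty) hyper-rectangle so that Lemma \ref{lem:SafeBox} applies, which holds because $y$ witnesses nonemptiness and Pontryagin differences of hyper-rectangles are hyper-rectangles.
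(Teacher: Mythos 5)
Your proof is correct and follows exactly the route the paper intends: the corollary is stated as a direct deduction from Lemmas \ref{lem:SafeBox} and \ref{lem:SafeStrip}, and you apply the former to get $\mathfrak{R}\subseteq\tilde{\mathcal{X}}_{\mathrm{o}}$ and the latter, obstacle by obstacle, to get $\mathfrak{R}\cap\tilde{\mathcal{X}}_{\mathrm{u}}=\emptyset$. Your explicit checks that $\tilde{\mathcal{X}}_{\mathrm{o}}$ is a nonempty hyper-rectangle and that $y\notin\tilde{\mathcal{X}}_{\mathrm{u}}^{(i)}$ for each $i$ (so the safe strips are well-defined) are appropriate bookkeeping that the paper leaves implicit.
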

 Using the safe hyper-rectangles in Lemma \ref{lem:SafeBox} and Corollary \ref{lem:SafeBox2},  
 we  construct a  rapidly exploring random tree (RRT) $\mathscr{T}=(\mathscr{V},\mathscr{E})$, with a set of vertices $\mathscr{V}$ and a set of edges $\mathscr{E}$, such that each vertex (except the first one) is contained in a safe hyper-rectangular neighbor of the parent vertex. The metric  used in constructing the tree is as follows: the distance between a sample point $x_{s}$ and a point $x_{v}$ in the vertex set $\mathscr{V}$ is the distance between the sample point and a safe hyper-rectangle containing $x_{v}$, computed according to Lemma \ref{lem:SafeBox2}. If $x_{v}$ happens to be associated with the shortest distance (among all the vertices of the tree), the new point added to the tree is a point within the safe hyper-rectangle that corresponds to that shortest distance. 
 
 Let $N_{v}$ be the maximum number of vertices,   $C_{\mathrm{sample}}\in \intoc{0,1}$ be a parameter that determines the percentage of points to be sampled from $\tilde{\mathcal{X}}_{\mathrm{o}}$ and $\tilde{\mathcal{X}}_{\mathrm{t}}$, and $\mathtt{sample}$ be a sampling function such that $\mathtt{sample}(S)$  randomly generates a point from the set $S$. The random tree is then computed according to Algorithm \ref{Alg:RRTAlgorithm}, where the parameter $\alpha$ used in constructing the safe hyper-rectangles is user-defined.   
\begin{algorithm}
\SetAlgoLined
  $i \gets 1$, $x_{i}\gets p_{0}$,
 $\mathscr{V} \gets \{x_{i}\}$, $\mathscr{E}\gets \emptyset$

 \While{$i\leq  N_{v}$}{

 \eIf{$i\leq C_{\mathrm{sample}} N_{v}$}
  {$x_{s}\gets \mathtt{sample}(\tilde{\mathcal{X}}_{\mathrm{s}}\setminus\tilde{\mathcal{X}}_{\mathrm{u}})$}{$x_{s}\gets \mathtt{sample}(\tilde{\mathcal{X}}_{\mathrm{t}})$}

{$d \gets \infty$,  $j\gets 1$}

  \While{$j\leq \mathrm{card}(\mathscr{V})$}{
  $d'\gets \norm{x_{s}-\mathrm{ClosestPoint}(x_{s},\mathfrak{R}(x_{j},\tilde{\mathcal{X}}_{\mathrm{o}},\tilde{\mathcal{X}}_{\mathrm{u}},\alpha ))}_{\infty}$
 
  \If{ $d'<d$}{
  $d\gets d'$, $i_{\mathrm{near}}\gets j$
  }

$j\gets j+1$
  
  }

   $i\gets i+1$

${x}_{i}\gets \mathrm{ClosestPoint}(x_{\mathrm{s}},\mathfrak{R}({x}_{i_{\mathrm{near}}},\tilde{\mathcal{X}}_{\mathrm{o}},\tilde{\mathcal{X}}_{\mathrm{u}},\alpha ))$

$\mathscr{V}\gets \mathscr{V}\cup \{x_{i}\}$, $\mathscr{E}\gets \mathscr{E}\cup \{(x_{i_{\mathrm{near}}},x_{i})\}$

  \If{$x_{i}\in \tilde{\mathcal{X}}_{\mathrm{t}}$}{
   break
  }
  
 }
\KwResult{$\mathscr{T}=(\mathscr{V},\mathscr{E})$}
 \caption{Constructing an RRT}
 \label{Alg:RRTAlgorithm}
 \end{algorithm}
Once the tree $\mathscr{T}$ is constructed, and assuming there exists a point $x_{t}\in \mathscr{V}$ such that $x_{t}\in \tilde{\mathcal{X}}_{\mathrm{t}}$, a shortest path algorithm can then be conducted over $\mathscr{T}$, connecting $p_{0}$ and $x_{t}$, and resulting in $N_{s}+1$ points $\mathtt{p}_{0}=p_{0},\mathtt{p}_{1},\ldots, \mathtt{p}_{N_{s}}=x_{t}$. After that, the vector radii for the safe hyper-rectangles can be computed as follows:
\begin{equation}
\label{eq:RadiiForSafeTube}
\begin{split}
\mathtt{r}_{i}&=\mathrm{radius}(\mathfrak{R}(\mathtt{p}_{i},\tilde{\mathcal{X}}_{\mathrm{o}},\tilde{\mathcal{X}}_{\mathrm{u}},\alpha)),~i\in \intcc{0;N_{s}-1},\\
\mathtt{r}_{N_{s}}&=\mathrm{radius}(\mathcal{H}(\mathtt{p}_{N_{s}},\tilde{\mathcal{X}_{\mathrm{t}}})).
\end{split}
\end{equation}
By construction, the points  $\mathtt{p}_{0},\mathtt{p}_{1},\ldots, \mathtt{p}_{N_{s}}\in \mathbb{R}^{3}$, and associated vector radii $\mathtt{r}_{0},\mathtt{r}_{1},\mathtt{r}_{2},\ldots, \mathtt{r}_{N_{s}}\in \mathbb{R}_{+}^{3}$ satisfy \eqref{eq:DesiredPropertiesOfWayPoints}.

\subsection{Piecewise B\'ezier curve}
Once the safe tube is obtained, we compute the desired trajectory using B\'ezier curves. We assume $p_{d}$ to be a piecewise continuous function with $N_{s}$ segments, where each segments consists of $N_{p}$+1 points. The parameter $N_{p}$ is user-defined. Let $\delta_{i},~i\in \intcc{1;N_{s}},$ be the duration of each segment, and define 
$
t_{0}=0,~t_i=t_{i-1}+\delta_{i},~i\in \intcc{1;N_{s}},~T=\sum_{i=1}^{N_{s}}\delta_{i}.
$
Then, $p_{d}\colon \intcc{0,T}\rightarrow \mathbb{R}^{3}$ has the form (see Section \ref{sec:BezierCurveDef})
\begin{equation}\label{eq:PWBezier}
p_{d}(t)=\begin{cases}
    \sum_{i=0}^{N_p}c_{1}^{i}\mathfrak{b}_{i,n}(\frac{t-t_{0}}{\delta_{1}}),~t\in \intcc{t_{0},t_{1}},\\
       \sum_{i=0}^{N_p}c_{2}^{i}\mathfrak{b}_{i,n}(\frac{t-t_{1}}{\delta_{2}}),~t\in \intcc{t_{1},t_{2}},\\
       \vdots\\
    \sum_{i=0}^{N_p}c_{N_{s}}^{i}\mathfrak{b}_{i,n}(\frac{t-t_{N_{s}-1}}{\delta_{N_{s}}}),~t\in \intcc{t_{N_{s}-1},t_{N_{s}}}.
\end{cases}
\end{equation}
The control points $c_{i}^{j},~i\in \intcc{1;N_{s}},~j\in \intcc{0;N_{p}}$, are required to satisfy the following set of constraints, which are based on Theorem \ref{thm:ConditionsForSolvingReachAvoidProblem} and Lemma \ref{lem:ConditionsOnPd}\footnote{Additional constraints can be imposed if necessary such as requiring $\dot{p}_{d}(T)=\ddot{p}_{d}(T)=0_{3}$.}:
\begin{itemize}
\item the initial value of $p_{d}$ is $p_{0}$:
    \begin{equation}\label{eq:Constraint1}
        c_{1}^{0}=\mathtt{p}_{0},
    \end{equation}  
 \item  the initial value of $\dot{p}_{d}$ is $v_{0}$:
   \begin{equation}\label{eq:Constraint2}
    \frac{N_{p}}{\delta_{1}}(c_{1}^{1}-c_{1}^{0})=v_{0},
    \end{equation} 
 \item  the initial value of $\ddot{p}_{d}$ is $0_{3}$:
    \begin{equation}\label{eq:Constraint3}
c_{1}^{2}-2c_{1}^{1}+c_{1}^{0}=0_{3},
    \end{equation}
    \item  the initial value of $\dddot{p}_{d}$ is $0_{3}$:
   \begin{equation}\label{eq:Constraint4}
 c_{1}^{3}-3c_{1}^{2}+3c_{1}^{1}-c_{1}^{0}=0_{3},
    \end{equation}
 \item   safety of the generated trajectory in the sense that $p_{d}(t)\in \mathtt{p}_{i-1}+\Hintcc{-\mathtt{r}_{i-1},\mathtt{r}_{i-1}},~t\in \intcc{t_{i-1},t_{i}},~i\in \intcc{1;N_{s}}$: 
\begin{equation}\label{eq:Constraint5}
c_{i}^{j}\in \mathtt{p}_{i-1}+\Hintcc{-\mathtt{r}_{i-1},\mathtt{r}_{i-1}}~\forall j\in \intcc{0;N_{p}},~i\in \intcc{1;N_{s}},
\end{equation}
\item continuity of $p_{d}$ at the junction points: 
\begin{equation}\label{eq:Constraint6}
c_{i}^{N_{p}}=c_{i+1}^{0} ~\forall i\in \intcc{1;N_{s}-1},
\end{equation}
\item satisfying the velocity bound $\abs{\dot{p}_{d}(t)}\leq v_{\max}-\mathcal{L}_{v}(\overline{\mathcal{V}}_{1},\overline{\mathcal{V}}_{2})1_{3}$, $t\in \intcc{0,T}$:
\begin{equation}\label{eq:Constraint7}
\begin{split}
  \abs{\frac{N_{p}}{\delta_{i}}(c_{i}^{j+1}-c_{i}^{j})}\leq v_{\max}-\mathcal{L}_{v}(\overline{\mathcal{V}}_{1},\overline{\mathcal{V}}_{2})1_{3},\\
  ~j\in \intcc{0;N_{p}-1},~i\in \intcc{1;N_{s}},
  \end{split}
\end{equation}
\item continuity of $\dot{p}_{d}$ at the junction points: 
\begin{equation}\label{eq:Constraint8}
\frac{1}{\delta_{i}}(c_{i}^{N_{p}}-c_{i}^{N_{p}-1})=\frac{1}{\delta_{i+1}}(c_{i+1}^{1}-c_{i+1}^{0})  ~\forall i\in \intcc{1;N_{s}-1},
\end{equation}
\item  satisfying the bound $\abs{ge_{3}+\ddot{p}_{d}(t)}\leq a_{\max},~t\in \intcc{0,T}$:
\begin{equation}
\label{eq:Constraint9}
\begin{split}
\abs{\frac{N_{p}(N_{p}-1)}{\delta_{i}^{2}}(c_{i}^{j}-2c_{i}^{j+1}+c_{i}^{j+2})+ge_{3}}&\leq\\ {a_{\max}},~j\in \intcc{0;N_{p}-2},~i\in \intcc{1;N_{s}},&
\end{split}
\end{equation}
\item satisfying the condition $m\ddot{p}_{d,3}(t)\geq     \mathcal{L}_{f}(\overline{\mathcal{V}}_{1},\overline{\mathcal{V}}_{2})-mg+\varepsilon,~t\in \intcc{0,T}$:
\begin{equation}
\label{eq:Constraint10}
\begin{split}
\frac{m N_{p}(N_{p}-1)}{\delta_{i}^{2}}(c_{i,3}^{j}-2c_{i,3}^{j+1}+c_{i,3}^{j+2})\geq \\ {\mathcal{L}_{f}(\overline{\mathcal{V}}_{1},\overline{\mathcal{V}}_{2})}-mg+\varepsilon,~j\in \intcc{0;N_{p}-2},~i\in \intcc{1;N_{s}},&
\end{split}
\end{equation}
\item continuity of $\ddot{p}_{d}$ at the junction points: 
\begin{equation}
\label{eq:Constraint11}
\begin{split}
\frac{1}{\delta_{i}^{2}}(c_{i}^{N_{p}}-2c_{i}^{N_{p}-1}+c_{i}^{N_{p}-2})&=\\
\frac{1}{\delta_{i+1}^{2}}(c_{i+1}^{2}-2c_{i+1}^{1}+c_{i+1}^{0}),&  ~ i\in \intcc{1;N_{s}-1},
\end{split}
\end{equation}
\item continuity of $\dddot{p}_{d}$ at the junction points: 
\begin{equation}
\label{eq:Constraint12}
\begin{split}
\frac{1}{\delta_{i}^{3}}(c_{i}^{N_{p}}-3c_{i}^{N_{p}-1}+3c_{i}^{N_{p}-2}-c_{i}^{N_{p}-3})&=\\
\frac{1}{\delta_{i+1}^{3}}(c_{i+1}^{3}-3c_{i+1}^{2}+3c_{i+1}^{1}-c_{i+1}^{0}),&  ~ i\in \intcc{0;N_{s}-1},
\end{split}
\end{equation}
\item continuity of $\ddddot{p}_{d}$ at the junction points: 
\begin{equation}
\label{eq:Constraint13}
\begin{split}
\frac{1}{\delta_{i}^{4}}(c_{i}^{N_{p}}-4c_{i}^{N_{p}-1}+6c_{i}^{N_{p}-2}-4c_{i}^{N_{p}-3}+c_{i}^{N_{p}-4})&=\\
\frac{1}{\delta_{i+1}^{4}}(c_{i+1}^{4}-
4c_{i+1}^{3}+6c_{i+1}^{2}-4c_{i+1}^{1}+c_{i+1}^{0}),\\
~ i\in \intcc{0;N_{s}-1},
\end{split}
\end{equation}
\item the  value of $p_{d}$ at final time  is in $\mathtt{p}_{N_{s}}+\Hintcc{-\mathtt{r}_{N_{s}},\mathtt{r}_{N_{s}}}$:
\begin{equation}
\label{eq:Constraint14}
\mathtt{p}_{N_{s}}-\mathtt{r}_{N_{s}}\leq c^{N_{p}}_{N_{s}}\leq \mathtt{p}_{N_{s}}+\mathtt{r}_{N_{s}}.
\end{equation}
\end{itemize}
Note that if  the time durations $\delta_{1},\cdots, \delta_{N_{s}}$ are specified, the constraints \eqref{eq:Constraint1}--\eqref{eq:Constraint14} above are linear with respect to the control points. In Algorithm \ref{Alg:DesiredTrajectory}, we present a heuristic approach to determine  the desired trajectory $p_{d}$  by means of iterative linear programming. The heuristic approach relies on initially guessing the  time $T$ and incrementally increasing it until the constraints become feasible, where each duration $\delta_{i}$ is assumed to be a fraction of $T$ that depends on the ratio of the distance between the consecutive two waypoints $\mathtt{p}_{i-1}$ and $\mathtt{p}_{i}$ over the total length of the piecewise linear curve connecting all the waypoints. We assume the value of $N_{p}$ is given and fixed  during the synthesis procedure.\footnote{The trajectory resulting from Algorithm \ref{Alg:DesiredTrajectory} may be  sub-optimal (e.g., in terms of  jerk or snap). If optimality criteria are important to impose on the desired trajectory, then nonlinear optimization methods may be adopted (see, e.g., \cite{marcucci2023motion,gao2018online, richter2016polynomial}).} 
\begin{algorithm}
 \SetAlgoLined  
 {Define 
$l_{i}=\norm{\mathtt{p}_{i}-\mathtt{p}_{i-1}},~i\in \intcc{1;N_{s}}$,
$L=\sum_{i=1}^{N_{s}} l_{i}$, 
$q_{i}=\frac{l_{i}}{L},~i\in \intcc{1;N_{s}}$.
}
 
 {Let $T_{0}$ be a positive parameter specifying an initial guess for the full time horizon and $\alpha_{t}>1$ be a scaling  parameter}.

{Define
$
\delta_{i}=q_{i}T_{0},~i\in \intcc{1;N_{s}}.
$
\label{Step:DefiningTaus}}

{With the values of $\delta_{i},~i\in \intcc{1;N_{s}},$ defined in step \ref{Step:DefiningTaus}, solve a linear program involving the control points $c_{i}^{j},~j\in \intcc{0;N_{p}},~i\in \intcc{1;N_{s}}$, while considering the constraints \eqref{eq:Constraint1}--\eqref{eq:Constraint14}.\label{Step:LinearProgram}} 

{If the linear  program in step \ref{Step:LinearProgram} is feasible,  the resulting control points in addition to the time durations $\delta_{i},~i\in \intcc{1;N_{s}}$,  can then be used to obtain the desired trajectory according to equation \eqref{eq:PWBezier}.}

{If the linear  program in step \ref{Step:LinearProgram} is infeasible, redefine $T_{0}$ as
$T_{0}\defas \alpha_{T}T_{0},
$
and repeat steps \ref{Step:DefiningTaus} and \ref{Step:LinearProgram}.}

\caption{Computing the desired trajectory}
\label{Alg:DesiredTrajectory}
\end{algorithm}

If the procedure in Algorithm \ref{Alg:DesiredTrajectory} is successful, we have the resulting trajectory 
satisfying \eqref{eq:BoundingAcceleration}, \eqref{eq:PropertiesOfDesiredTrajectory}, and \eqref{eq:InitialConditionsPd}. Therefore, the reach-avoid problem in this work is solved successfully, where the safe initial set containing $(p_{0},v_{0},R_{0},\omega_{0})$ is characterized by \eqref{eq:InitialSet}.

\section{Numerical Simulations}
\label{sec:Simulations}
Numerical quadrotor simulations (using MATLAB) incorporating the  proposed control synthesis approach  are conducted to demonstrate its performance and effectiveness. For the purpose of consistency, all calculations and computations are done on a computer with an i7-12700 CPU. The  hyper-rectangular plots presented in this section  are obtained  with the MATLAB command $\mathtt{plotcube}$ \cite{Oliver2024}.

\begin{figure}[ht]
    \centering
\includegraphics[width=0.99\columnwidth]{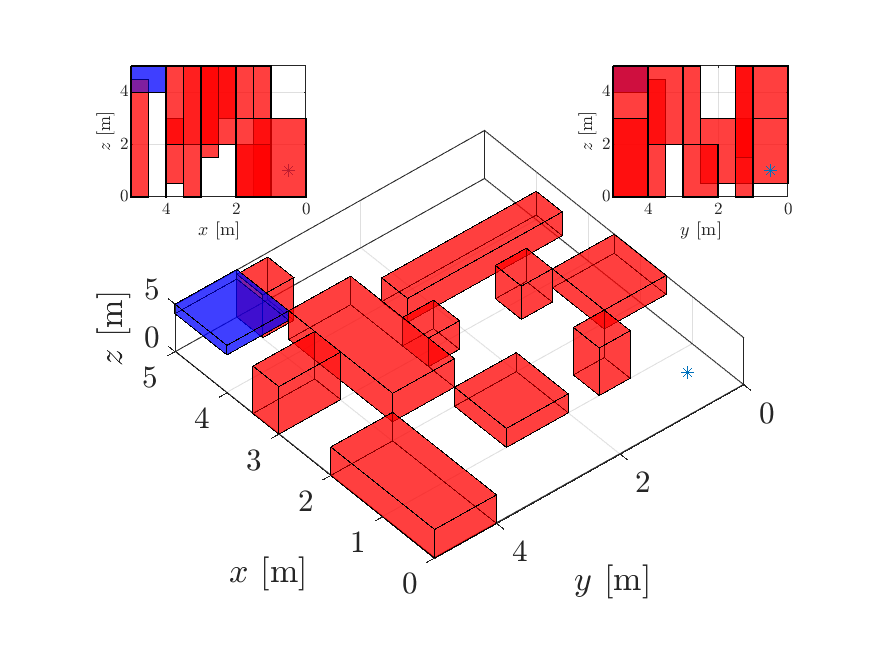}
    \caption{The environment with ten obstacles as red boxes, one target set as the blue box, and the starting point as the blue asterisk. The top left corner is the view along the negative direction of the y-axis, and the top right corner is the view along the positive direction of the x-axis.}
    \label{fig:environment}
\end{figure}
The parameters of the quadrotor are adopted from \cite{lee2010geometric}, where 
$$J = \diag([0.0820, 0.0845, 0.1377]^{\intercal})~\mathrm{kg}\cdot\mathrm{m}^2,~m = 4.34~\mathrm{kg}.$$

\subsection{Reach-avoid problem setup}
We consider a reach-avoid control scenario, where the operating domain is defined as a cube with edges measuring 5 meters each ($\mathcal{X}_{\mathrm{o}} = \Hintcc{[0, 0, 0]^{\intercal}, [5, 5,  5]^{\intercal}}$), the target set is given by $\mathcal{X}_{\mathrm{t}} = \Hintcc{[4, 4, 4]^{\intercal}, [5, 5, 5]^{\intercal}}$, and the unsafe set is given as a union of ten hyper-rectangles. The initial nominal position of the  quadrotor is $p_{0}=[0.5, 0.5, 1]^{\intercal}$ and the initial nominal velocity is $v_{0}=0_{3}$.   The operating domain, the target and the unsafe sets, and the initial nominal position are depicted in Figure \ref{fig:environment}. We let $v_{\max} = [2, 2, 2]^{\intercal}$,  $f_{\max}=2mg=85.1508$ N, and $a_{\max} = [1, 1, 10]^{\intercal}$. The control synthesis process starts by setting   $\overline{\Psi} = 0.005, ~\alpha_\psi = 0.4, ~\overline{\mathcal{V}}_{1} = 0.4$. 

\subsection{Gain tuning through optimization}
While our theoretical results imply local exponential stability of the closed-loop quadrotor dynamics for any choice of positive control gains, that choice should ensure that the theoretical uniform bounds, which depend implicitly on the control gains, are not too conservative. In particular, we want the gains choice to result in small values for the uniform bounds $\mathcal{L}_{p}(\overline{\mathcal{V}}_{1},\overline{\mathcal{V}}_{2})$, $\mathcal{L}_{v}(\overline{\mathcal{V}}_{1},\overline{\mathcal{V}}_{2})$, and $\mathcal{L}_{f}(\overline{\mathcal{V}}_{1},\overline{\mathcal{V}}_{2})$, where $\overline{\mathcal{V}}_{2}$ is computed according to \eqref{eq:BarV2}. Let $\gamma_{1}, \gamma_{2}\in \intoo{0,1}$ be parameters that determine the values of $c_{1}$ and $c_{2}$, used in the definitions of the functions $V_{1}$ and $V_{2}$ in  \eqref{eq:V1} and \eqref{eq:V2}, respectively, through the relations  
\begin{align*} 
 c_{1} & = \gamma_1 \min\left(\sqrt{k_{p}m}, \frac{4mk_{p}k_{v}}{k_{v}^2+4mk_{p}}\right),\\ 
c_2 & = \gamma_2 \min\left(\sqrt{k_{R}\underline{\lambda}(J)}, \frac{4\underline{\lambda}(J)k_{R}k_{\omega}}{k_{\omega}^2+4\underline{\lambda}(J)k_{R}}\right).
\end{align*}
The above relations and the bounds on $\gamma_{1}$ and $\gamma_{2}$ ensure that conditions \eqref{eq:Boundc1} and \eqref{eq:Boundc2} hold.  Let  $\underline{k}, \overline{k}\in \mathbb{R}_{+}\setminus \{0\}$ be user-defined positive lower and upper bounds on the control gains, respectively, and   $w_1$, $w_2$, and $w_3$ be  positive weights to be assigned to the uniform bounds during the optimization process.
The control gains, and the parameters $\gamma_{1}$ and $\gamma_{2}$ are then determined by solving the following nonlinear optimization problem, where the uniform bounds $\mathcal{L}_{p}(\overline{\mathcal{V}}_{1},\overline{\mathcal{V}}_{2})$, $\mathcal{L}_{v}(\overline{\mathcal{V}}_{1},\overline{\mathcal{V}}_{2})$, and $\mathcal{L}_{f}(\overline{\mathcal{V}}_{1},\overline{\mathcal{V}}_{2})$ are functions of the gains through the relations \eqref{eq:L_u}, \eqref{eq:L_p}, \eqref{eq:L_v}, \eqref{eq:L_f}, and \eqref{eq:V2Bound} and the arguments $\overline{\mathcal{V}}_{1}$ and $\overline{\mathcal{V}}_{2}$ are dropped: 
\begin{equation} \label{eqn:opt_tuning}
\begin{aligned}
\min_{k_p, k_v, k_R, k_\omega, \gamma_1, \gamma_2} \quad & w_1 \mathcal{L}_{p} + w_2 \mathcal{L}_{v} + w_3 \mathcal{L}_{f},\\
\textrm{s.t.} \quad & \underline{k} \leq k_p, k_v, k_R, k_\omega \leq \overline{k}, \\
                    & \quad 0 < \gamma_1, \gamma_2 < 1.
\end{aligned}
\end{equation}

We set the gain bounds to be $\underline{k}=0.1, ~\overline{k}=30$ and choose the weight values $w_1 = 15$, $w_2 = 1$, $w_3 = 1$. 
The gains are obtained by solving \eqref{eqn:opt_tuning} with Simulated Annealing \cite{kirkpatrick1983optimization}, a stochastic global search optimization algorithm. The MATLAB built-in function \texttt{simulannealbnd} is adopted. With the initial guess $$k_p = 10, ~ k_v = 10, ~k_R = 10, ~k_\omega = 10, ~\gamma_1 = 0.5, ~\gamma_2 = 0.5,$$ the optimized control gains and coefficients are obtained as
$$ k_p = 18.5058, ~ k_v = 5.6704, ~k_R = 23.5537, ~k_\omega = 1.4309,$$ $$\gamma_1 = 0.5500, ~\gamma_2 =0.6047.$$
The average computing time for obtaining the gains via optimization is 0.8934 seconds. The resulting value of  $\overline{\mathcal{V}}_{2}$ is  $24.4053$ and the resulting uniform bounds are
$\mathcal{L}_p(\overline{\mathcal{V}}_{1},\overline{\mathcal{V}}_{2})= 0.3374~\text{m}$,
$\mathcal{L}_v(\overline{\mathcal{V}}_{1},\overline{\mathcal{V}}_{2})=  0.6968~ \text{m/s}$, and 
$\mathcal{L}_f (\overline{\mathcal{V}}_{1},\overline{\mathcal{V}}_{2})= 6.2445~ \text{N}$, with 
$\mathcal{L}_u (\overline{\mathcal{V}}_{1},\overline{\mathcal{V}}_{2})= 0.9737$.
In addition, the theoretical bound on thrust is
$$
\overline{\mathcal{F}} \defas \mathcal{L}_f(\overline{\mathcal{V}}_{1},\overline{\mathcal{V}}_{2}) + m\|a_{\max}\| = 50.0763~\text{N},
$$ 
which is less than the specified thrust bound $f_{\max}$.

\subsection{Safe tube and trajectory synthesis}

Next, we construct a safe tube that consists of connected hyper-rectangles, through which the desired trajectory is synthesized. The safe tube is obtained by first constructing  an RRT according to Algorithm \ref{Alg:RRTAlgorithm}, then obtaining a set of waypoints by conducting a shortest path algorithm over the RRT, and finally computing the radii of the hyper-rectangles of the safe tube according to \eqref{eq:RadiiForSafeTube}. The parameters in Algorithm \ref{Alg:RRTAlgorithm} are chosen to be 
$
\alpha = 0.9$,  $N_{v} = 400$, $C_{\mathrm{sampling}} = 0.9$. The resulting safe tube consists of  fifteen hyper-rectangles depicted in Figure \ref{fig:hyperrectangle}, where the associated computational time is  0.06 seconds. 
 
 Next, we construct the desired trajectory given as a piecewise B\'ezier curve with fifteen segments, where each segment is parameterized by fifteen control points ($N_{p}=14$). 
 We implement Algorithm \ref{Alg:DesiredTrajectory} for the trajectory synthesis, where we set   
$\alpha_{T} = 1.1,~T_{0} = 10$, and we use  $\varepsilon=10^{-6}$ for the constraint given in equation \eqref{eq:Constraint10}. The resulting trajectory, with time duration $T = 45.94$ seconds, is depicted in Figure \ref{fig:trajectory_generation}. The trajectory computation based on Algorithm \ref{Alg:DesiredTrajectory} required 0.89 seconds of CPU time.

\begin{figure}[ht]
    \centering
    \includegraphics[width=0.99\columnwidth]{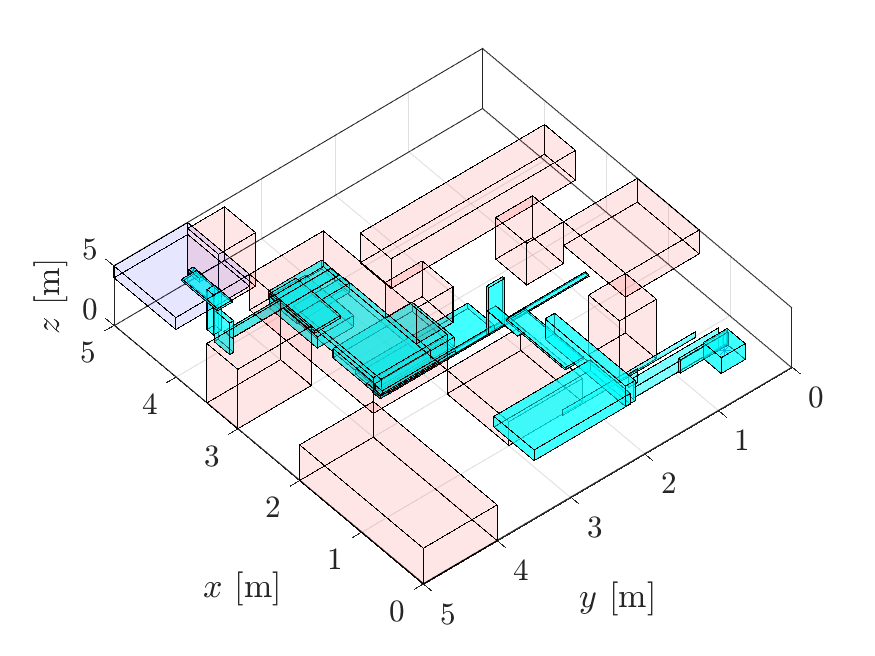}
    \caption{The safe tube is the union of the  cyan boxes. Note that the safe tube does not intersect with the unsafe set (union of the light red boxes), where the last box of the safe tube lies within the target set (blue box).}
    \label{fig:hyperrectangle}
\end{figure}

\begin{figure}[ht]
    \centering
    \includegraphics[width=0.99\columnwidth]{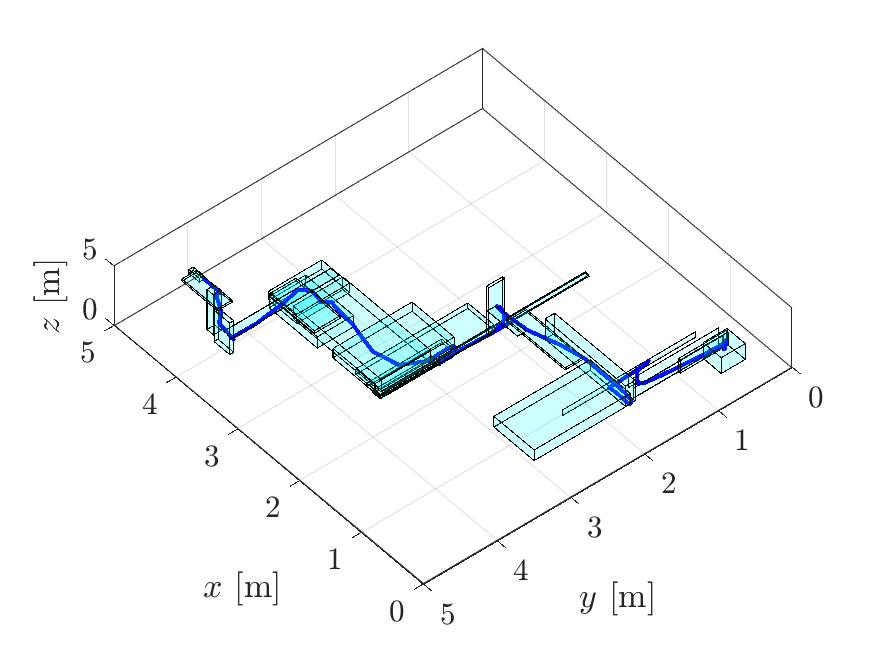}
    \caption{The generated desired trajectory $p_{d}$ is shown as the blue curve, which lies within the safe tube (union of cyan boxes).}
    \label{fig:trajectory_generation}
\end{figure}

\subsection{Initial points generation}

As demonstrated in Theorem \ref{thm:ConditionsForSolvingReachAvoidProblem}, the safety of the closed-loop quadrotor system is guaranteed if the initial conditions satisfy \eqref{eq:InitialSet}. The safe initial set is characterized by \eqref{eq:InitialSet} by means of sampling, where sample points satisfying \eqref{eq:InitialSet} are considered safe, whereas those violating \eqref{eq:InitialSet} are deemed unsafe.  We consider two distinct cases for sampling. In the first case, the initial position error is sampled through the relation $e_p(0) = \mathtt{sample}(0.21\Hintcc{-1_3, 1_3})$, with the initial velocity error $e_v(0)$ attitude error $e_R(0)$, and angular velocity error $e_\omega(0)$ all set to zero. In the second case, the initial attitude error $e_R(0)$ is determined by \eqref{eqn:eW} with  $R(0) = \e{(\mathtt{sample}(0.1\Hintcc{-1_3, 1_3}))^\wedge} \text{(refer to Lemma \ref{lem:RodriguesFormula})}$, while $e_p(0) = e_v(0) = e_\omega(0) = 0$ (This results in $R_{d}(0)=I_{3}$). In each case, one million points are sampled, with safe points marked red and unsafe points marked blue, as illustrated in Figure \ref{fig:initial_shape_ep0_eR0}. The cross-sectional figures shows the boundary between safe region and unsafe region, as represented by the sample points.
\begin{figure}[ht]
    \centering
    \includegraphics[width=0.49\columnwidth]{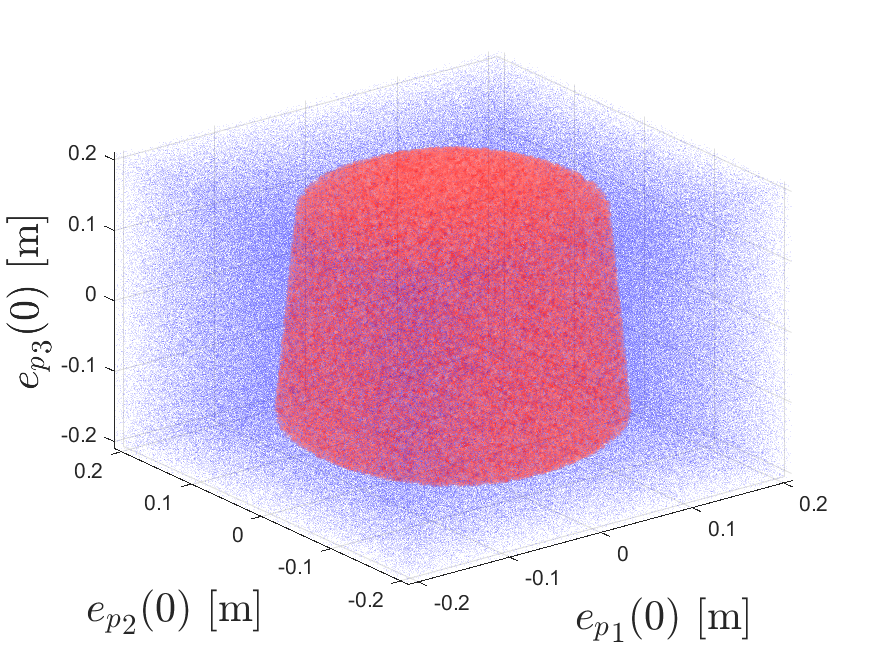}
    \includegraphics[width=0.49\columnwidth]{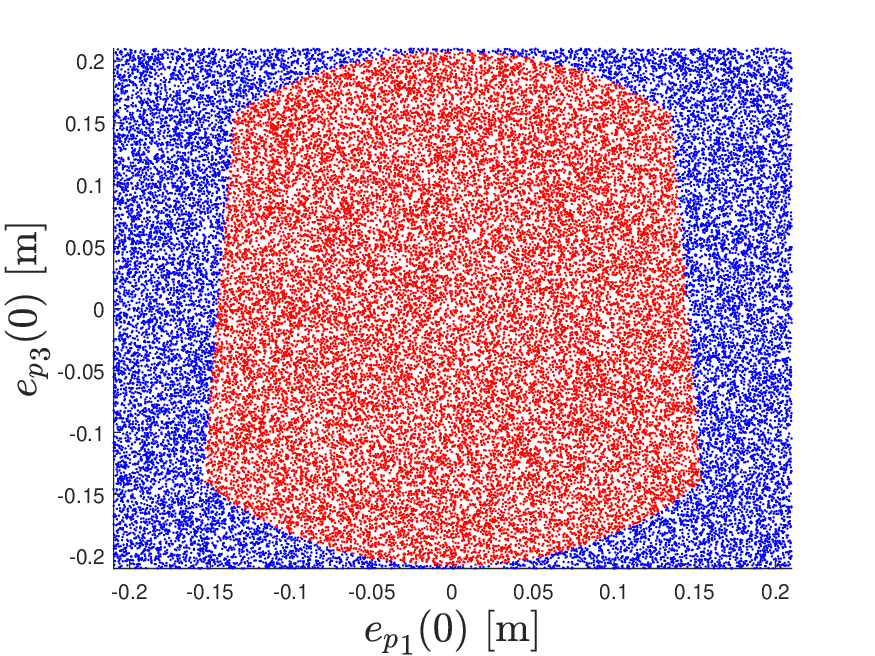}
    \includegraphics[width=0.49\columnwidth]{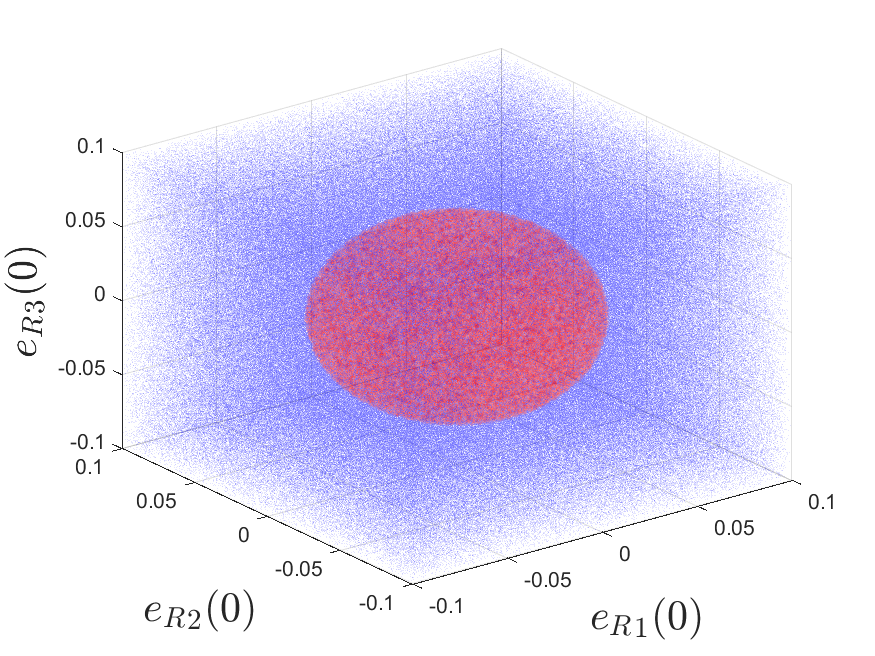}
    \includegraphics[width=0.49\columnwidth]{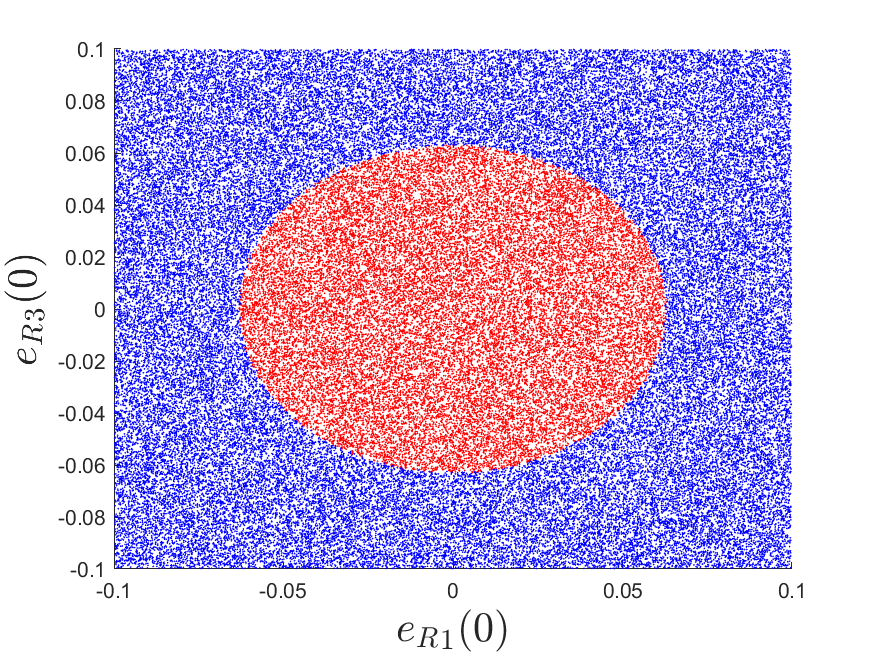}
    \caption{Sample points approximating the safe initial set characterized by \eqref{eq:InitialSet} (the red points  satisfy \eqref{eq:InitialSet}, whereas the blue points do not satisfy \eqref{eq:InitialSet}). The top left panel shows a 3D scatter plot of the sample points with position error assuming $e_v(0) = e_R(0) = e_\omega(0) = 0_{3}$, while the top right panel displays a cross-sectional view when ${e_p}_2(0) = 0$. The bottom left panel shows a 3D scatter plot of the sample points with attitude error assuming $e_p(0) = e_v(0) = e_\omega(0) = 0_{3}$, while the bottom right panel displays a cross-sectional view when ${e_R}_2(0) = 0$.}
    \label{fig:initial_shape_ep0_eR0}
\end{figure}
\subsection{Validation of the tracking performance}
For the sake of numerical quadrotor simulations, we sample twenty initial points through the following relations:
$p(0)= p_d(0) + \mathtt{sample}(0.3\Hintcc{-1_3, 1_3}),$  
$v(0)  =\dot{p}_{d}(0)+ \mathtt{sample}(0.3\Hintcc{-1_3, 1_3}),$ 
$R(0) = \e{(\mathtt{sample}(0.5\Hintcc{-1_3, 1_3}))^\wedge},$ and  
$\omega(0) = \mathtt{sample}(\Hintcc{-1_3, 1_3}),$ where only the  points satisfying \eqref{eq:InitialSet} are considered. The generated twenty initial points are adopted to compute trajectories using 4th and 5th order Runge-Kutta methods (\texttt{RK45} in MATLAB)\footnote{The standard Runge-Kutta method do not generally preserve  the $\SO$ structure of the attitude $R$ during numerical integration. However, the Runge-Kutta methods provide convergence guarantees which motivates using   them in this section. For structure preserving numerical integration methods, see, e.g., \cite{hairer2006geometric}. }. The integrated trajectories are depicted in Figure \ref{fig:traj_simulation}.  All position profiles associated with the generated trajectories from the chosen twenty points, represented by blue lines, originate from the bottom right corner and terminate at the target set (blue box) located at the top left corner.
\begin{figure}[ht]
    \centering
    \includegraphics[width=0.99\linewidth]{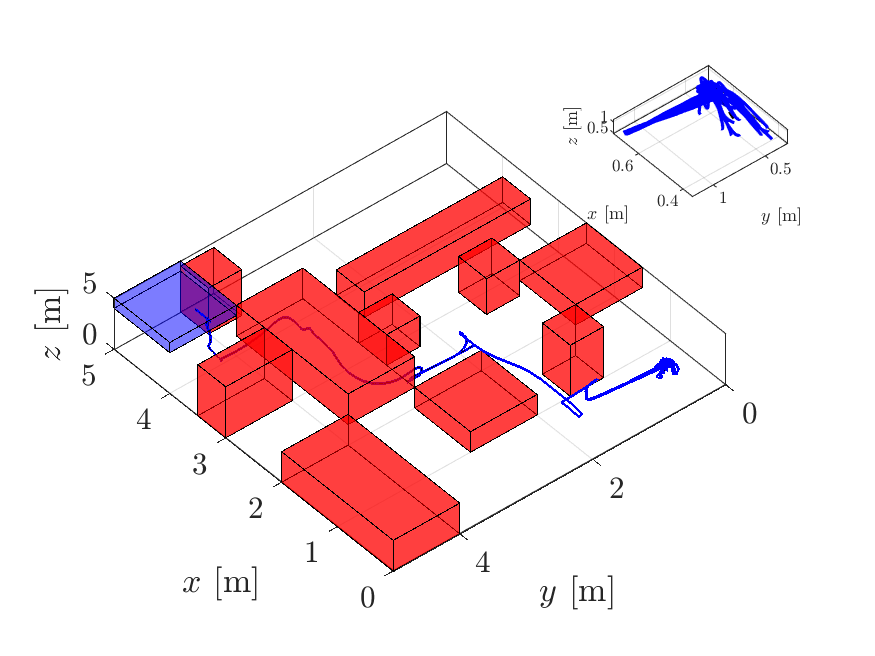}
    \caption{The generated twenty position trajectories going through the operating domain, avoiding the unsafe set (red boxes) and reaching the target set (blue box).  The top right subplot is a zoomed-in view of the trajectories at the initial portion of the simulations.}
    \label{fig:traj_simulation}
\end{figure}

To demonstrate the effectiveness of the proposed framework and the safety guarantees for the generated twenty trajectories, the profiles of the  functions  $V$, $\norm{e_p(\cdot)}$, $\norm{e_v(\cdot)}$, $v$, $\|f(\cdot)\|$, and $F_{d,3}(\cdot)$, are recorded and presented in Figures \ref{fig:pvV}--\ref{fig:fFd3}. The first subplot of Figure \ref{fig:pvV} illustrates that the profiles of the function $V$, associated with the generated trajectories, are bounded below by  $\mathcal{L}_u^2(\overline{\mathcal{V}}_{1},\overline{\mathcal{V}}_{2})$ for all $t \in [0, T]$, thus validating Corollary \ref{cor:V_bound}. The subsequent two subplots of Figure \ref{fig:pvV}  indicate that $\|e_p(\cdot)\|$ and $\|e_v(\cdot)\|$ are bounded by the theoretical bounds $\mathcal{L}_{p}(\overline{\mathcal{V}}_{1},\overline{\mathcal{V}}_{2})$ and $\mathcal{L}_{v}(\overline{\mathcal{V}}_{1},\overline{\mathcal{V}}_{2})$, respectively, for all $t \in [0, T]$, validating Corollary \ref{lem:LpLvLf} and illustrating the collision-avoidance and the velocity bound satisfaction for the generated trajectories.  In fact, Figure \ref{fig:v_max} shows clearly how the absolute values of velocity are within $\Hintcc{0_{3}, v_{\max}}$, ensuring that the tracking remains within the operational capacity of the quadrotor. The absolute values of the thrust $f$, shown in the first subplot of Figure \ref{fig:fFd3}, are uniformly bounded by $\overline{\mathcal{F}}$, indicating  the effectiveness of the thrust bound $\overline{\mathcal{F}}$ and the fulfillment of the requirement of not exceeding $f_{\max}$. The third component of the desired thrust, $F_{d,3}$, is strictly positive, indicating well-definedness of the closed-loop dynamics for all the generated trajectories (see \eqref{eq:Well-DefinednessRequirement}).  The simulations demonstrate that when \eqref{eq:InitialSet} is fulfilled, the quadrotor performs as expected and the position errors remain perfectly within the theoretical threshold, validating the safety guarantees of the proposed framework.
 
\begin{figure}[ht]
    \centering
    \includegraphics[width=0.99\columnwidth]{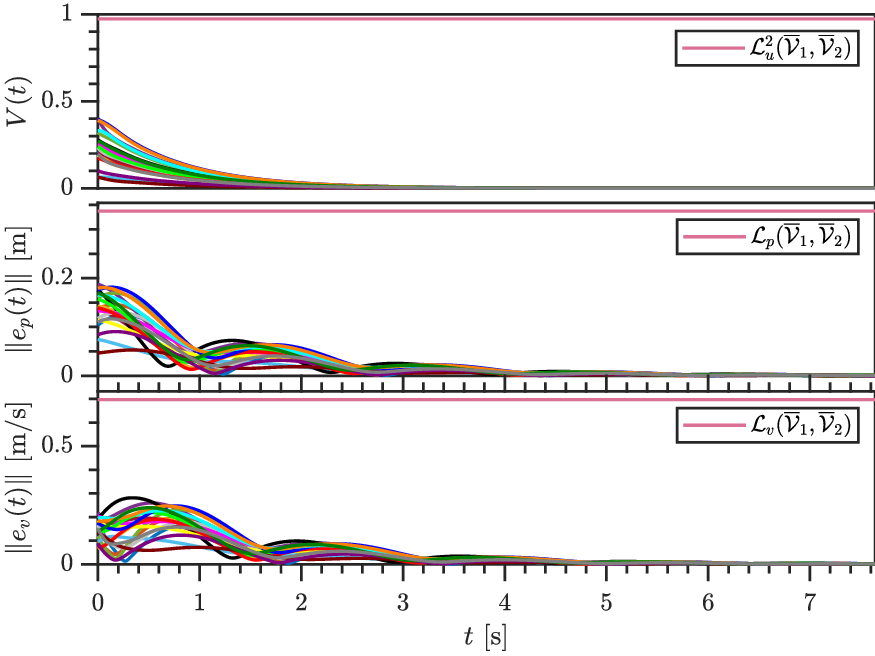}
    \caption{For all $t \in [0, T]$, $\|e_p\|$, $\|e_v\|$, $V$ remain within the theoretical bounds. Data are only shown for the first 7.7 seconds, however the bounds are still respected for all $t\in \intcc{0,T}$.}
    \label{fig:pvV}
\end{figure}

\begin{figure}[ht]
    \centering
    \includegraphics[width=0.99\columnwidth]{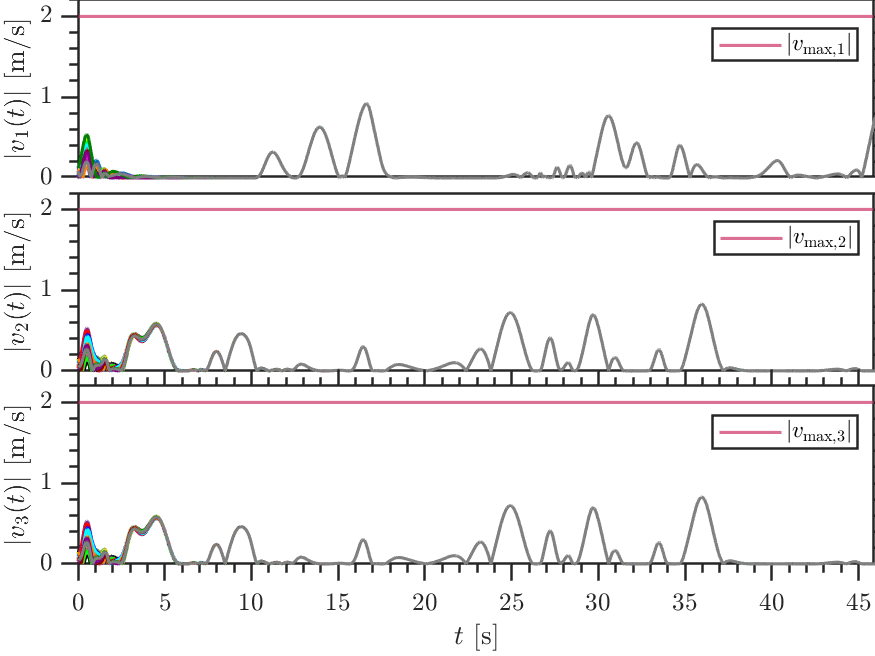}
    \caption{The three components of $\abs{v}$ all remain within the theoretical bound for all $t \in [0, T]$.}
    \label{fig:v_max}
\end{figure}

\begin{figure}[ht]
    \centering
    \includegraphics[width=0.99\columnwidth]{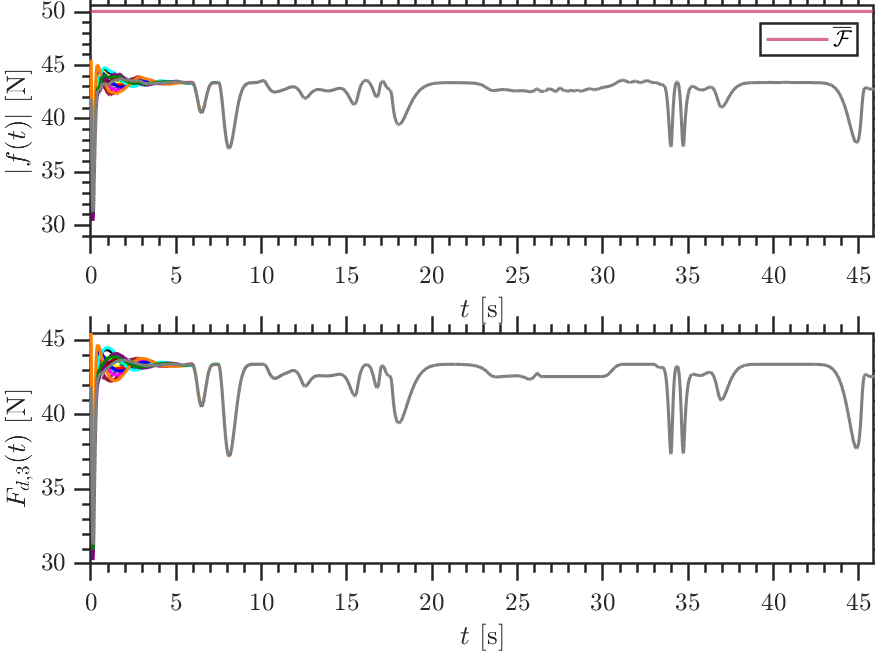}
    \caption{The profiles of $|f|$ stay within the theoretical bound for all the generated trajectories, where the profiles of $F_{d,3}$ are strictly positive, ensuring well-definedness of the closed-loop dynamics.}
    \label{fig:fFd3}
\end{figure}

\section{Conclusion} \label{sec:Conclusion}

In this  paper, we proposed a control framework for a quadrotor UAV to accomplish reach-avoid tasks with formal safety guarantees, where the standard planning-tracking paradigm is adapted to account for tracking errors. The framework integrates geometric control theory for trajectory tracking and polynomial trajectory generation using Bézier curves, where tracking errors are accounted for  during trajectory synthesis.  We revisited the stability analysis of the closed-loop quadrotor system under geometric control, where we proved local exponential stability of the tracking error dynamics for  any positive control gains and we  provided uniform bounds on tracking errors that can be used in planning. We also derived sufficient conditions to be imposed on the desired trajectory to ensure the well-definedness of the closed-loop quadrotor dynamics. The trajectory synthesis involved an efficient algorithm that constructs a safe tube using sampling-based planning and safe hyper-rectangular set computations. The desired trajectory, represented as a piecewise continuous Bézier curve,  is computed through the generated safe tube using a heuristic efficient approach, relying on iterative linear programming. Finally, we performed extensive numerical quadrotor simulations to demonstrate the proposed framework's effectiveness in reach-avoid planning scenarios. 

In future work,  we plan to incorporate the effects of measurement and input noises and disturbances  into  the safe planning  framework to enable effective real-world applications. Additionally, we aim to extend the proposed control synthesis to cover more complex quadrotor models (e.g., those that account for aerodynamic drag) and more general specifications (e.g., those described by temporal logics). There is also a potential to improve the uniform tracking error bounds in future work by employing modified versions of the geometric tracking control that involve gain matrices instead of scalars, resulting in less conservative trajectory synthesis.


\section*{Appendix}
    \subsection*{Proof of Lemma \ref{Lem:EstimatesPSD}}
   For (a), as $M$ is symmetric, it is diagonalizable, i.e.,  $M=Q^{\intercal}DQ$, where $Q\in \mathbb{R}^{n\times n}$ is orthogonal, and $D\in \mathbb{R}^{n\times n}$ is a diagonal matrix whose diagonal entries are the positive eigenvalues of  $M$, $\lambda_{1},\ldots, \lambda_{n}$, i.e., $D=\mathrm{diag}([\lambda_{1}\cdots \lambda_{n}]^{\intercal})$. Let $y=Qx$. Then, 
$$
\underline{\lambda}(M) \norm{y}^{2}\leq x^{\intercal}Mx=y^{\intercal}Dy=\sum_{i=1}^{n}\lambda_{i}y_{i}^{2}\leq \overline{\lambda}(M) \norm{y}^{2}.
$$
As $Q$ is orthogonal, $\norm{y}=\norm{x}$, and (a) follows.
For (b), first  note that as $M\in \mathcal{S}^{n}_{++}$, its square root $M^{\frac{1}{2}}$ is also in $\mathcal{S}^{n}_{++}$. Moreover, $M^{-\frac{1}{2}}\in \mathcal{S}^{n}_{++}$ and, consequently,  $M^{-\frac{1}{2}}WM^{-\frac{1}{2}}\in \mathcal{S}^{n}_{++}$ (see \cite[p.~222]{Abadir2005matrix}).  Therefore, utilizing (a), we have   $
\norm{W^{\frac{1}{2}}x}^{2}=x^{\intercal}W x=x^{\intercal}M^{\frac{1}{2}} M^{-\frac{1}{2}}WM^{-\frac{1}{2}}M^{\frac{1}{2}}x=(M^{\frac{1}{2}}x)^{\intercal} (M^{-\frac{1}{2}}W M^{-\frac{1}{2}}) (M^{\frac{1}{2}}x)\geq  \underline{\lambda}(M^{-\frac{1}{2}}W M^{-\frac{1}{2}}) \norm{M^{\frac{1}{2}}x}^{2}$. The estimate in (c)  follows as
    $
\norm{Ax}=\norm{AM^{-\frac{1}{2}}M^{\frac{1}{2}}x}\leq \norm{A M^{-\frac{1}{2}}}\norm{M^{\frac{1}{2}}x}.
$

\subsection*{Proof of Lemma \ref{lem:SpecificBernoulliInequality}}
 This follows from Lemma \ref{lem:BernoulliInequality}, with $b(\cdot)=-(a_{0}-a_{1}\e{-c(\cdot)})$, and $k(\cdot)=a_{2}\e{-c(\cdot)}$. We have 
 $$
\int_{0}^{t}b(s)\mathrm{d}s= -a_{0}t+a_{1}\frac{1-\e{-ct}}{c}
\leq -a_{0}t+\frac{a_{1}}{c},~t\in I,
 $$
which implies $
\e{\frac{1}{2}\int_{0}^{t}b(s)\mathrm{d}s}\leq \e{-\frac{a_{0}}{2}t}\e{\frac{a_{1}}{2c}}.
$
Also, we have 
$$-\int_{0}^{t}b(s)\mathrm{d}s=a_{0}t+a_{1}\frac{\e{-ct}-1}{c}\leq a_{0}t,~t\in I,
$$
which yields 
\begin{align*}
\int_{0}^{t}k(s)\e{-\frac{1}{2}\int_{0}^{s}b(z)\mathrm{d}z}\mathrm{d}s\leq&  \int_{0}^{t}a_{2}\e{-cs}\e{\frac{a_{0}}{2}s}\mathrm{d}s,~t\in I, 
\end{align*}
and that results in the desired bound.

 \subsection*{Proof of Proposition \ref{Prop:ErrorDynamics}} The proof is adapted from \cite{lee2012robust}. Let $t\in I$. 
 Obviously, $\dot{e}_p(t) = e_v(t)$. The derivative of $e_v$ is given by
 \begin{align*}
     m \dot{e}_v(t) &= m \ddot{p}(t) - m \ddot{p}_d(t) 
     = - mge_3 + f(t)R(t)e_3 - m \ddot{p}_d(t)\\
     &=  - mge_3 - m \ddot{p}_d(t) + F_{d}(t) +(f(t) R(t) e_{3}-F_{d}(t)).
 \end{align*}
 Using the definition of $F_{d}$ in \eqref{eq:Fd}, we have 
 $
 - mge_3 - m \ddot{p}_d(t) + F_{d}(t)=-k_p e_{p}(t)-k_{v}e_{v}(t),
 $
 and $f(t) R(t) e_{3} -F_{d}(t)= (F_{d}(t)\cdot R(t) e_{3})R(t)e_{3} -F_{d}(t)
 =(\norm{F_{d}(t)}b_{3,d}(t)\cdot b_{3}(t))b_{3}(t)-\norm{F_{d}(t)}b_{3,d}(t)
 =\norm{F_{d}(t)}((b_{3,d}(t)\cdot b_{3}(t))b_{3}(t)-b_{3,d}(t))
 =\Delta_{f}(t).$

Now, we derive the evolution equations for the rotation error dynamics. Let $G\colon I\rightarrow \SO$ be defined as $G(\cdot)\defas R_{d}^{\intercal}(\cdot)R(\cdot)$. Note that the rotational error functions can be written in terms of $G$ as follows:
\begin{align*}
e_{R}(\cdot)&=\frac{1}{2}(G(\cdot)+G^\intercal(\cdot))^{\vee},\\
e_{\omega}(\cdot)&=\omega(\cdot)-G^\intercal(\cdot)\omega_{d}(\cdot),\\
\Psi(\cdot)&=\frac{1}{2}\mathrm{tr}(\mathrm{I}_{3}-G(\cdot)).
\end{align*}
In addition, $\tau$ and $\mathcal{C}$
can be written as 
\begin{align*}
   \tau(\cdot)=&- k_R {e_R}(\cdot) -k_{{\omega}} {e_\omega}(\cdot)+{\omega}(\cdot) \times J {\omega}(\cdot)\\ \nonumber
 &  -J(\hat{\omega}(\cdot) G^{\intercal}(t) {\omega_d}(\cdot) - G^{\intercal}(\cdot){\dot{\omega}_d(\cdot)}),
     \\
     \mathcal{C}(\cdot)=&\frac{1}{2}(\tr[G^{\intercal}(\cdot)]\mathrm{I}_{3} - G^{\intercal}(\cdot)).
\end{align*}
Using equations \eqref{eq:Rdot} and \eqref{eq:wd}, and property \eqref{hat0}, the derivative of $G$ is given by   \begin{equation}
\nonumber
     \begin{split}
       \dot{G}(t) & =\dot{R}_{d}^{\intercal}(t)R(t)+R_{d}^{\intercal}(t)\dot{R}(t)\\ &= (R_{d}(t)\hat{\omega}_{d}(t))^{\intercal}R(t)+R_d^{\intercal}(t) R(t) \hat{\omega}(t)
       \\ &=- \hat{\omega}_d(t) R_d^{\intercal}(t) R(t) + R_d^{\intercal}(t) R(t) \hat{\omega}(t)\\
         & =  R_d^{\intercal}(t) R(t)\left( \hat{\omega}(t) -  R^\intercal(t) R_d(t) \hat{\omega}_d(t) R_d^{\intercal}(t) R(t)\right)\\
         &= G(t)\left( \hat{\omega}(t) -  G^{\intercal}(t) \hat{\omega}_d(t) G(t)\right).
     \end{split}
     \end{equation}
     Using  property \eqref{hat4}, we have 
     $G^{\intercal}(t) \hat{\omega}_d(t) G(t)=(G^{\intercal}(t) \omega_d(t))^\wedge$. Hence, using the linearity of the hat operator and the definition of $e_\omega$ in \eqref{eqn:eW}, we have
     \begin{equation}\nonumber
    \begin{split}
         \dot{G}(t) & = G(t) (\hat{\omega}(t) - (G^{\intercal}(t)\omega_d(t))^\wedge)\\
         & = G(t) (\omega(t) - G^{\intercal}(t) \omega_d(t))^\wedge=G(t)\hat{e}_\omega(t).
     \end{split}
     \end{equation}

     For the time derivative of $\Psi$, we have
     \begin{equation}
        \nonumber \begin{split}
            \dot{\Psi}(t) & = \frac{1}{2}\tr[0_{3\times 3} -\dot{G}(t)] = -\frac{1}{2}\tr[G(t)\hat{e}_\omega(t)].
         \end{split}
     \end{equation}
     Using property \eqref{hat2} and the definition of $e_{R}$, we have
     \begin{equation}
        \nonumber \begin{split}
             \dot{\Psi}(t) & = \frac{1}{2}e_\omega^\intercal(t)(G(t) - {G}^{\intercal}(t))^\wedge = e_\omega^\intercal(t) e_R(t)= e_\omega(t) \cdot e_R(t).         \end{split}
     \end{equation}

     For the time derivative of $e_R$, we have, where we use properties \eqref{hat0} and \eqref{hat3},
 \begin{equation}
    \nonumber
     \begin{split}
         \dot{e}_R(t) & = \frac{1}{2} (\dot{G}(t) - \dot{G}^{\intercal}(t))^\vee= \frac{1}{2} (G(t) \hat{e}_\omega (t)- \hat{e}_\omega^{\intercal}(t)G^{\intercal}(t))^\vee\\
         & = \frac{1}{2} (G(t) \hat{e}_\omega(t) + \hat{e}_\omega(t)G^{\intercal}(t))^\vee\\
         & = \frac{1}{2}(\tr[G^{\intercal}(t)]\mathrm{I}_{3} - G^{\intercal}(t)) e_\omega(t)= \mathcal{C}(t)e_{\omega}(t).
     \end{split}
 \end{equation}
 For the time derivative of $e_{\omega}$, noting that $e_{\omega}(t)=\omega(t)-G^\intercal(t) \omega_{d}(t)$, we have
 \begin{equation}
  \nonumber   \begin{split}
        \dot{e}_\omega(t) & = \dot{\omega}(t) - \dot{G}^{\intercal}(t)\omega_{d}(t)-G^{\intercal}(t)\dot{\omega}_{d}(t)\\
         & = \dot{\omega}(t) - \hat{e}_\omega^{\intercal}(t) G^{\intercal}(t)\omega_d(t) -G^{\intercal}(t)\dot{\omega}_d(t)\\
         &= \dot{\omega}(t) + \hat{e}_\omega(t) G^{\intercal}(t) \omega_d(t) -G^{\intercal}(t)\dot{\omega}_d(t)
             .
    \end{split}
 \end{equation}
 Substituting equation \eqref{eq:wdot} in yields
 \begin{align*}
     \dot{e}_{\omega}({t})=& J^{-1} \left(-{\omega(t)} \times J {\omega(t)} +{\tau(t)}\right)\\
     &+\hat{e}_\omega(t) G^{\intercal}(t) \omega_d(t) -G^{\intercal}(t)\dot{\omega}_d(t).
 \end{align*}
 Substituting the expression of $\tau$ in the formula of $\dot{e}_{\omega}$ yields
 \begin{align*}
  \dot{e}_{\omega}(t)=& J^{-1}(-k_{R}e_{R}(t)-k_{\omega}e_{\omega}(t))\\
  &- \hat{\omega}(t) G^{\intercal}(t) {\omega_d}(t)+\hat{e}_\omega(t) G^{\intercal}(t) \omega_d(t). 
 \end{align*}
 Note that, using the linearity of the hat operator and property \eqref{hat4}, we have 
$
\hat{e}_\omega(t)= ({\omega}(t) -G^{\intercal}(t) {\omega}_d(t))^{\wedge}= \hat{\omega}(t)-(G^{\intercal}(t) {\omega}_d(t))^{\wedge}= \hat{\omega}(t)-G^{\intercal}(t)\hat {\omega}_d(t)G(t)$.
Hence,
\begin{align*}
- \hat{\omega}(t) G^{\intercal}(t) {\omega_d}(t)+&\hat{e}_\omega(t) G^{\intercal}(t) \omega_d(t)\\
&=-G^{\intercal}(t)\hat {\omega}_d(t)G(t)G^{\intercal}(t)\omega_{d}(t)\\
&= -G^{\intercal}(t)\hat{\omega}_{d}(t)\omega_{d}(t)=0_{3}
\end{align*}
as $\hat{\omega}_{d}(t)\omega_{d}(t)=\omega_{d}(t)\times \omega_{d}(t)=0_{3
}$, and that completes the proof.

\subsection*{Proof of Lemma \ref{lem:BoundingC}}
This proof is adapted from \cite{lee2010control}. 
    Using Lemma \ref{lem:RodriguesFormula}, let $x\in \mathbb{R}^{3}$ be such that $A=R^{\intercal}(t)R_{d}(t)=\exp(\hat{x})$ (assume without loss of generality that $x\neq 0_{3}$). Then, it can be shown that $\mathrm{tr}(A)=1+2\cos(\norm{x})$ and that $\frac{A+A^{\intercal}}{2}$ has the eigenvalues $1$ and $\cos(\norm{x})$ (repeated). In addition, $\mathcal{C}^{\intercal}(t)\mathcal{C}(t)$ can be written as 
    $$
    \mathcal{C}^{\intercal}(t)\mathcal{C}(t)=\frac{1}{4}\left((\mathrm{tr}^2(A)+1)\mathrm{I}_{3}-2\mathrm{tr}(A)(\frac{A+A^{\intercal}}{2})\right).
    $$ 
   This indicates that the eigenvalues of $\mathcal{C}^{\intercal}(t)\mathcal{C}(t)$ are 
    \begin{align*}
    \lambda_{1}(\mathcal{C}^{\intercal}(t)\mathcal{C}(t))&=\frac{1}{4}\left((1+2\cos(\norm{x}))^{2}+1
 -2(1+2\cos(\norm{x}))\right)\\
 &=\cos^{2}(\norm{x})
   \end{align*}
   and
   \begin{align*}
    \lambda_{2}(\mathcal{C}^{\intercal}(t)\mathcal{C}(t))&=\frac{1}{4}(1+2\cos(\norm{x}))^{2}\\
    &+\frac{1}{4}\left(1-2(1+2\cos(\norm{x}))\cos(\norm{x})\right)\\ &=\frac{1+\cos(\norm{x})}{2} ~\text{(repeated)}, 
 \end{align*}
 which are less than or equal to one, and that completes the proof.
\subsection*{Proof of Lemma \ref{lem:BoundingPsiUsingEr}}
  The proof is adapted from \cite{lee2010control}. Using Lemma \ref{lem:RodriguesFormula}, let $x\in \mathbb{R}^{3}$ such that $R_{d}^{\intercal}(t)R(t)=\exp(\hat{x})$ (without loss of generality, assume $x\neq 0_{3}$). Then, we have 
  $\Psi(t)=1-\cos(\norm{x}),
  $
and 
  $
  e_{R}(t)={\sin(\norm{x})}x/{\norm{x}},
  $
implying $\norm{e_{R}(t)}^{2}=\sin^{2}(\norm{x})$. It then follows that 
 $
 0\leq \Psi(t)\leq 2$, $\norm{e_{R}(t)}\leq 1$, 
 and $
 \norm{e_{R}(t)}^{2}=\sin^{2}(\norm{x})=1-\cos^{2}(\norm{x})
 = (1-\cos(\norm{x}))(1+\cos(\norm{x}))
 = (1-\cos(\norm{x}))(2-(1-\cos(\norm{x})))
 =\Psi(t)(2-\Psi(t))$. 
  Assume without loss of generality that $\norm{x}\neq (2n+1)\pi,~n\in \mathbb{Z}_{+}$, then 
 $
{\norm{e_{R}(t)}^{2}}/{2}\leq {\norm{e_{R}(t)}^{2}}/(2-\Psi(t))=\Psi(t).
 $
 With the assumption that $\Psi(t)\leq \psi<2$, we have 
 $
\Psi(t)={\norm{e_{R}(t)}^{2}}/(2-\Psi(t))\leq {\norm{e_{R}(t)}^{2}}/(2-\psi).
 $

\subsection*{Proof of Lemma \ref{lem:BoundingTheDifferenceBetweenB3andB3d}}
Let $A=R_{d}^{\intercal}(t)R(t)$. We have   $
\norm{(b_{3,d}(t)\cdot b_{3}(t))b_{3}(t)-b_{3,d}(t)}^{2}=(b_{3,d}(t)\cdot b_{3}(t))^{2}b_{3}(t)\cdot b_{3}(t)-2(b_{3,d}(t)\cdot b_{3}(t))^2+b_{3,d}(t)\cdot b_{3,d}(t)\\
=
1-((b_{3,d}(t)\cdot b_{3}(t)))^{2}=1-(A_{3,3})^{2}$.
Using Lemma \ref{lem:RodriguesFormula}, let $x\in \mathbb{R}^{3}$ such that $A=\exp(\hat{x})$ (without loss of generality, assume $x\neq 0_{3}$). As we showed in the proof of Lemma \ref{lem:BoundingPsiUsingEr}, we have $\norm{e_{R}(t)}^{2}=\sin^2(\norm{x})$ and $\Psi(t)=1-\cos(\norm{x})$. The expression of $1-(A_{3,3})^{2}$ is given as  
$$
  1-A_{3,3}^{2}=1-\left(1+\frac{(x_{1}^{2}+x_{2}^{2})(\cos(\norm{x})-1)}{\norm{x}^{2}}\right)^{2}. $$
Let 
$
\alpha_{x}\defas (x_{1}^{2}+x_{2}^{2})/{\norm{x}^{2}}.
$
Note that $0\leq \alpha_{x}\leq 1$. The term $1-(A_{3,3})^{2}$ can then be rewritten as 
$1-(A_{3,3})^{2}=1-(1-\alpha_{x}\Psi(t))^{2}= \alpha_{x}\Psi(t)(2-\alpha_{x}\Psi(t))$. As $0\leq \Psi(t)\leq \psi<2$ and $0\leq \alpha_{x}\leq 1$, we have, using Lemma \ref{lem:BoundingPsiUsingEr}, 
 $
\alpha_{x}\Psi(t)(2-\alpha_{x}\Psi(t))\leq 2\alpha_{x}\Psi(t)\leq 2\Psi(t)\leq \frac{2}{2-\psi}\norm{e_{R}(t)}^2.
 $


\subsection*{Proof of Lemma \ref{lem:ConditionsOnPd}}
    We have $e_{p}(0)=e_{v}(0)=0_{3}$. Consequently, and using the definition of $R_{d}$ in \eqref{eq:Rd} and fact that $\ddot{p}_d(0)=0_{3}$, we have $R_{d}(0)=R(0)=\mathrm{I}_{3}$. Finally, we note that
    \begin{equation}\label{eq:Fd_dot}
\dot{F}_{d}(t)=-k_p e_{v}(t)-\frac{k_v}{m}(-k_{p}e_{p}(t)-k_{v}e_{v}(t)+\Delta_{f}(t))+\dddot{p}_{d}(t), 
\end{equation}
   $t\in I$. As $R(0)=R_{d}(0)$, we have $\Delta_{f}(0)=0$, and with the assumption on $\dddot{p}_{d}(0)$, we have  
     $\dot{F}_{d}(0)=0_{3}$, implying $\dot{R}_{d}(0)=0_{3\times 3}$. Finally, using the definition of $\omega_{d}$ in \eqref{eqn:eW}, the proof is complete.

\subsection*{Proof of Lemma \ref{lem:SafeBox}}
 The proof is taken from \cite{Serry2024Safe}. For any $z\in \mathbb{R}^{n}$, with $\abs{z-v}\leq r$,
    $
\abs{z-\mathrm{center}(\Hintcc{a,b})}\leq \abs{z-v}+\abs{v-\mathrm{center}(\Hintcc{a,b})}\leq  \mathrm{radius}(\Hintcc{a,b})-\abs{\mathrm{center}(\Hintcc{a,b})-v}+\abs{\mathrm{center}(\Hintcc{a,b})-v}=\mathrm{radius}(\Hintcc{a,b})
    $.

\subsection*{Proof of Lemma \ref{lem:SafeStrip}}
    The proof is taken from \cite{Serry2024Safe}. The first claim follows by noting that, for all $y\in \Hintcc{a,b}$, $\abs{x_{i}-y^{\ast}_{i}}\leq \abs{x_{i}-y_{i}}$, $i\in \intcc{1;n}$. For the second claim, let $z\in \mathcal{S}(x,\Hintcc{a,b},\alpha)$, then $\abs{z_{\tilde{i}}-x_{\tilde{i}}}\leq \alpha \norm{x-y^{\ast}}_{\infty}<\norm{x-y^{\ast}}_{\infty}$.  Assume, without loss of generality, that $x_{\tilde{i}}\geq c_{\tilde{i}}$, then it  holds, using the definition and minimal property of $y^{\ast}$, that $x_{\tilde{i}}-c_{\tilde{i}}\geq \norm{x-y^{\ast}}_{\infty}+r_{\tilde{i}}$. Consequently, $z_{\tilde{i}}-c_{\tilde{i}}=z_{\tilde{i}}-x_{\tilde{i}}+x_{\tilde{i}}-c_{\tilde{i}}\geq -\abs{z_{\tilde{i}}-x_{\tilde{i}}}+\norm{x-y^{\ast}}_{\infty}+r_{\tilde{i}}>-\norm{x-y^{\ast}}_{\infty}+\norm{x-y^{\ast}}_{\infty}+r_{\tilde{i}}=r_{\tilde{i}}$. Hence, $\abs{z_{\tilde{i}}-c_{\tilde{i}}}>r_{\tilde{i}}$, implying $z\not\in \Hintcc{a,b}$.


\begin{thebibliography}{10}

\bibitem{Raafat2022Application}
S.~M. Raafat, F.~A. Raheem, A.~A. Alawsi, and Z.~S. Mahmood, {\em Application
  and Control of Quadrotors}, pp.~241--264.
\newblock Cham: Springer International Publishing, 2022.

\bibitem{mellinger2011minimum}
D.~Mellinger and V.~Kumar, ``Minimum snap trajectory generation and control for
  quadrotors,'' in {\em 2011 IEEE international conference on robotics and
  automation}, pp.~2520--2525, IEEE, 2011.

\bibitem{richter2016polynomial}
C.~Richter, A.~Bry, and N.~Roy, ``Polynomial trajectory planning for aggressive
  quadrotor flight in dense indoor environments,'' in {\em Robotics Research:
  The 16th International Symposium ISRR}, pp.~649--666, Springer, 2016.

\bibitem{farid2022modified}
G.~Farid, S.~Cocuzza, T.~Younas, A.~A. Razzaqi, W.~A. Wattoo, F.~Cannella, and
  H.~Mo, ``Modified {A}-star ({A}*) approach to plan the motion of a quadrotor
  {UAV} in three-dimensional obstacle-cluttered environment,'' {\em Applied
  Sciences}, vol.~12, no.~12, p.~5791, 2022.

\bibitem{pant2018fly}
Y.~V. Pant, H.~Abbas, R.~A. Quaye, and R.~Mangharam, ``Fly-by-logic: Control of
  multi-drone fleets with temporal logic objectives,'' in {\em 2018 ACM/IEEE
  9th International Conference on Cyber-Physical Systems (ICCPS)},
  pp.~186--197, IEEE, 2018.

\bibitem{gao2018online}
F.~Gao, W.~Wu, Y.~Lin, and S.~Shen, ``Online safe trajectory generation for
  quadrotors using fast marching method and bernstein basis polynomial,'' in
  {\em 2018 IEEE International Conference on Robotics and Automation (ICRA)},
  pp.~344--351, IEEE, 2018.

\bibitem{hoffmann2008quadrotor}
G.~Hoffmann, S.~Waslander, and C.~Tomlin, ``Quadrotor helicopter trajectory
  tracking control,'' in {\em AIAA guidance, navigation and control conference
  and exhibit}, p.~7410, 2008.

\bibitem{suicmez2014optimal}
E.~C. Suicmez and A.~T. Kutay, ``Optimal path tracking control of a quadrotor
  {UAV},'' in {\em 2014 International Conference on Unmanned Aircraft Systems
  (ICUAS)}, pp.~115--125, IEEE, 2014.

\bibitem{lee2011trajectory}
S.-h. Lee, S.~H. Kang, and Y.~Kim, ``Trajectory tracking control of quadrotor
  {UAV},'' in {\em 2011 11th International Conference on Control, Automation
  and Systems}, pp.~281--285, IEEE, 2011.

\bibitem{chang2017global}
D.~E. Chang and Y.~Eun, ``Global chartwise feedback linearization of the
  quadcopter with a thrust positivity preserving dynamic extension,'' {\em IEEE
  Transactions on Automatic Control}, vol.~62, no.~9, pp.~4747--4752, 2017.

\bibitem{bouabdallah2005backstepping}
S.~Bouabdallah and R.~Siegwart, ``Backstepping and sliding-mode techniques
  applied to an indoor micro quadrotor,'' in {\em Proceedings of the 2005 IEEE
  international conference on robotics and automation}, pp.~2247--2252, IEEE,
  2005.

\bibitem{liu2015explicit}
C.~Liu, H.~Lu, and W.-H. Chen, ``An explicit {MPC} for quadrotor trajectory
  tracking,'' in {\em 2015 34th Chinese Control Conference (CCC)},
  pp.~4055--4060, IEEE, 2015.

\bibitem{andrien2024model}
A.~Andri{\"e}n, E.~Lefeber, D.~Antunes, and W.~Heemels, ``Model predictive
  control for quadcopters with almost global trajectory tracking guarantees,''
  {\em IEEE Transactions on Automatic Control}, 2024.

\bibitem{lefeber2017almost}
E.~Lefeber, S.~Van~den Eijnden, and H.~Nijmeijer, ``Almost global tracking
  control of a quadrotor {UAV} on {SE(3)},'' in {\em 2017 IEEE 56th Annual
  Conference on Decision and Control (CDC)}, pp.~1175--1180, IEEE, 2017.

\bibitem{lee2010geometric}
T.~Lee, M.~Leok, and N.~H. McClamroch, ``Geometric tracking control of a
  quadrotor {UAV} on {SE(3)},'' in {\em 49th IEEE conference on decision and
  control (CDC)}, pp.~5420--5425, IEEE, 2010.

\bibitem{amin2016review}
R.~Amin, L.~Aijun, and S.~Shamshirband, ``A review of quadrotor {UAV}: control
  methodologies and performance evaluation,'' {\em International Journal of
  Automation and Control}, vol.~10, no.~2, pp.~87--103, 2016.

\bibitem{reissig2016feedback}
G.~Reissig, A.~Weber, and M.~Rungger, ``Feedback refinement relations for the
  synthesis of symbolic controllers,'' {\em IEEE Transactions on Automatic
  Control}, vol.~62, no.~4, pp.~1781--1796, 2016.

\bibitem{zamani2011symbolic}
M.~Zamani, G.~Pola, M.~Mazo, and P.~Tabuada, ``Symbolic models for nonlinear
  control systems without stability assumptions,'' {\em IEEE Transactions on
  Automatic Control}, vol.~57, no.~7, pp.~1804--1809, 2011.

\bibitem{liu2014abstraction}
J.~Liu and N.~Ozay, ``Abstraction, discretization, and robustness in temporal
  logic control of dynamical systems,'' in {\em Proceedings of the 17th
  international conference on Hybrid systems: computation and control},
  pp.~293--302, 2014.

\bibitem{li2022formal}
Y.~Li and J.~Liu, {\em Formal Methods for Control of Nonlinear Systems}.
\newblock Chapman and Hall/CRC, 2022.

\bibitem{manchester2017control}
I.~R. Manchester and J.-J.~E. Slotine, ``Control contraction metrics: Convex
  and intrinsic criteria for nonlinear feedback design,'' {\em IEEE
  Transactions on Automatic Control}, vol.~62, no.~6, pp.~3046--3053, 2017.

\bibitem{zhao2022tube}
P.~Zhao, A.~Lakshmanan, K.~Ackerman, A.~Gahlawat, M.~Pavone, and N.~Hovakimyan,
  ``Tube-certified trajectory tracking for nonlinear systems with robust
  control contraction metrics,'' {\em IEEE Robotics and Automation Letters},
  vol.~7, no.~2, pp.~5528--5535, 2022.

\bibitem{singh2023robust}
S.~Singh, B.~Landry, A.~Majumdar, J.-J. Slotine, and M.~Pavone, ``Robust
  feedback motion planning via contraction theory,'' {\em The International
  Journal of Robotics Research}, vol.~42, no.~9, pp.~655--688, 2023.

\bibitem{sasfi2023robust}
A.~Sasfi, M.~N. Zeilinger, and J.~K{\"o}hler, ``Robust adaptive {MPC} using
  control contraction metrics,'' {\em Automatica}, vol.~155, p.~111169, 2023.

\bibitem{ames2019control}
A.~D. Ames, S.~Coogan, M.~Egerstedt, G.~Notomista, K.~Sreenath, and P.~Tabuada,
  ``Control barrier functions: Theory and applications,'' in {\em 2019 18th
  European control conference (ECC)}, pp.~3420--3431, IEEE, 2019.

\bibitem{khan2020barrier}
M.~Khan, M.~Zafar, and A.~Chatterjee, ``Barrier functions in cascaded
  controller: Safe quadrotor control,'' in {\em 2020 American Control
  Conference (ACC)}, pp.~1737--1742, IEEE, 2020.

\bibitem{lee2015global}
T.~Lee, ``Global exponential attitude tracking controls on {SO(3)},'' {\em IEEE
  Transactions on Automatic Control}, vol.~60, no.~10, pp.~2837--2842, 2015.

\bibitem{fernando2011robust}
T.~Fernando, J.~Chandiramani, T.~Lee, and H.~Gutierrez, ``Robust adaptive
  geometric tracking controls on {SO(3)} with an application to the attitude
  dynamics of a quadrotor {UAV},'' in {\em 2011 50th IEEE conference on
  decision and control and European control conference}, pp.~7380--7385, IEEE,
  2011.

\bibitem{kolmanovsky1998theory}
I.~Kolmanovsky, E.~G. Gilbert, {\em et~al.}, ``Theory and computation of
  disturbance invariant sets for discrete-time linear systems,'' {\em
  Mathematical problems in engineering}, vol.~4, pp.~317--367, 1998.

\bibitem{Abadir2005matrix}
K.~M. Abadir and J.~R. Magnus, {\em Matrix Algebra}.
\newblock Econometric Exercises, Cambridge University Press, 2005.

\bibitem{horn2012matrix}
R.~A. Horn and C.~R. Johnson, {\em Matrix analysis}.
\newblock Cambridge university press, 2~ed., 2012.

\bibitem{hall2013lie}
B.~C. Hall and B.~C. Hall, {\em Lie groups, Lie algebras, and representations}.
\newblock Springer, 2013.

\bibitem{gallier2003computing}
J.~Gallier and D.~Xu, ``Computing exponentials of skew-symmetric matrices and
  logarithms of orthogonal matrices,'' {\em International Journal of Robotics
  and Automation}, vol.~18, no.~1, pp.~10--20, 2003.

\bibitem{dai2015euler}
J.~S. Dai, ``Euler--rodrigues formula variations, quaternion conjugation and
  intrinsic connections,'' {\em Mechanism and Machine Theory}, vol.~92,
  pp.~144--152, 2015.

\bibitem{bainov2013integral}
D.~D. Bainov and P.~S. Simeonov, {\em Integral inequalities and applications},
  vol.~57.
\newblock Springer Science \& Business Media, 2013.

\bibitem{farouki2008pythagorean}
R.~T. Farouki, {\em Pythagorean—Hodograph Curves}.
\newblock Springer, 2008.

\bibitem{farin2014curves}
G.~Farin, {\em Curves and surfaces for computer-aided geometric design: a
  practical guide}.
\newblock Elsevier, 2014.

\bibitem{vardi2005automata}
M.~Y. Vardi, ``An automata-theoretic approach to linear temporal logic,'' {\em
  Logics for concurrency: structure versus automata}, pp.~238--266, 2005.

\bibitem{lee2010control}
T.~Lee, M.~Leok, and N.~H. McClamroch, ``Control of complex maneuvers for a
  quadrotor {UAV} using geometric methods on {SE(3)},'' {\em arXiv preprint
  arXiv:1003.2005}, 2010.

\bibitem{zou2017trajectory}
Y.~Zou, ``Trajectory tracking controller for quadrotors without velocity and
  angular velocity measurements,'' {\em IET Control Theory \& Applications},
  vol.~11, no.~1, pp.~101--109, 2017.

\bibitem{zhang2006schur}
F.~Zhang, {\em The Schur complement and its applications}, vol.~4.
\newblock Springer Science \& Business Media, 2006.

\bibitem{sastry2013nonlinear}
S.~Sastry, {\em Nonlinear systems: analysis, stability, and control}, vol.~10.
\newblock Springer Science \& Business Media, 2013.

\bibitem{david2018ordinary}
G.~David, C.~Schaeffer, and W.~John, {\em Ordinary differential equations:
  Basics and beyond}.
\newblock Springer, 2018.

\bibitem{logemann2014ordinary}
H.~Logemann, E.~P. Ryan, {\em et~al.}, {\em Ordinary differential equations:
  analysis, qualitative theory and control}.
\newblock Springer, 2014.

\bibitem{kong2014short}
Q.~Kong, {\em A short course in ordinary differential equations}.
\newblock Springer, 2014.

\bibitem{marcucci2023motion}
T.~Marcucci, M.~Petersen, D.~von Wrangel, and R.~Tedrake, ``Motion planning
  around obstacles with convex optimization,'' {\em Science robotics}, vol.~8,
  no.~84, p.~eadf7843, 2023.

\bibitem{kavraki1996probabilistic}
L.~E. Kavraki, P.~Svestka, J.-C. Latombe, and M.~H. Overmars, ``Probabilistic
  roadmaps for path planning in high-dimensional configuration spaces,'' {\em
  IEEE transactions on Robotics and Automation}, vol.~12, no.~4, pp.~566--580,
  1996.

\bibitem{Serry2024Safe}
M.~Serry, L.~Yang, N.~Ozay, and J.~Liu, ``Safe tracking control of
  discrete-time nonlinear systems using backward reachable sets,'' in {\em 2024
  American Control Conference (ACC) (Accepted)}, IEEE, 2024.

\bibitem{Oliver2024}
Oliver, ``{PLOTCUBE}.'' MATLAB Central File Exchange:
  \url{https://www.mathworks.com/matlabcentral/fileexchange/15161-plotcube},
  2024.
\newblock [Online; accessed May, 2024].

\bibitem{kirkpatrick1983optimization}
S.~Kirkpatrick, C.~D. Gelatt~Jr, and M.~P. Vecchi, ``Optimization by simulated
  annealing,'' {\em science}, vol.~220, no.~4598, pp.~671--680, 1983.

\bibitem{hairer2006geometric}
E.~Hairer, M.~Hochbruck, A.~Iserles, and C.~Lubich, ``Geometric numerical
  integration,'' {\em Oberwolfach Reports}, vol.~3, no.~1, pp.~805--882, 2006.

\bibitem{lee2012robust}
T.~Lee, ``Robust adaptive attitude tracking on {SO(3)} with an application to a
  quadrotor {UAV},'' {\em IEEE Transactions on Control Systems Technology},
  vol.~21, no.~5, pp.~1924--1930, 2012.

\end{thebibliography}
\bibliographystyle{ieeetr}

\end{document}